\newtheorem{lemma}{Lemma}
\newtheorem{proposition}{Proposition}
\newtheorem{definition}{Definition}
\newtheorem{assumption}{Assumption}
\newtheorem{remark}{Remark}
\def\ps@pprintTitle{%
\let\@oddhead\@empty
\let\@evenhead\@empty
\def\@oddfoot{}%
\let\@evenfoot\@oddfoot}
\begin{document}

\begin{frontmatter}

\title{A Hawkes model with CARMA(p,q) intensity}


\author{Lorenzo Mercuri}
\address{Department of Economics Management and Quantitative Methods, University of Milan, Milan, Italy \\ email: \texttt{lorenzo.mercuri@unimi.it}}

\medskip

\author{Andrea Perchiazzo}
\address{Faculty of Economic and Social Sciences and Solvay Business School, Vrije Universiteit Brussel, Brussels, Belgium\\ email: \texttt{andrea.perchiazzo@vub.be}}

\medskip

\author{Edit Rroji}
\address{Department of Statistics and Quantitative Methods, University of Milano-Bicocca, Milan, Italy \\ email: \texttt{edit.rroji@unimib.it}}

%
%

\begin{abstract}
In this paper we introduce a new model, named CARMA(p,q)-Hawkes, as the Hawkes model with exponential kernel implies a strictly decreasing behaviour of the autocorrelation function while empirical evidences reject its monotonicity. The proposed model is a Hawkes process where the intensity follows a Continuous Time Autoregressive Moving Average (CARMA) process and specifically is able to reproduce more realistic dependence structures.  We also study the conditions of stationarity and positivity for the intensity and the strong mixing property for the increments. Furthermore we compute the likelihood, present a simulation method and discuss an estimation approach based on the autocorrelation function. 
\end{abstract}

\begin{keyword}
Point processes\sep Autocorrelation\sep CARMA \sep Hawkes
\MSC[2022] 37A25\sep  47N30\sep 60G55
\end{keyword}

\end{frontmatter}


\section{Introduction}
Point processes are useful mathematical models that describe the dynamics of observed event times and have been studied and applied in several fields of study from queueing theory to forestry statistics. Among the family of point processes the \cite{hawkes1971point, hawkes1971spectra} process is widely the most established and widespread model in different areas, especially in quantitative finance, actuarial science and seismology (see \citealt{ogata1988statistical} and references therein for further details). Indeed the Hawkes process is particularly interesting since it is a self-exciting process, which means that each arrival excites the intensity such that the  probability of the next arrival is increased for some period after the jump, and consequently it is well-suited to investigate, for instance, natural clustering effects and bank default in time. To show the versatility of the Hawkes process we mention a few other possible non-financial and non-insurance applications: a) social science area e.g., \cite{mohler2011self} for the modeling of urban crime and \cite{boumezoued2016population} for the population dynamics; b) social media sector as done in \cite{rizoiu2017hawkes}; and c) the modeling of disease spreading such as COVID-19 transmission as discussed in \cite{chiang2022hawkes}.

Recently the Hawkes process has gained a relevant role in financial modeling, in particular in the field of market microstructure. As a matter of fact it is used to model market activity, especially order arrivals in the limit order book \citep[e.g.,][]{bacry2013modelling, muni2017modelling, clinet2017statistical}. For a complete overview of applications of the Hawkes process in finance, we suggest the works of \cite{bacry2015hawkes} and \cite{hawkes2018hawkes}. The Hawkes process has aroused its appeal among researchers and practitioners as well as in the insurance area. Indeed, as mentioned in \cite{lesage2022hawkes}, insurance companies are interested in point processes for the quantification of regulatory capital and in managing risks (e.g., computing ruin probabilities and measuring the effect of cyber-attacks as discussed respectively in \citealt{cheng2020diffusion} and \citealt{bessy2021multivariate}).  \cite{swishchuk2021hawkes} show that the use of a Hawkes process with exponential kernel for modeling insurance claim occurrences provides an improvement over the fit of a classical Poisson model. However, they are not able to fit different empirical autocorrelation functions as exhibited in \citet[Figures 3 and 5, p. 112]{swishchuk2021hawkes}. 

As stated in \cite{errais2010affine}, the Hawkes process with  exponential kernel is Markovian and shows a good level of tractability that makes it useful for real applications in the presence of large data sets (e.g., high-frequency market data). The specification of the kernel restricts the shape of the time dependence structure of the number of jumps observed in intervals with same length. Indeed, as observed in \cite{da2014hawkes}, the autocorrelation in a Hawkes model is a decaying function of lags which is not flexible enough to represent the dependence structure observed in many data sets (e.g., wind speed data in which the exponential autocorrelation overshoots the empirical one for small lags and vice versa for large lags as documented in \citealt{benth2019non}; and, as shown in \citealt{hitaj2019levy}, mortality rates where the empirical autocorrelation function of the shock term appears to be non-monotonic).

To overcome the aforementioned drawback, in this paper we introduce a new model named CARMA(p,q)-Hawkes process. The proposed model is a Hawkes process where the intensity follows a Continuous Time Autoregressive Moving Average (CARMA) process and it is able to provide several shapes of the autocorrelation function as it removes the monotonicity constraint detected in the standard Hawkes process. The greater flexibility relies on the CARMA(p,q) component of our model, especially in the choice of the autoregressive and moving average parameters. The CARMA process has been introduced in \cite{doob1944elementary} and it is the continuous time version of the ARMA model. The advantage of the CARMA process, other than to design different shapes of autocorrelation functions, is to handle better irregular time series with respect to the ARMA process, especially for high-frequency market data, as discussed in \cite{marquardt2007multivariate} and \cite{tomasson2015some}. As a matter of fact,  the CARMA model has found many applications in the literature. Here, we list a few of these applications: a) \cite{andresen2014carma} use a CARMA(p,q) model for short and forward interest rates, while b) \cite{hitaj2019levy} employ such a model in order to capture the dynamics of the shock term in mortality modeling; c) \cite{benth2014futures} consider a non-Gaussian CARMA process for the dynamics of spot and derivative prices in electricity markets; and d) \cite{mercuri2021finite} provide formulas for the futures term structure and options written on futures in the framework of a CARMA(p,q) model driven by a time-changed Brownian motion.  As remarked in \cite{iacus2015implementation}, CARMA models have manifold interests: they can be used to describe directly the dynamics of time series and to construct the variance process in continuous time models (see \citealt{brockwell2006continuous} and \citealt{iacus2017cogarch, iacus2018discrete} for further details). Our paper presents a different type of application as we use CARMA(p,q) models for the intensity of a point process.  

In this paper, after reviewing the basic notions of the Hawkes and the CARMA processes, we introduce the CARMA(p,q)-Hawkes process and study the conditions of stationarity and positivity for the intensity, the autocorrelation function of the process and prove the strong mixing property of increments that leads us to the asymptotic distribution of the empirical autocorrelation function.


The remainder of the paper is organized as follows. Section \ref{SH} reviews the Hawkes process with exponential kernel while Section \ref{review CARMA} presents the CARMA(p,q) model in the L\'evy setting. Section \ref{CHP} introduces the CARMA(p,q)-Hawkes process. Section \ref{autocorr_CARMA} focuses on the autocorrelation function of the jumps in the proposed model and its asymptotic distribution, while Section \ref{ASS} presents a simulation and an estimation exercise. Section \ref{concl} concludes the paper. 


\section{The Hawkes Process}
\label{SH}
Point processes are useful to describe the dynamics of observed event times, i.e., a collection of realizations $\{t_i\}_{i=1}^{\infty}$ , $t_i\geq 0$ for $i=1,2, \ldots$ with $t_0:=0$  of the non-decreasing non-negative process $\left\{T_{i}\right\}_{i \geq 1 }$ called the time arrival process.
The counting process $N_t$, representing the number of events up to time $t$,  is obtained from the  time arrival process as follows:
\begin{equation}
N_{t}:=\sum_{i \geq 1} \mathbbm{1}_{\{T_i\leq t\}}
\label{genCountingProcess}
\end{equation}  
for $t \geq 0$ with associated filtration $(\mathcal{F}_t)_{t\geq 0}$ that contains the information of the counting process $N_t$ up to time $t$. 
An important quantity when dealing with a point process $N_t$ is the conditional intensity $\lambda_t$ defined as:
\begin{equation*}
\lambda_t=\lim_{\Delta\rightarrow 0^+}\frac{\mathsf{Pr}[N_{t+\Delta}-N_t=1|\mathcal{F}_{t}]}{\Delta}.
\end{equation*}
It then follows that the counting process satisfies the following properties 
\begin{equation*}
\mathsf{Pr}\left[N_{t+\Delta}-N_t= \eta \left|\mathcal{F}_t\right.\right] =
\begin{cases}
    1-\lambda_t \Delta + o\left(\Delta\right)       &  \quad \text{if } \eta = 0 \\
    \lambda_t \Delta + o\left(\Delta\right)          & \quad \text{if } \eta = 1 \\
		o\left(\Delta\right)                             & \quad \text{if } \eta > 1
\end{cases}.
\end{equation*}
The conditional intensity $\lambda_t$ of a general self-exciting process has the  following form:
\begin{equation}
\lambda_t = \mu + \int_{0}^{t}h\left(t-s\right)\mbox{d}N_s
\label{CondIntensity}
\end{equation}
with baseline intensity parameter $\mu>0$ and  (excitation) kernel function $h\left(t\right):\left[0,+\infty\right)\rightarrow \left[0,+\infty\right)$ that represents the contribution to the intensity at time $t$ that is made by an event occurred at a previous time $T_i<t$.
Following the general results about the Hawkes process in \cite{bremaud1996stability}, the stationary condition reads:
\begin{equation}
\int_0^{+\infty}h\left(t\right)\mbox{d}t<1.
\label{condst}
\end{equation}
The most used kernel is the exponential function and specifically $h\left(t\right)=\alpha e^{-\beta t}$  with $\alpha,\beta\geq 0$.
The stationary condition in \eqref{condst} implies $\beta>\alpha$ while to prove the Markovianity of the couple $\left(\lambda_t,N_t\right)$ it is enough to rewrite the intensity for any $s<t$ as
\begin{eqnarray*}
\lambda_{t}
&=&\mu+ e^{-\beta\left(t-s\right)}\int_0^{s}\alpha e^{-\beta\left(s-l\right)}\mbox{d}N_l+ \int_{s}^{t}\alpha e^{-\beta\left(t-l\right)}\mbox{d}N_l.		
\end{eqnarray*}
Observing that $\int_0^{s}\alpha e^{-\beta\left(s-l\right)}\mbox{d}N_l = \lambda_s-\mu$, thus
\begin{equation}
\lambda_{t} = \mu+ e^{-\beta\left(t-s\right)} \left(\lambda_s-\mu\right)+ \int_{s}^{t}\alpha e^{-\beta\left(t-l\right)}\mbox{d}N_l
\label{lambtdatolambs}.
\end{equation}
From \eqref{lambtdatolambs} we have that the distribution of the intensity $\lambda_t$ given the information at time $s$ depends only upon $\lambda_s$ and on the increments of the counting process over the interval $\left[s,t\right)$, which depend on the conditional intensity itself implying that the couple $\left(\lambda_t, N_t\right)$ is itself Markovian. The
intensity $\lambda_{t}$ is the solution of the following differential equation:
\begin{equation*}
\mbox{d}\lambda_t = \beta\left(\mu-\lambda_t\right)\mbox{d}t+\alpha \mbox{d}N_t, \quad \text{ with } \lambda_0=\mu.
\end{equation*}

Exploiting the Markovianity of the process $X_t:=\left(\lambda_t,N_{t}\right)$, it is possible to get the infinitesimal generator (see \citealt{errais2010affine} and \citealt{da2014hawkes} for further details) associated to a function $f:\mathbb{R}_+\times\mathbb{N}\rightarrow\mathbb{R}$ with continuous partial derivatives with respect to the first argument $\frac{\partial f}{\partial \lambda}\left(x\right)$. Starting from the definition of the infinitesimal operator for a Markov process $X_{t}$, that is:
\begin{equation*}
\mathcal{A}f:=\lim_{\Delta\rightarrow 0^+}\frac{\mathbb{E}\left[f\left(X_{t+\Delta}\right)\left|\mathcal{F}_t\right.\right]-f\left(X_{t}\right)}{\Delta},
\end{equation*} 
\cite{errais2010affine} compute the infinitesimal generator for the Hawkes process with exponential kernel that reads
\begin{equation}
\mathcal{A}f=\beta\left(\mu-\lambda_t\right)\frac{\partial f}{\partial \lambda}\left(\lambda_t,N_t\right)+\lambda_t\left[f\left(\lambda_t+\alpha,N_t+1\right)-f\left(\lambda_t,N_t\right)\right].
\end{equation}
For every function $f$ in the domain of the infinitesimal generator it is possible to build a martingale process $M_t$ with respect to the natural filtration in the following way:
\begin{equation*}
M_t=f\left(\lambda_t,N_t\right)-f\left(\lambda_0,N_0\right)-\int_0^t\mathcal{A}f\left(\lambda_s,N_s\right)\mbox{d}s
\end{equation*}
that leads to the well-known Dynkin's formula 
\begin{equation*}
\mathbb{E}\left[f\left(\lambda_t,N_t\right)\left|\mathcal{F}_s\right.\right]=f\left(\lambda_s,N_s\right)+\mathbb{E}\left[\left.\int_{s}^t\mathcal{A} f\left(\lambda_u,N_u\right) \mbox{d}u \right|\mathcal{F}_s\right], \quad \forall t > s.
\end{equation*}
The above formula for $f\equiv N_t$ has been used in \cite{da2014hawkes} to compute the moments and the autocovariance function of jump increments observed in intervals of length $\tau$ with lag $\delta$. 

\begin{proposition} Consider four time instants $t_1=t$, $t_2= t +\tau$, $t_3= t +\tau+\delta$ and $t_4= t +2\tau+\delta$, the following equalities for the Hawkes model  are obtained (see \citealt{da2014hawkes} for further details).
\begin{enumerate}
\item The long-run expected value of the number of jumps during an interval of length $\tau$ is
\begin{equation}
\mathbb{E}(\Delta_{\tau}N_{\infty}):=\lim_{t\rightarrow +\infty}\mathbb{E}[N_{t+\tau}-N_t]=\frac{\mu}{1-\frac{\alpha}{\beta}}\tau.
\label{ExpFirstMoMStdH}
\end{equation}
\item The long-run variance of the increments reads
\begin{eqnarray}
Var(\tau)&:=&\lim_{t\rightarrow +\infty}\mathbb{E}[(N_{t+\tau}-N_t)^2]-\mathbb{E}[N_{t+\tau}-N_t]^2\\ \nonumber
&=&\frac{\mu}{1-\frac{\alpha}{\beta}}(\tau(\frac{1}{1-\frac{\alpha}{\beta}})^2+(1-(\frac{1}{1-\frac{\alpha}{\beta}})^2)\frac{1-e^{\tau(\beta-\alpha)}}{\beta-\alpha}).
\label{varhawkes}
\end{eqnarray}
\item The long-run covariance of the number of arrivals for two non-overlapping intervals of length $\tau$ with lag $\delta>0$ is
\begin{eqnarray}
Cov({\tau},\delta)&:=&\lim_{t\rightarrow +\infty}\mathbb{E}[(N_{t+\tau}-N_t)(N_{t+2\tau+\delta}-N_{t+\tau+\delta})]-\mathbb{E}[N_{t+\tau}-N_t]\mathbb{E}[N_{t+2\tau+\delta}-N_{t+\tau+\delta}]\nonumber \\ 
&=&\frac{\mu\beta\alpha(2\beta-\alpha)(e^{(\alpha-\beta)\tau}-1)^2}{2(\alpha-\beta)^4}e^{(\alpha-\beta)\delta}.
\label{covhaw}
\end{eqnarray}
\item The long-run autocorrelation function of the number of jumps over intervals of length $\tau$ separated by a time lag of $\delta$ reads
\begin{equation}
Acf(\tau,\delta) = \frac{e^{-2\beta\tau}(e^{\alpha\tau}-e^{\beta\tau})^2\alpha(\alpha-2\beta)}{2(\alpha(\alpha-2\beta)(e^{(\alpha-\beta)\tau}-1)+\beta^2\tau(\alpha-\beta))}e^{(\alpha-\beta)\delta},
\label{acfhaaw}
\end{equation} 
and is always positive for $\alpha<\beta$ (stationarity condition) and exponentially decaying with the lag $\delta$.
\end{enumerate}
\end{proposition}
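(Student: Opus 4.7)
The plan is to apply Dynkin's formula with the infinitesimal generator $\mathcal{A}$ displayed above to a nested family of polynomial test functions in $(\lambda_t, N_t)$, convert the resulting identities into linear ODEs for the relevant moments, and pass to the stationary regime $t \to +\infty$ under $\alpha < \beta$.

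For the first moment, taking $f(\lambda,N) = \lambda$ gives $\mathcal{A}f = \beta\mu - (\beta-\alpha)\lambda$, so $\frac{d}{dt}\mathbb{E}[\lambda_t] = \beta\mu - (\beta-\alpha)\mathbb{E}[\lambda_t]$ with long-run limit $\mathbb{E}[\lambda_\infty] = \beta\mu/(\beta-\alpha)$. Choosing $f = N$ then gives $\mathcal{A}f = \lambda$, and integrating over $[t, t+\tau]$ in the stationary regime produces \eqref{ExpFirstMoMStdH}. For the variance I would repeat the exercise with $\lambda^2$, $\lambda N$ and $N^2$, whose generators close into a $3 \times 3$ linear ODE system whose inhomogeneities involve $\mathbb{E}[\lambda_t]$ (known from Step 1) together with the constant $\alpha^2$. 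Because we need moments of the increment $N_{t+\tau}-N_t$ rather than $N_{t+\tau}$ itself, the standard device is to solve the analogous ODE in $\tau$ for the conditional moment $\mathbb{E}[(N_{t+\tau}-N_t)^2 \mid \lambda_t = \ell]$ from the trivial initial condition and then average against the stationary law of $\lambda_t$, obtaining \eqref{varhawkes}.

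For the covariance I would tower the expectation over $\mathcal{F}_{t_3}$: the inner factor $\mathbb{E}[N_{t_4}-N_{t_3} \mid \mathcal{F}_{t_3}]$ is an affine function of $\lambda_{t_3}$ by the Step 1 calculation run on $[t_3, t_4]$. Inserting the representation \eqref{lambtdatolambs} with $s=t_2$ and $t=t_3$ splits $\lambda_{t_3}$ into a deterministic part $\mu + e^{-\beta\delta}(\lambda_{t_2}-\mu)$ plus the stochastic integral $\int_{t_2}^{t_3}\alpha e^{-\beta(t_3-l)}\,dN_l$. Conditioning further on $\mathcal{F}_{t_2}$, the expected value of the stochastic integral reduces, via Step 1 applied on $[t_2, t_3]$, to an affine function of $\lambda_{t_2}$, so that everything collapses onto the stationary cross-moment $\mathrm{Cov}(\lambda_{t_2}, N_{t_2}-N_{t_1})$ (which is itself recovered from the Step 2 system). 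Collecting the $\delta$-dependent coefficients produces the single exponential $e^{(\alpha-\beta)\delta}$ that appears in \eqref{covhaw}. The autocorrelation \eqref{acfhaaw} is then the ratio of \eqref{covhaw} and \eqref{varhawkes}; positivity under $\alpha < \beta$ is immediate from the factorization in \eqref{covhaw}, and exponential decay in $\delta$ is read off the prefactor $e^{(\alpha-\beta)\delta}$.

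The main obstacle is bookkeeping rather than conceptual: the second-moment ODE system is cross-coupled and its stationary limit requires careful elimination to recover the compact form \eqref{varhawkes}, while the covariance computation must correctly combine the two sources of $\delta$-dependence, namely the deterministic decay of $\lambda_{t_3}-\mu$ and the cross-covariance between $\lambda_{t_2}$ and the stochastic integral on $[t_2,t_3]$, so that they consolidate into $e^{(\alpha-\beta)\delta}$ instead of the naive $e^{-\beta\delta}$ one might first write down. Once the algebra is organised as in Da Fonseca and Zaatour (2014), all four identities drop out of the same Dynkin framework.
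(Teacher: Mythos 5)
Your proposal is correct and follows essentially the same route as the paper: the paper states this proposition without proof, deferring to Da Fonseca and Zaatour (2014), whose derivation is exactly the Dynkin/infinitesimal-generator scheme you describe, and the same machinery (generators of $N_t$, $X_t$, $X_tX_t^{\top}$, $N_tX_t$, $N_t^2$, linear moment ODEs, stationary limits, and the tower property for the lagged covariance) is what the paper itself deploys in matrix form for the CARMA(p,q)-Hawkes generalization in Propositions \ref{prop5}--\ref{PropVarCARMAHAWKES} and \ref{proofs}. Your observation that the two sources of $\delta$-dependence must consolidate into $e^{(\alpha-\beta)\delta}$ rather than $e^{-\beta\delta}$ is precisely the right subtlety, reflecting that $\mathbb{E}[\lambda_t]$ relaxes at rate $\beta-\alpha$.
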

From \eqref{covhaw} and \eqref{acfhaaw} it is clear that the Hawkes model with exponential kernel can reproduce only strictly decreasing autocorrelation functions for varying lag values $\delta$. An interesting extension is given in  \cite{boswijk2018testing} where self-excitation is identified through the modeling of common jumps between the log price process and its own jump intensity. 

\section{L\'evy CARMA(p,q) models}\label{review CARMA}
The formal definition of a L\'evy CARMA(p,q) model \(Y_t\) with \(p > q \geq 0\) is based on the continuous version of the state-space
representation of an autoregressive moving average ARMA(p,q) model. In particular we have that
\begin{equation}
Y_t =\mathbf{b}^{\top}X_t
\label{eqCar}
\end{equation}
where \(X_t\) satisfies the following stochastic differential equation 
\begin{equation}
\mbox{d}X_t = \mathbf{A}X_{t-}\mbox{d}t+\mathbf{e}\mbox{d}Z_t,
\label{eq:CarSDE}
\end{equation} 
and $\left\{Z_{t}\right\}_{t\geq0}$ is a L\'evy process. The \(p\times p\)  matrix \(\mathbf{A}\) has the following form 
\begin{equation}
\mathbf{A}=\left[
\begin{array}{ccccc}
0 & 1 & 0 & \ldots & 0\\
0 & 0 & 1 & \ldots & 0\\
\vdots & \vdots & \vdots & \ddots & \vdots\\
0 & 0 & 0 & \ldots & 1\\
-a_p & -a_{p-1} & -a_{p-2} & \ldots & -a_1\\
\end{array}
\right]_{p\times p},
\label{Acomp}
\end{equation} 
and the \(p\times 1\)  vectors \(\mathbf{e}\) and \(\mathbf{b}\) are defined as follows 
\begin{equation}
\mathbf{e}=\left[0,0,\ldots,1\right]^\top
\label{ebold}
\end{equation}
\begin{equation}
\mathbf{b}=\left[b_0,b_1,\ldots,b_{p-1}\right]^\top
\label{bbold}
\end{equation} with \(b_{q+1}=\ldots=b_{p-1}=0\). Given a starting value for \(X_s\),
the solution of \eqref{eq:CarSDE} is \[
X_t= e^{\mathbf{A}\left(t-s\right)}X_s+\int_{s}^{t}e^{\mathbf{A}\left(t-u\right)}\mbox{d}Z_u, \quad \forall t>s
\] where
\(e^\mathbf{A}=\underset{h=0}{\stackrel{+\infty}{\sum}}\frac{1}{h!}\mathbf{A}^h\).

As reported in \citet[Section 2, p.~251]{brockwell2011estimation}, under the assumption that ensures the stationarity of $X_t$ (i.e., all eigenvalues \(\tilde{\lambda}_1,\ldots,\tilde{\lambda}_p\) of matrix
\(\mathbf{A}\) are distinct with negative real part\footnote{The eigenvalues are sorted based on their real part in an increasing order.}), the CARMA(p,q) model can be written as a summation of a finite number of continuous autoregressive models of order 1, which are also known as CAR(1) models. Specifically, 
\begin{equation}
Y_t = \mathbf{b}^\top e^{\mathbf{A}\left(t-s\right)} X_s +\int_0^{+\infty}\underset{i=1}{\stackrel{p}\sum}\left[\alpha\left(\tilde{\lambda}_i\right)e^{\tilde{\lambda}_i\left(t-u\right)}\right]\mathbbm{1}_{s\leq u\leq t}\mbox{d}Z_u 
\label{expr:Sol}
\end{equation} 
where \(\alpha\left(z\right)=\frac{b\left(z\right)}{a^{(1)}\left(z\right)}\) and the polynomials \(a\left(z\right)\) and \(b\left(z\right)\) are defined as 
\begin{equation*}
a\left(z\right):=z^p+a_1z^{p-1}+\ldots+a_p \quad \text{and} \quad b\left(z\right):=b_0+b_1z+\ldots+b_{p-1}z^{p-1}. 
\end{equation*}
Note that $a^{(1)}\left(z\right)$ is the first derivative of the polynomial $a(z)$. 

\begin{remark}
The eigenvalues of matrix $\mathbf{A}$ denoted by $\tilde{\lambda}_1,\ldots, \tilde{\lambda}_p$ are the same as the zeros of the autoregressive polynomial $a(z)$. As observed in \cite{{tsai2005note}}, the assumption that the zeros of $a(z)$  have negative real parts is a necessary condition for the stationarity of the CARMA(p,q) process $Y_t$. 
\end{remark} 
\begin{definition}
A stationary CARMA(p,q) process $Y_t$ where $Z$ is a second-order subordinator can be equivalently defined as:
\begin{equation*}\textsc{}
Y_t=\int_{-\infty}^{+\infty}h(t-u)\mbox{d}Z_u
\end{equation*}
where the function $h(t)=\mathbf{b}^{\top}e^{\mathbf{A}}\mathbb{e}\mathbbm{1}_{[0,+\infty)}(t) \mathbf{e}$ is the kernel of the CARMA(p,q) process. As $Y_t$ is independent of ${Z_s-Z_t}$, $\forall s\geq t$, the process $Y_t$ is said to be a \textsl{casual} function of the subordinator $Z_t$, also known as casual CARMA(p,q) model.
\end{definition}

In \cite{brockwell2005levy} it is shown that the function $h(u)$ can be written as 
\begin{equation}
h(u)=\sum_{i=1}^p \frac{b\left(\tilde{\lambda}_i\right)}{a^{(1)}\left(\tilde{\lambda}_i\right)}e^{\tilde{\lambda}_i\left(u\right)}\mathbbm{1}_{\{0< u <+\infty\}}.
\label{KKKK}
\end{equation} 
The positivity conditions for the kernel and for the process itself, which is for instance required for modeling the volatility using CARMA(p,q) models, have been deeply investigated in \cite{tsai2005note} and in \cite{benth2019non} for the case of positive subordinators.

\section{CARMA(p,q)-Hawkes model}
\label{CHP}
In this section, we introduce a point process where the intensity follows a CARMA(p,q) process that is a generalization of the Hawkes process with an exponential kernel.

\subsection{CARMA(p,q)-Hawkes: stationarity and positivity conditions for the intensity}

\begin{definition}
A vector process $[X_{1,t},\ldots,X_{1,p},N_{t}]^{\top}$ of dimension $p+1$ is a CARMA(p,q)-Hawkes process if the conditional intensity $\lambda_t$  of the counting process $N_t$ is a CARMA(p,q) process driven by $N_t$ and has the following form:
\begin{equation}
\lambda_t = \mu+ \mathbf{b}^\top X_t,
\label{CondIntCH}
\end{equation}
in which the baseline parameter $\mu$ is strictly positive and the vector $\mathbf{b}$ is defined as in \eqref{bbold}. The vector $X_{t}=\left[X_{1,t},\ldots,X_{1,p}\right]^{\top}$ satisfies the linear stochastic differential equation
\begin{equation}
\mbox{d}X_t = \mathbf{A}X_{t-}\mbox{d}t+\mathbf{e}\mbox{d}N_t \text{ with } X_0=\mathbf{0}
\label{eq:CarLambda}
\end{equation}
where the companion matrix $\mathbf{A}$ and the vector $\mathbf{e}$ have respectively the form in \eqref{Acomp} and \eqref{ebold}.
\end{definition}
The dynamics of the state space process $X_t$ in \eqref{eq:CarLambda} is described through a linear stochastic differential equation and it is a Markov process. Consequently, a CARMA(p,q)-Hawkes process is in turn a Markov process.
\begin{remark}
The stochastic differential equation in \eqref{eq:CarLambda} has an analytical solution given the initial condition, that is
\begin{equation}
X_t=\int_{0}^{t} e^{\mathbf{A}\left(t-s\right)}\mathbf{e} \mbox{d}N_s.
\label{18}
\end{equation}
The non-decreasing and non-negative trajectories of the counting process $N_t$ imply the positiveness of $\lambda_t$ for non-negative kernel functions.
\end{remark}


To investigate the stationary regime of a CARMA(p,q)-Hawkes model, it is necessary to determine the conditions required for a non-negative kernel, i.e., $\mathbf{b}^{\top}e^{\mathbf{A}t}\mathbf{e}\geq0, \ \forall t\geq0$. In case of a CARMA(p,q) driven by a non-negative L\'evy process the conditions of a non-negative kernel are presented in \citet[Theorem 1, p.~592]{tsai2005note}. In a similar fashion such conditions can be applied directly to our case due to the non-negative trajectories of the counting process $N_t$. Indeed, as done in \citet[Theorem 5.2]{brockwell2006continuous} for COGARCH(p,q) models, we rephrase their results for a generic CARMA(p,q)-Hawkes process when $b_0>0$ in the next proposition.

\begin{proposition}\label{propNonnegative} $ {} $
\begin{itemize}
\item[(a)] For a CARMA(p,q)-Hawkes process such that the real part of all eigenvalues of $\mathbf{A}$ is negative, the kernel function $h\left(t\right):=\mathbf{b}^{\top}e^{\mathbf{A}t}\mathbf{e}\mathbbm{1}_{\{t\geq0\}}$ is non-negative if and only if the ratio function $\frac{b\left(z\right)}{a\left(z\right)}$ is completely monotone\footnote{A function $f\left(x\right)$ defined on $\left(0,+\infty\right)$ is said to be completely monotone if and only it has derivatives of all orders and $\left(-1\right)^n\frac{\partial^n f\left(t\right)}{\left(\partial x\right)^n}\geq0$ for $n=0,1,3,\ldots$.} on $\left(0,+\infty\right)$.
\item[(b)] A sufficient condition for the kernel function of a CAR(p)-Hawkes process to be non-negative is that all eigenvalues of $\mathbf{A}$ are real and negative.
\item[(c)] A sufficient condition for the kernel function of a CAR(p)-Hawkes process to be non-negative is that if $\left(\tilde{\lambda}_{i_1}, \tilde{\lambda}_{i_1+1}\right),\ldots,\left(\tilde{\lambda}_{i_r}, \tilde{\lambda}_{i_r+1}\right)$ is a partition of the set of all pairs of complex conjugate eigenvalues of $\mathbf{A}$ (counted with multiplicity), then there exists an injective mapping $u:\left\{1,\ldots,r\right\}\rightarrow \left\{1,\ldots,p\right\}$ such that $\tilde{\lambda}_{u\left(j\right)}$ real eigenvalue of $\mathbf{A}$ satisfies $\tilde{\lambda}_{u\left(j\right)}\geq \mathsf{Re}\left(\tilde{\lambda}_{i_j}\right)$.
\item[(d)] For a non-negative kernel $h\left(t\right)$ in a CAR(p)-Hawkes process, it is necessary to find a real eigenvalue $\tilde{\lambda}_i$ such that $\tilde{\lambda}_i\geq \mathsf{Re}\left(\tilde{\lambda}_j\right)$ where $j=1,\ldots,p$ with $j\neq i$.
\item[(e)] Suppose all eigenvalues of $\mathbf{A}$ are negative real numbers sorted as follows $\tilde{\lambda}_p\leq,\ldots,\leq\tilde{\lambda}_1$ and that all the roots of  $b\left(z\right)=0$ are negative real numbers such that $\gamma_q\leq,\ldots,\leq \gamma_1<0$.
If $\sum_{i=1}^{k}\gamma_i\leq\sum_{i=1}^k \tilde{\lambda}_i$ for $1\leq k\leq q$, then the kernel of a CARMA(p,q)-Hawkes process is non-negative. 
\item[(f)] A necessary and sufficient condition for a non-negative $h\left(t\right)$ in a CARMA(2,1)-Hawkes process is that $\tilde{\lambda}_2\leq\tilde{\lambda}_1<0$ and $b_0+\tilde{\lambda}_1b_1 \geq 0$ with $b_1\geq0$.
\end{itemize}
\end{proposition}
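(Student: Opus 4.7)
My plan starts from the observation that the kernel $h(t)=\mathbf{b}^{\top}e^{\mathbf{A}t}\mathbf{e}\mathbbm{1}_{\{t\geq 0\}}$ is a purely deterministic object attached to the triple $(\mathbf{A},\mathbf{b},\mathbf{e})$: it is independent of whether the driving process in \eqref{eq:CarLambda} is a subordinator or a counting process. Consequently, the non-negativity of $h$ is exactly the problem analyzed for L\'evy-driven CARMA processes in \cite{tsai2005note} and for COGARCH in \citet[Theorem 5.2]{brockwell2006continuous}; the proof will proceed by transposing their arguments to the notation of the proposition and verifying that the structural assumption $b_0>0$ does not interfere.

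For part (a), I would compute the Laplace transform of $h$ by noting $\int_{0}^{+\infty}e^{-st}e^{\mathbf{A}t}\mbox{d}t=(sI-\mathbf{A})^{-1}$ and then using the companion form of $\mathbf{A}$ together with \eqref{KKKK} to obtain $\mathcal{L}h(s)=b(s)/a(s)$. Bernstein's theorem then yields the equivalence: $h\geq 0$ on $(0,+\infty)$ if and only if $b/a$ is completely monotone on $(0,+\infty)$. This is the workhorse result; parts (b)--(f) are consequences obtained by checking complete monotonicity of particular rational functions.

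For parts (b) and (c), I would use the fact that $1/(s-\tilde{\lambda}_i)$ is completely monotone whenever $\tilde{\lambda}_i<0$ and that products of completely monotone functions are completely monotone. Part (b) then follows because $b(s)/a(s)=b_0\prod_{i=1}^{p}(s-\tilde{\lambda}_i)^{-1}$. For (c), each complex conjugate pair $(\tilde{\lambda}_{i_j},\tilde{\lambda}_{i_j+1})$ combined with a real eigenvalue $\tilde{\lambda}_{u(j)}$ dominating its real part produces a degree-three factor of the form $((s-\mathsf{Re}\tilde{\lambda}_{i_j})^2+\mathsf{Im}(\tilde{\lambda}_{i_j})^2)^{-1}(s-\tilde{\lambda}_{u(j)})^{-1}$, which is completely monotone by an explicit Laplace-inversion argument (the inverse is a convolution of a decaying exponential with a damped cosine and is non-negative under the stated inequality). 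Part (d) is the necessity counterpart: the asymptotic behavior $h(t)\sim c\, e^{\tilde{\lambda}_{\max}t}$ with $\tilde{\lambda}_{\max}=\max_i\mathsf{Re}(\tilde{\lambda}_i)$ forces $\tilde{\lambda}_{\max}$ to be real, otherwise $h$ would oscillate in sign for large $t$.

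For (e) I would expand $b(s)/a(s)$ in partial fractions using the real negative roots $\gamma_1,\ldots,\gamma_q$ of $b(z)$ and the real negative eigenvalues $\tilde{\lambda}_1,\ldots,\tilde{\lambda}_p$ of $\mathbf{A}$; the ordering hypothesis $\sum_{i=1}^{k}\gamma_i\leq\sum_{i=1}^{k}\tilde{\lambda}_i$ translates, via the Vieta-type identities relating these sums to log-derivatives of $b/a$, into the non-negativity of each residue times the corresponding exponential contribution, hence of $h$. Part (f) is a direct verification: for the CARMA(2,1) case I would write $h(t)=c_1 e^{\tilde{\lambda}_1 t}+c_2 e^{\tilde{\lambda}_2 t}$ with $c_1=(b_0+b_1\tilde{\lambda}_1)/(\tilde{\lambda}_1-\tilde{\lambda}_2)$ and $c_2=-(b_0+b_1\tilde{\lambda}_2)/(\tilde{\lambda}_1-\tilde{\lambda}_2)$, and check that non-negativity of $h$ on $[0,+\infty)$ is equivalent to $b_0+b_1\tilde{\lambda}_1\geq 0$ combined with $b_1\geq 0$, by comparing decay rates of the two exponentials. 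The main obstacle I anticipate is part (c): controlling the sign of the convolution of decaying cosines with exponentials requires care, and the indexing through the injective map $u$ must be chosen so that the dominant real eigenvalues absorb the oscillations of each conjugate pair; the rest is essentially partial-fraction bookkeeping.
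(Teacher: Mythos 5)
Your overall reading of the situation matches the paper's: the non-negativity of $h(t)=\mathbf{b}^{\top}e^{\mathbf{A}t}\mathbf{e}$ is a purely deterministic property of the triple $(\mathbf{A},\mathbf{b},\mathbf{e})$, and the paper accordingly offers no proof of Proposition \ref{propNonnegative} at all --- it transposes \citet[Theorem 1]{tsai2005note} and \citet[Theorem 5.2]{brockwell2006continuous}, as the text immediately preceding the proposition says. Your Laplace-transform identity $\mathcal{L}h(s)=\mathbf{b}^{\top}(sI-\mathbf{A})^{-1}\mathbf{e}=b(s)/a(s)$ for the companion form, the appeal to Bernstein's theorem for (a), the closure of complete monotonicity under products for (b)--(c), the dominant-eigenvalue asymptotics for (d), and the explicit two-exponential computation for (f) are precisely the arguments of the cited source, so for those parts the proposal is correct and essentially reconstructs the proof the paper relies on by citation.

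Part (e), however, contains a genuine gap. You claim that the majorization hypothesis $\sum_{i=1}^{k}\gamma_i\leq\sum_{i=1}^{k}\tilde{\lambda}_i$ yields ``the non-negativity of each residue times the corresponding exponential contribution,'' i.e.\ that every coefficient $b(\tilde{\lambda}_i)/a^{(1)}(\tilde{\lambda}_i)$ in \eqref{KKKK} is non-negative, so that $h\geq 0$ termwise. That is false. Take a CARMA(2,1) kernel with $\tilde{\lambda}_1=-1$, $\tilde{\lambda}_2=-2$, $\gamma_1=-3$ and $b_1>0$ (hence $b_0=3b_1$): the hypothesis $\gamma_1\leq\tilde{\lambda}_1$ holds and the kernel is non-negative (consistently with your part (f), since $b_0+b_1\tilde{\lambda}_1=2b_1>0$), yet the residue at $\tilde{\lambda}_2$ equals $b_1\left(\tilde{\lambda}_2-\gamma_1\right)/\left(\tilde{\lambda}_2-\tilde{\lambda}_1\right)=-b_1<0$. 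Non-negativity in this regime comes from cancellation between exponential terms, not from termwise positivity, and the proof of this part in \cite{tsai2005note} proceeds by a more delicate factorization and induction on the majorization ordering (pairing factors of the form $(s-\gamma_i)/(s-\tilde{\lambda}_j)$ and controlling the resulting product of completely monotone pieces), not by inspecting residues. As written, your step for (e) would fail; the remaining parts of the proposal stand.
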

We remark that the non-negativity requirement for the kernel implies a strictly positive intensity process $\lambda_t$ as the baseline parameter $\mu$ is strictly positive.\newline
Without loss of generality, we assume that matrix $\mathbf{A}$ is diagonalizable which corresponds to the assumption that the eigenvalues of $\mathbf{A}$ are distinct. The eigenvectors of $\mathbf{A}$ are
\[
\left[1,\tilde{\lambda}_j,\ldots,\tilde{\lambda}_{p-1}\right]^{\top}, \ j=1,\ldots,p 
\] 
used to define a $p\times p$ matrix $\mathbf{S}$ as
\begin{equation*}
\mathbf{S} :=\left[
\begin{array}{ccc}
1 & \ldots & 1\\
\tilde{\lambda}_1 & \ldots & \tilde{\lambda}_p\\
\tilde{\lambda}_1^2 & \ldots & \tilde{\lambda}_p^2\\
\vdots & & \vdots\\
\tilde{\lambda}_1^{p-1} & \ldots & \tilde{\lambda}_p^{p-1}\\
\end{array}
\right].
\end{equation*}
It follows that $\mathbf{S}$ satisfies $\mathbf{S}^{-1}\mathbf{A}\mathbf{S}=\mathsf{diag}\left(\tilde{\lambda}_1,\ldots,\tilde{\lambda}_p\right)$, a result used to prove the next proposition on the stationarity conditions for a CARMA(p,q)-Hawkes process.
\begin{proposition} \label{PropStat}
Let us consider a non-negative kernel function and suppose $\mu>0$. Then a CARMA(p,q)-Hawkes $\left(X_{1,t},\ldots,X_{p,t}, N_{t}\right)$ is a stationary process if all eigenvalues of $\mathbf{A}$ are distinct with non-negative real part and $-\mathbf{b}^{\top}\mathbf{A}^{-1}\mathbf{e}<1$. 
\end{proposition}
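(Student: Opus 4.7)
The strategy is to reduce the statement to the classical Brémaud–Massoulié stationarity criterion \eqref{condst} for general self-exciting processes, already recalled in the paper. The key observation is that the CARMA(p,q)-Hawkes intensity can be rewritten as a Hawkes-type convolution against a scalar kernel, at which point \eqref{condst} applies directly and the scalar integral evaluates to $-\mathbf{b}^{\top}\mathbf{A}^{-1}\mathbf{e}$.

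First, I would substitute the closed-form solution \eqref{18} into \eqref{CondIntCH} to obtain
\[
\lambda_t \;=\; \mu + \mathbf{b}^{\top}\!\int_0^t e^{\mathbf{A}(t-s)}\mathbf{e}\,\mbox{d}N_s \;=\; \mu + \int_0^t h(t-s)\,\mbox{d}N_s,
\]
with $h(u):=\mathbf{b}^{\top}e^{\mathbf{A}u}\mathbf{e}\mathbbm{1}_{\{u\ge 0\}}$. By hypothesis $h\ge 0$, so $\lambda_t$ is precisely of the general self-exciting form \eqref{CondIntensity}, and the stationarity result of \cite{bremaud1996stability} applies: a stationary version of $(\lambda_t,N_t)$ (equivalently, of $X_t$ together with the increments of $N_t$) exists provided $\int_0^{+\infty}h(t)\,\mbox{d}t<1$.

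Second, I would evaluate this integral using the diagonalization $\mathbf{S}^{-1}\mathbf{A}\mathbf{S}=\mathsf{diag}(\tilde\lambda_1,\ldots,\tilde\lambda_p)$ provided just before the statement. Since every $\tilde\lambda_j$ has strictly negative real part (this is clearly the intended reading of the hypothesis, otherwise $e^{\mathbf{A}t}$ fails to decay and $\mathbf{A}$ may be singular), the matrix $\mathbf{A}$ is invertible and
\[
\int_0^{+\infty}e^{\mathbf{A}t}\,\mbox{d}t \;=\; \mathbf{S}\,\mathsf{diag}\!\left(-\tilde\lambda_1^{-1},\ldots,-\tilde\lambda_p^{-1}\right)\mathbf{S}^{-1} \;=\; -\mathbf{A}^{-1}.
\]
Hence $\int_0^{+\infty}h(t)\,\mbox{d}t=-\mathbf{b}^{\top}\mathbf{A}^{-1}\mathbf{e}$, and the assumed bound is exactly \eqref{condst}. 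Combined with $\mu>0$ and the non-negativity of $h$ (which guarantees $\lambda_t>0$ almost surely), this yields the desired stationarity.

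The only genuine obstacle is conceptual rather than computational: $N_t$ itself is non-decreasing, so ``stationary'' must be understood in the sense that the state vector $X_t$ (and therefore the intensity $\lambda_t$) admits an invariant law and that, once initialised from it, the increment process of $N_t$ is stationary. The finite first moment of the stationary intensity, needed for \cite{bremaud1996stability} to apply, follows from a routine expectation argument combined with the integrability of $h$ ensured by the spectral hypothesis; I would note this but not belabour the calculation.
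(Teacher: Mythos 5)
Your proposal is correct and follows essentially the same route as the paper: rewrite the intensity in the convolution form $\lambda_t=\mu+\int_0^t h(t-s)\,\mathrm{d}N_s$ with $h(u)=\mathbf{b}^{\top}e^{\mathbf{A}u}\mathbf{e}$, invoke the Br\'emaud--Massouli\'e criterion \eqref{condst}, and evaluate $\int_0^{+\infty}h(t)\,\mathrm{d}t=-\mathbf{b}^{\top}\mathbf{A}^{-1}\mathbf{e}$ via the diagonalization of $\mathbf{A}$. Your reading of ``non-negative real part'' as a typo for ``negative real part'' matches what the paper's own proof actually uses, and your closing remark on the sense in which a counting process can be ``stationary'' is a useful clarification the paper leaves implicit.
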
 
\begin{proof}
For a non-negative kernel function, the stationary condition in \eqref{condst} for a CARMA(p,q)-Hawkes process becomes
\begin{eqnarray}
\int_0^{+\infty}\mathbf{b}^\top e^{\mathbf{A} t}\mathbf{e}\mbox{d}t&=&\lim_{T\rightarrow +\infty}\int_0^{T} \mathbf{b}^\top e^{\mathbf{A} t}\mathbf{e}\mbox{d}t
= \lim_{T\rightarrow +\infty} \mathbf{b}^\top \mathbf{A}^{-1}\left(e^{\mathbf{A}T}-\mathbf{I}\right)\mathbf{e},
\label{Int1}
\end{eqnarray}
where $\mathbf{I}$ is the identity matrix with dimension $p$. As $\mathbf{A}$ is diagonalizable, 
\begin{equation*}
e^{\mathbf{A}T} =\mathbf{S} e^{\mathbf{\Lambda}T} \mathbf{S}^{-1}
\end{equation*}
where $\mathbf{\Lambda}:= \mathsf{diag}\left(\tilde{\lambda}_1,\ldots, \tilde{\lambda}_p\right)$. Thus the limit in \eqref{Int1} is
\begin{eqnarray*}
\lim_{T\rightarrow +\infty} \mathbf{b}^\top \mathbf{A}^{-1}\left(e^{\mathbf{A}T}-\mathbf{I}\right)\mathbf{e}&=&  \mathbf{b}^\top \mathbf{A}^{-1}  \left[ \mathbf{S} \left(\lim_{T\rightarrow +\infty} e^{\mathbf{\Lambda}T}\right) \mathbf{S}^{-1}-\mathbf{I}\right]\mathbf{e}.
\end{eqnarray*}
Recalling that all eigenvalues of $\mathbf{A}$ have negative real part, we notice that $e^{\mathbf{\Lambda}T}$ tends to a $p \times p$ zero matrix. The integral in \eqref{Int1} becomes
\begin{equation}
\int_0^{+\infty}\mathbf{b}^\top e^{\mathbf{A} t}\mathbf{e}\mbox{d}t=-\mathbf{b}^\top \mathbf{A}^{-1} \mathbf{e}.
\end{equation}
The stationarity condition in \eqref{condst} implies $-\mathbf{b}^\top \mathbf{A}^{-1} \mathbf{e}<1$.
\end{proof}
\begin{assumption} \label{Ass1}
We shall assume for the remainder of the paper that: i) the kernel is a non-negative function and $\mu>0$; and ii) all eigenvalues of  $\mathbf{A}$ are distinct with negative real part and $\mathbf{b}^{\top}\mathbf{A}^{-1}\mathbf{e}>-1$.
\end{assumption}
For practical applications, instead of checking ex-post signs of eigenvalues of matrix $\mathbf{A}$, it is possible to enforce ex-ante the negativity of the real part for eigenvalues using some transformations on the parameters space as done, for example, in \cite{tomasson2015some}. As a CARMA(p,q)-Hawkes process is Markovian, we are able to calculate the infinitesimal operator as described in the following proposition.
\begin{proposition}
\label{Prop4}
Let $f:\mathbb{R}^{p}\times\mathbb{N}\rightarrow\mathbb{R}$ be a function with continuous partial derivatives with respect to the first $p$ arguments. Under the same conditions in Assumption \ref{Ass1}, the infinitesimal generator of function  $f$ for a CARMA(p,q)-Hawkes process is:
\begin{eqnarray}
\mathcal{A}f_t &=& \lambda_t \left[f\left(X_{1,t},\ldots,X_{p,t}+1,N_{t}+1\right)-f\left(X_{1,t},\ldots,X_{p,t},N_{t}\right)\right]\nonumber\\
&+&\overset{p-1}{\underset{i=1}{\sum}} \frac{\partial f}{\partial X_{i,t} } X_{i+1,t}+\frac{\partial f}{\partial X_{p,t} }\mathbf{A}_{[p,]}X_{t}
\label{GLobalInfGenExp}
\end{eqnarray}
where $\mathbf{A}_{[p,]}$ is the $p$-th row of the companion matrix $\mathbf{A}$ and the intensity process $\lambda_t$ is defined as in \eqref{CondIntCH}. Alternatively, the infinitesimal generator can be written as
\begin{equation}
\mathcal{A}f_t = \lambda_t \left[f\left(X_{1,t},\ldots,X_{p,t}+1,N_{t}+1\right)-f\left(X_{1,t},\ldots,X_{p,t},N_{t}\right)\right]+\nabla_p f^{\top} \mathbf{A}X_t
\label{GLobalInfGen}
\end{equation} where $\nabla_p f := \left[\frac{\partial f}{\partial X_{1,t} },\ldots \frac{\partial f}{\partial X_{p,t} }\right]^{\top}$.
\end{proposition}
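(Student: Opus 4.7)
The plan is to apply the definition of the infinitesimal generator directly, decomposing the conditional expectation $\mathbb{E}[f(X_{t+\Delta},N_{t+\Delta})\mid \mathcal{F}_t]$ according to the number of jumps of $N$ in $[t,t+\Delta]$, exactly as was done for the standard Hawkes process recalled in Section \ref{SH}. The key structural inputs are: the jump probabilities coming from the definition of $\lambda_t$ (namely $P(\Delta N=0)=1-\lambda_t\Delta+o(\Delta)$, $P(\Delta N=1)=\lambda_t\Delta+o(\Delta)$, and $P(\Delta N>1)=o(\Delta)$), and the analytic solution (\ref{18}) which tells us how the state $X_t$ behaves between jumps and at a jump.

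First, I would analyse the no-jump event. On $\{N_{t+\Delta}-N_t=0\}$, equation (\ref{eq:CarLambda}) reduces to the deterministic ODE $\dot X_s = \mathbf A X_s$, so $X_{t+\Delta} = X_t + \mathbf A X_t \Delta + o(\Delta)$. Writing out $\mathbf A X_t$ using the companion form (\ref{Acomp}), the first $p-1$ components are $X_{2,t},\dots,X_{p,t}$ and the last component is $\mathbf A_{[p,]}X_t$. A first-order Taylor expansion of $f$ in its first $p$ arguments (justified by the assumed continuous partial derivatives) then yields
\begin{equation*}
f(X_{t+\Delta},N_t) = f(X_t,N_t) + \sum_{i=1}^{p-1}\frac{\partial f}{\partial X_{i,t}}X_{i+1,t}\,\Delta + \frac{\partial f}{\partial X_{p,t}}\mathbf A_{[p,]}X_t\,\Delta + o(\Delta).
\end{equation*}
Second, on the one-jump event $\{N_{t+\Delta}-N_t=1\}$, (\ref{18}) together with $\mathbf e=[0,\dots,0,1]^\top$ gives $X_{t+\Delta} = (X_{1,t},\dots,X_{p-1,t},X_{p,t}+1)^\top + O(\Delta)$ and $N_{t+\Delta}=N_t+1$, so $f(X_{t+\Delta},N_{t+\Delta}) = f(X_{1,t},\dots,X_{p,t}+1,N_t+1) + O(\Delta)$. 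Events with two or more jumps contribute $o(\Delta)$ and can be absorbed.

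Third, I would combine the two cases weighted by their conditional probabilities:
\begin{align*}
\mathbb{E}[f(X_{t+\Delta},N_{t+\Delta})\mid\mathcal{F}_t]
&= (1-\lambda_t\Delta)\Bigl[f(X_t,N_t) + \Bigl(\textstyle\sum_{i=1}^{p-1}\tfrac{\partial f}{\partial X_{i,t}}X_{i+1,t} + \tfrac{\partial f}{\partial X_{p,t}}\mathbf A_{[p,]}X_t\Bigr)\Delta\Bigr]\\
&\quad + \lambda_t\Delta\, f(X_{1,t},\dots,X_{p,t}+1,N_t+1) + o(\Delta).
\end{align*}
Subtracting $f(X_t,N_t)$, dividing by $\Delta$, and sending $\Delta\downarrow 0^+$ produces exactly (\ref{GLobalInfGenExp}); the cross-term $\lambda_t\Delta\cdot\Delta$ vanishes in the limit. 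The alternative compact form (\ref{GLobalInfGen}) follows immediately by recognising the drift sum as $\nabla_p f^{\top}\mathbf A X_t$, since by the companion structure of $\mathbf A$ the $i$-th entry of $\mathbf A X_t$ is $X_{i+1,t}$ for $i<p$ and $\mathbf A_{[p,]}X_t$ for $i=p$.

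The only mildly delicate step is controlling the $O(\Delta)$ correction to $X_{t+\Delta}$ coming from the deterministic drift on the one-jump event and verifying that its contribution after multiplication by $\lambda_t\Delta$ is $o(\Delta)$; this is handled by the continuity of $\partial f/\partial X_i$ together with local boundedness of $\mathbf A X_t$. The rest is the by-now-standard template of generator computations for affine/Markov jump-diffusions, paralleling the argument used in \cite{errais2010affine} for the exponential-kernel case.
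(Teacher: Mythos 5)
Your proposal is correct and follows essentially the same route as the paper's proof: conditioning on the number of jumps in $(t,t+\Delta]$, using $X^{\text{NJ}}_{t+\Delta}=e^{\mathbf{A}\Delta}X_t$ on the no-jump event and $X_{t+\Delta}\rightarrow X_t+\mathbf{e}$ on the one-jump event, weighting by $1-\lambda_t\Delta$ and $\lambda_t\Delta$, and passing to the limit. The only cosmetic difference is that you evaluate the drift contribution via a first-order Taylor expansion of $f$ along $X_{t+\Delta}=X_t+\mathbf{A}X_t\Delta+o(\Delta)$, whereas the paper reaches the same limit by applying De l'H\^opital's rule to $\bigl(f(X^{\text{NJ}}_{t+\Delta},N_t)-f(X_t,N_t)\bigr)/\Delta$ with $\partial X^{\text{NJ}}_{t+\Delta}/\partial \Delta=\mathbf{A}e^{\mathbf{A}\Delta}X_t$.
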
 
\begin{proof}
Let us consider two cases. If $N_{T+h}-N_{T}=0$, the vector $X_{T}=\left[X_{1,t},\ldots,X_{p,t}\right]^{\top}$ becomes $X_{T+h}=X^{\text{NJ}}_{T+h}$
where $X^{\text{NJ}}_{T+h}$ means no jump (NJ)  occurred in the interval $(T, T+h]$ and can be written in the following way
\begin{eqnarray*}
X^{\text{NJ}}_{T+h} 
&=& e^{\mathbf{A}\left(T+h-t_0\right)}X_{t_0}+\int_{t_0}^{T} e^{\mathbf{A}\left(T+h-t\right)}\mathbf{e}\mbox{d}N_t
\end{eqnarray*}
as the quantity $\int_{T}^{T+h} e^{\mathbf{A}\left(T+h-t\right)}\mathbf{e}\mbox{d}N_t$ is zero due to the absence of jumps in the interval $\left(T,T+h\right]$. From 
\begin{equation*}
X^{\text{NJ}}_{T+h} = e^{\mathbf{A}h}\left[e^{\mathbf{A}\left(T-t_0\right)}X_{t_0}+\int_{t_0}^{T} e^{\mathbf{A}\left(T-t\right)}\mathbf{e}\mbox{d}N_t\right]= e^{\mathbf{A}h}X_T
\end{equation*}
we have that
\begin{equation}
\underset{h\rightarrow0}{\text{lim }} X^{\text{NJ}}_{T+h} =X_T.
\label{Lim1}
\end{equation}
If $N_{T+h}-N_{T}=1$ then $X_{T+h}:=X^{\text{1J}}_{T+h}$ is computed as
\begin{equation*}
X^{\text{1J}}_{T+h} = e^{\mathbf{A}\left(T+h-t_0\right)}X_{t_0}+\int_{t_0}^{T} e^{\mathbf{A}\left(T+h-t\right)}\mathbf{e}\mbox{d}N_t+\int_{T}^{T+h} e^{\mathbf{A}\left(T+h-t\right)}\mathbf{e}\mbox{d}N_t.
\end{equation*}
Defining the jump time $T_h$ in the time interval $\left(T,T+h\right]$ we get
\[
\int_{T}^{T+h} e^{\mathbf{A}\left(T+h-t\right)}\mathbf{e}\mbox{d}N_t = e^{\mathbf{A}\left(T+h-T_h\right)}\mathbf{e}.
\] \newline 
As $\underset{h\rightarrow0}{\text{lim }} T_h=T$, we observe that
\begin{eqnarray}
\underset{h\rightarrow0}{\text{lim }} X^{\text{1J}}_{T+h} &=& \left[e^{\mathbf{A}\left(T-t_0\right)}X_{t_0}+\int_{t_0}^{T} e^{\mathbf{A}\left(T-t\right)}\mathbf{e}\mbox{d}N_t\right]+\mathbf{e}= X_T+\mathbf{e}.
\label{Lim2}
\end{eqnarray}
Note that $X_t+\mathbf{e}=\left[X_{t,1},\ldots,X_{t,p}+1\right]^\top$ and consider the following quantity:
\begin{eqnarray*}
\mathbb{E}\left[f\left(X_{1,t+h},\ldots,X_{p,t+h},N_{t+h}\right)\left|\mathcal{F}_{t}\right.\right]&=&f\left(X^{\text{NJ}}_{1,t+h},\ldots,X^{\text{NJ}}_{p,t+h},N_{t}\right)\left(1-\lambda_t h\right)\\
&+& f\left(X^{\text{1J}}_{1,t+h},\ldots,X^{\text{1J}}_{p,t+h},N_{t}+1\right)\lambda_t h +o\left(h\right). 
\end{eqnarray*}
The infinitesimal generator is:
\begin{eqnarray*}
\mathcal{A}f_t &:=& \underset{h\rightarrow0}{\text{lim }} \frac{\mathbb{E}\left[f\left(X_{1,t+h},\ldots,X_{p,t+h},N_{t+h}\right)\left|\mathcal{F}_{t}\right.\right]-f\left(X_{1,t},\ldots,X_{p,t},N_{t}\right)}{h}\\
&=& \underset{h\rightarrow0}{\text{lim }} \lambda_t\left[f\left(X^{\text{1J}}_{1,t+h},\ldots,X^{\text{1J}}_{p,t+h},N_{t}+1\right)-f\left(X^{\text{NJ}}_{1,t+h},\ldots,X^{\text{NJ}}_{p,t+h},N_{t}\right)\right]\\
&+& \underset{h\rightarrow0}{\text{lim }} \frac{f\left(X^{\text{NJ}}_{1,t+h},\ldots,X^{\text{NJ}}_{p,t+h},N_{t}\right)-f\left(N_{t},X_{1,t},\ldots,X_{p,t}\right)}{h}.
\end{eqnarray*}
\normalsize
From \eqref{Lim1} and \eqref{Lim2} we obtain
\begin{eqnarray}
\mathcal{A}f_t &:=& \lambda_t\left[f\left(X_{1,t},\ldots,X_{p,t}+1,N_{t}+1\right)-f\left(X_{1,t},\ldots,X_{p,t},N_{t}\right)\right]\nonumber\\
&+& \underset{h\rightarrow0}{\text{lim }} \frac{f\left(X^{\text{NJ}}_{1,t+h},\ldots,X^{\text{NJ}}_{p,t+h},N_{t}\right)-f\left(X_{1,t},\ldots,X_{p,t},N_{t}\right)}{h}.
\label{Lim3}
\end{eqnarray}
To compute the limit \eqref{Lim3} we use De l'H\^opital's rule
\begin{eqnarray}
\underset{h\rightarrow0}{\text{lim }} \overset{p}{\underset{i=1}{\sum}} \frac{\partial f}{\partial X^{\text{NJ}}_{i,t+h} }\frac{\partial X^{\text{NJ}}_{i,t+h}}{\partial h}&=&\underset{h\rightarrow0}{\text{lim }} \left[\frac{\partial f}{\partial X^{\text{NJ}}_{1,t+h} },\ldots \frac{\partial f}{\partial X^{\text{NJ}}_{p,t+h} }\right] \mathbf{A}e^{\mathbf{A}h}X_t\nonumber\\
&=& \overset{p-1}{\underset{i=1}{\sum}} \frac{\partial f}{\partial X_{i,t} } X_{i+1,t}+\frac{\partial f}{\partial X_{p,t} }\mathbf{A}_{[p,]}X_{t}, 
\label{FIN2}
\end{eqnarray}
and substituting \eqref{FIN2} in \eqref{Lim3}, we finally obtain the result in \eqref{GLobalInfGen}. 
\end{proof}
The conditional expected value for $f\left(X_{1,T},\ldots,X_{p,T},N_{T}\right)$ can be computed applying the Dynkin's formula:
\begin{equation}
\mathbb{E}\left[f\left(X_{1,T},\ldots,X_{p,T},N_{T}\right)\left|\mathcal{F}_{t_0}\right.\right]=f\left(X_{1,t_0},\ldots,X_{p,t_0},N_{t_0}\right)+\mathbb{E}\left[\int_{t_0}^{T}\mathcal{A}f_t\mbox{d}t\left|\mathcal{F}_{t_0}\right.\right]
\label{DF}
\end{equation}
that has a representation of the following form
\begin{equation}
\mbox{d}\mathbb{E}\left[f\left(X_{1,t},\ldots,X_{p,t},N_{t}\right)\left|\mathcal{F}_{t_0}\right.\right]=\mathbb{E}\left[\mathcal{A}f_t\left|\mathcal{F}_{t_0}\right.\right]\mbox{d}t, 
\label{ODEDynkin}
\end{equation}
with initial condition $f\left(X_{1,t_0},\ldots,X_{p,t_0},N_{t_0}\right)$. We use the infinitesimal generator \eqref{GLobalInfGen} and the result in \eqref{DF} to obtain the following proposition for the computation of the first moment of the counting process $N_t$. In the remainder of the paper, we use $\mathbb{E}_t\left[\cdot\right]:=\mathbb{E}\left[\cdot\left|\mathcal{F}_t\right.\right]$.
\begin{proposition}
\label{prop5}
Let $\tilde{\mathbf{A}}$ be a $p\times p$ companion matrix where the last row has the following structure
\begin{equation}
\tilde{\mathbf{A}}_{\left[p,\cdot\right]}=\left[b_0-a_p,b_1-a_{p-1},\ldots,b_{p-1}-a_{1}\right].
\label{AtildeMia}
\end{equation}
Under Assumption \ref{Ass1} and supposing that all eigenvalues of $\tilde{\mathbf{A}}$ are distinct with negative real part, for any $T> t_0\geq0$, the conditional first moment of the counting process is
\begin{equation}
\mathbb{E}_{t_0}\left[N_T\right] = N_{t_0}+\mu\left(1-\mathbf{b}^{\top}\tilde{\mathbf{A}}^{-1}\mathbf{e}\right)\left(T-t_0\right)+\mathbf{b}^{\top}\tilde{\mathbf{A}}^{-1}\left[e^{\tilde{\mathbf{A}}\left(T-t_0\right)}-I\right]\left[X_{t_0}+\tilde{\mathbf{A}}^{-1}\mathbf{e}\mu\right],
\label{condFirstMoMCount}
\end{equation}
while the conditional expected value of the state process $X_T$ is
\begin{equation}
\mathbb{E}_{t_0}\left[X_T\right] =e^{\tilde{\mathbf{A}}\left(T-t_0\right)}\left[X_{t_0}+\tilde{\mathbf{A}}^{-1}\mathbf{e}\mu\right]-\tilde{\mathbf{A}}^{-1}\mathbf{e}\mu.
\label{conditionalStateProcMoM}
\end{equation}
The quantities in \eqref{condFirstMoMCount} and \eqref{conditionalStateProcMoM} satisfy respectively the following ordinary differential equations:
\begin{equation}
\mbox{d}\mathbb{E}_{t_0}\left[N_{t}\right]=	\left[\mu\left(1-\mathbf{b}^\top\tilde{\mathbf{A}}^{-1}\mathbf{e}\right)+\mathbf{b}^\top e^{\tilde{\mathbf{A}}\left(t-t_0\right)}\left[X_{t_0}+\tilde{\mathbf{A}}^{-1}\mathbf{e}\mu\right]\right]\mbox{d}t
\end{equation}
and
\begin{equation}
\mbox{d}\mathbb{E}_{t_0}\left[X_{t}\right]=\left(\tilde{\mathbf{A}}\mathbb{E}_{t_0}\left[X_{t}\right]+\mu\mathbf{e}\right)\mbox{d}t
\end{equation}
with initial conditions\footnote{For $t_0=0$, then $\mathbb{E}_{t_0}\left[X_T\right] =\left(e^{\tilde{\mathbf{A}}\left(T-t_0\right)}-\mathbf{I}\right)\tilde{\mathbf{A}}^{-1}\mathbf{e}\mu$ and \newline $\mathbb{E}_{t_0}\left[N_T\right] = \mu\left(1-\mathbf{b}^{\top}\tilde{\mathbf{A}}^{-1}\mathbf{e}\right)\left(T-t_0\right)+\mathbf{b}^{\top}\tilde{\mathbf{A}}^{-1}\left[e^{\tilde{\mathbf{A}}\left(T-t_0\right)}-I\right]\tilde{\mathbf{A}}^{-1}\mathbf{e}\mu$.} $X_{t_0}$ and $N_{t_0}$. The  long-run value for $\mathbb{E}_{t_0}\left[X_T\right]$ is obtained as follows
\begin{equation}
\mathbb{E}\left[X_{\infty}\right]:=\lim_{T\rightarrow +\infty} \mathbb{E}_{t_0}\left[X_{T}\right]=-\tilde{\mathbf{A}}\mathbf{e}\mu.
\label{AsymptEX}
\end{equation}
Moreover, the expected number of events that occurs in an interval with length $\tau$, i.e., $\left(T,T+\tau\right]$, given the information at time $t_0<T$ is
\begin{eqnarray}
\mathbb{E}_{t_0}\left[\left(N_{T+\tau}-N_{T}\right)\right]
&=&\mu\left(1-\mathbf{b}^\top \tilde{\mathbf{A}}^{-1}\mathbf{e}\right)\tau+\mathbf{b}^{\top}\tilde{\mathbf{A}}^{-1}e^{\tilde{\mathbf{A}}\left(T-t_0\right)}\left(e^{\tilde{\mathbf{A}}\tau}-\mathbf{I}\right)\left(X_{t_0}+\tilde{\mathbf{A}}^{-1}\mathbf{e}\mu\right)
\label{ExpNumbOfJump}
\end{eqnarray}
and the stationary behaviour of \eqref{ExpNumbOfJump} is
\begin{equation}
\mathbb{E}\left[\Delta_{\tau}N_{\infty}\right]:=\lim_{T\rightarrow +\infty} \mathbb{E}_{t_0}\left[N_{T+\tau}-N_T\right]=\mu\left(1-\mathbf{b}^\top \tilde{\mathbf{A}}^{-1}\mathbf{e}\right)\tau, \ \ \forall \tau>0.
\label{statExpNumbOfJump}
\end{equation}

\end{proposition}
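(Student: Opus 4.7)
The plan is to apply Dynkin's formula together with the infinitesimal generator obtained in Proposition~\ref{Prop4} to two specific test functions, and then solve the resulting system of linear ODEs. I would first compute $\mathbb{E}_{t_0}[X_t]$ coordinate-wise by taking $f(x_1,\ldots,x_p,n)=x_i$ for each $i=1,\ldots,p$. Because a jump shifts only the last coordinate of $X$ by $\mathbf{e}$, the jump term in $\mathcal{A}f$ vanishes for $i<p$ and only the drift contribution $X_{i+1,t}$ survives (this is exactly the companion structure of $\mathbf{A}$). For $i=p$ the jump term equals $\lambda_t=\mu+\mathbf{b}^{\top}X_t$, while the drift contributes $\mathbf{A}_{[p,\cdot]}X_t$, and their sum is $\mu+\bigl(\mathbf{A}_{[p,\cdot]}+\mathbf{b}^{\top}\bigr)X_t$. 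Stacking the $p$ componentwise ODEs shows that $\mathbb{E}_{t_0}[X_t]$ satisfies $\mathrm{d}\mathbb{E}_{t_0}[X_t]=(\tilde{\mathbf{A}}\,\mathbb{E}_{t_0}[X_t]+\mu\mathbf{e})\mathrm{d}t$, where the last row of $\tilde{\mathbf{A}}$ is precisely $\mathbf{A}_{[p,\cdot]}+\mathbf{b}^{\top}$, matching \eqref{AtildeMia}. The variation-of-constants formula with initial value $X_{t_0}$ then gives \eqref{conditionalStateProcMoM}.

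Next I would apply the generator \eqref{GLobalInfGen} to $f\equiv n$; since the drift piece only involves $X$-derivatives, one gets $\mathcal{A}f=\lambda_t$. Dynkin's formula then yields $\tfrac{\mathrm{d}}{\mathrm{d}t}\mathbb{E}_{t_0}[N_t]=\mu+\mathbf{b}^{\top}\mathbb{E}_{t_0}[X_t]$, which is the ODE in the statement. Substituting the explicit expression for $\mathbb{E}_{t_0}[X_t]$ decomposes the right-hand side into a constant $\mu(1-\mathbf{b}^{\top}\tilde{\mathbf{A}}^{-1}\mathbf{e})$ plus a time-dependent term proportional to $e^{\tilde{\mathbf{A}}(t-t_0)}$. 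Integrating from $t_0$ to $T$ using $\int_{t_0}^{T}e^{\tilde{\mathbf{A}}(t-t_0)}\mathrm{d}t=\tilde{\mathbf{A}}^{-1}\bigl(e^{\tilde{\mathbf{A}}(T-t_0)}-\mathbf{I}\bigr)$ recovers \eqref{condFirstMoMCount}. Formula \eqref{ExpNumbOfJump} for the increment then follows by subtracting the representation at time $T$ from that at time $T+\tau$ and factoring out $e^{\tilde{\mathbf{A}}(T-t_0)}\bigl(e^{\tilde{\mathbf{A}}\tau}-\mathbf{I}\bigr)$.

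The long-run limits \eqref{AsymptEX} and \eqref{statExpNumbOfJump} then drop out directly from the spectral hypothesis on $\tilde{\mathbf{A}}$: diagonalising $\tilde{\mathbf{A}}$ as in the proof of Proposition~\ref{PropStat} shows that $e^{\tilde{\mathbf{A}}(T-t_0)}\to\mathbf{0}$ as $T\to\infty$, so all transient contributions disappear and only the linear-in-$\tau$ and constant pieces survive. I do not foresee a serious technical obstacle here. The one non-routine step is recognising that combining the last-row drift $\mathbf{A}_{[p,\cdot]}$ with the jump-induced $\mathbf{b}^{\top}$ produces exactly the companion matrix $\tilde{\mathbf{A}}$ of \eqref{AtildeMia}; once this is observed, the entire proposition reduces to solving a first-order linear ODE in $\mathbb{R}^{p}$ and then integrating its value against $\mathbf{b}^{\top}$ to obtain the scalar ODE for $\mathbb{E}_{t_0}[N_t]$.
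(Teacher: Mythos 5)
Your proposal is correct and follows essentially the same route as the paper's own proof: the same componentwise application of the generator from Proposition~\ref{Prop4} (jump term vanishing for $i<p$, contributing $\lambda_t$ for $i=p$ so that the last row of the drift matrix becomes $\mathbf{A}_{[p,\cdot]}+\mathbf{b}^{\top}=\tilde{\mathbf{A}}_{[p,\cdot]}$), the same Dynkin/variation-of-constants solution for $\mathbb{E}_{t_0}[X_t]$, the same scalar ODE $\tfrac{\mathrm{d}}{\mathrm{d}t}\mathbb{E}_{t_0}[N_t]=\mu+\mathbf{b}^{\top}\mathbb{E}_{t_0}[X_t]$, and the same spectral limit argument. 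The only difference is the order of presentation (the paper writes the $N_t$ ODE first and then computes $\mathbb{E}_{t_0}[X_t]$), which is immaterial.
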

\begin{proof}
To determine the expected number of jumps in \eqref{condFirstMoMCount} we obtain first the infinitesimal generator of the function $f\left(X_{1,t},\ldots,X_{p,t},N_{t}\right)=N_t$, that is $\mathcal{A}f_t = \lambda_t$ where the conditional intensity $\lambda_t$ is defined in \eqref{eq:CarLambda}. Applying the Dynkin's formula in \eqref{ODEDynkin} we obtain the following ODE
\begin{equation}
\mbox{d}\mathbb{E}_{t_0}\left[N_{t}\right]=	\left[\mu +\mathbf{b}^\top \mathbb{E}_{t_0}\left(X_t\right)\right]\mbox{d}t.
\label{ode:Nt}
\end{equation}
Then, we compute $\mathbb{E}_{t_0}\left[X_t\right]$ that requires a system of infinitesimal generators. In particular, for $i=1,\ldots, p-1$, we have
\begin{equation*}
\mathcal{A}X_{t,i}= X_{t,i+1}
\end{equation*}
and
\begin{equation*}
\mathcal{A}X_{t,p} = \left(\mu+\mathbf{b}^\top X_t\right) + \mathbf{A}_{\left[p,\cdot\right]}X_t = \mu+\overset{p}{\underset{i=1}{\sum}}\left(b_{i-1}-a_{p+1-i}\right)X_{t,i}.
\end{equation*}
Applying \eqref{ODEDynkin}, we get
\begin{equation}
\mbox{d}\mathbb{E}_{t_0}\left[X_{t}\right]=\left(\tilde{\mathbf{A}}\mathbb{E}_{t_0}\left[X_{t}\right]+\mu\mathbf{e}\right)\mbox{d}t
\label{equation3}
\end{equation}
where $\tilde{\mathbf{A}}$ is defined in \eqref{AtildeMia}. 
With the initial condition $X_{t_0}$, the solution of the system in \eqref{equation3} is \eqref{conditionalStateProcMoM}.
Substituting \eqref{conditionalStateProcMoM} in \eqref{ode:Nt} we obtain the following ODE for the expected number of jumps
\begin{equation*}
\mbox{d}\mathbb{E}_{t_0}\left[N_{t}\right]=	\left[\mu\left(1-\mathbf{b}^\top\tilde{\mathbf{A}}^{-1}\mathbf{e}\right)+\mathbf{b}^\top e^{\tilde{\mathbf{A}}\left(t-t_0\right)}\left[X_{t_0}+\tilde{\mathbf{A}}^{-1}\mathbf{e}\mu\right]\right]\mbox{d}t
\end{equation*}
whose solution is in \eqref{condFirstMoMCount} with initial condition $N_{t_0}$.
Using the result in \eqref{condFirstMoMCount} we observe by straightforward calculations that the expected number of jumps in an interval of length $\tau$  reads as in \eqref{ExpNumbOfJump}. Due to the negativity assumption for the real part of the eigenvalues of matrix $\tilde{\mathbf{A}}$, we obtain the asymptotic behaviour in \eqref{AsymptEX} and \eqref{statExpNumbOfJump} as $\lim_{T\rightarrow+\infty}e^{\tilde{\mathbf{A}}T}=\mathbf{0}$ where $\mathbf{0}$ is a $p \times p$ zero matrix (see \eqref{veryImportant}).
\end{proof}
\begin{remark}\label{CAR_1}
The result in \eqref{statExpNumbOfJump} becomes \eqref{ExpFirstMoMStdH} if we consider a CAR(1)-Hawkes with $b_0=\alpha$ and $a_1=\beta$.
\end{remark}
Using the same arguments in \citet[proof of Proposition 4.1, p.~815]{brockwell2006continuous} , all eigenvalues of matrix $\tilde{\mathbf{A}}$ have negative real parts if for some positive integer $r\geq 1$ the following inequality holds
\begin{equation}
\left\|\mathbf{S}^{-1}\mathbf{e}\mathbf{b}^{\top}\mathbf{S}\right\|_{r}< \mathsf{Re}\left(\tilde{\lambda}_{1}\right) 
\label{eqIneq}
\end{equation}
where, in this context, $\left\|\cdot\right\|_{r}$ denotes the natural matrix norm induced by the vector $\mathbb{L}^{r}$-norm. This result comes directly from an application of the Bauer-Fike Theorem (see \citealt{bauer1960norms} for further details) since $\tilde{\mathbf{A}}$ is obtained by perturbing matrix $\mathbf{A}$ as $\tilde{\mathbf{A}}=\mathbf{A}+\mathbf{e}\mathbf{b}^{\top}$. \newline
A sufficient condition for \eqref{eqIneq} is
\begin{equation}
\frac{\sigma_{\max}\left(\mathbf{S}\right)}{\sigma_{\min}\left(\mathbf{S}\right)}\left\|\mathbf{b}\right\|_2<\mathsf{Re}\left(\tilde{\lambda}_{1}\right)
\label{aaaaa}
\end{equation}
where $\left\|\mathbf{b}\right\|_2:=\sqrt{\sum_{i=1}^{p}b^2_{i-1}}$ is the Euclidean norm of $\mathbf{b}$, $\sigma_{\max}\left(\mathbf{S}\right)$ and $\sigma_{\min}\left(\mathbf{S}\right)$ are maximal and minimal singular values of $\mathbf{S}$. In particular, we observe that
\begin{equation}
\left\|\mathbf{S}^{-1}\mathbf{e}\mathbf{b}^{\top}\mathbf{S}\right\|_{2}\leq k_2\left(\mathbf{S}\right)\left\|\mathbf{e}\mathbf{b}^{\top}\right\|_{2}
\label{a}
\end{equation}
and that $k_2\left(\mathbf{S}\right):=\left\|\mathbf{S}\right\|_2\left\|\mathbf{S}^{-1}\right\|_2$, the condition number in 2-norm, can be written as
\begin{equation}
k_2\left(\mathbf{S}\right)=\frac{\sigma_{\max}\left(\mathbf{S}\right)}{\sigma_{\min}\left(\mathbf{S}\right)}.
\label{b}
\end{equation}
Moreover, denoting with $\left\|\mathbf{e}\mathbf{b}^{\top}\right\|_{\text{F}}$ the Frobenius norm of $\mathbf{e}\mathbf{b}^{\top}$, we obtain $\left\|\mathbf{e}\mathbf{b}^{\top}\right\|_{2}\leq \left\|\mathbf{e}\mathbf{b}^{\top}\right\|_{\text{F}}$. Applying the definition of the Frobenius norm we have 
\begin{equation}
\left\|\mathbf{e}\mathbf{b}^{\top}\right\|_{2}\leq \left\|\mathbf{b}\right\|_{\text{2}},
\label{c}
\end{equation} 
and combining \eqref{a}, \eqref{b} and \eqref{c} we get
\begin{equation}
\left\|\mathbf{S}^{-1}\mathbf{e}\mathbf{b}^{\top}\mathbf{S}\right\|_{2}\leq \frac{\sigma_{\max}\left(\mathbf{S}\right)}{\sigma_{\min}\left(\mathbf{S}\right)}\left\|\mathbf{b}\right\|_2.
\end{equation}  
Thus, the inequality in \eqref{aaaaa} implies \eqref{eqIneq}.

\subsection{Simulation and Likelihood Estimation of the CARMA(p,q)-Hawkes}\label{sim_mle_CARMA-Hawkes}

We propose a simulation method for the CARMA(p,q)-Hawkes model following the same idea presented in \citet[Section 4, p.~148]{ozaki1979maximum}.

Suppose that $T_1,\ldots, T_k$, which correspond to time arrivals, are already observed. Then it is possible to simulate the next time arrival $T_{k+1}$ by generating a random number from a standard uniform distribution, i.e., $U \sim \text{Unif}\left(0,1\right)$, and by solving this equation with respect to $u$:
\begin{equation}
\ln\left(U\right) = -\int_{T_k}^{u}\lambda_t \mbox{d}t.
\end{equation}
The conditional intensity $\lambda_t$ can be replaced by \eqref{CondIntCH}
and $X_t$ can be substituted by \eqref{18}
obtaining so
\begin{equation}
  \ln\left(U\right) = -\int_{T_k}^{u} \left[\mu+ \mathbf{b}^\top \int_{0}^{t}e^{\mathbf{A}(t-s)}\mathbf{e} \mbox{d}N_s \right] \mbox{d}t.
\label{Provamia}
\end{equation}
Developing and rearranging the right-hand side of \eqref{Provamia} we have
\begin{equation*}
\begin{split}
\ln\left(U\right) &= -\mu\left(u - T_k\right) - \mathbf{b}^\top \int_{T_k}^{u} \displaystyle\sum^{k}_{i = 1} e^{\mathbf{A}(t-T_i)}\mathbf{e} \mbox{d}t \\
&=-\mu\left(u - T_k\right)   - \mathbf{b}^\top \displaystyle\sum^{k}_{i = 1} e^{\mathbf{A}(T_k - T_i)} \int_{T_k}^{u}  e^{\mathbf{A}(t-T_k)} \mbox{d}t \mathbf{e}\\
&=-\mu\left(u - T_k\right) - \mathbf{b}^\top\left[\displaystyle\sum^{k}_{i = 1}  e^{\mathbf{A}(T_k - T_i)}\right]\mathbf{A}^{-1}\left[e^{\mathbf{A}(u - T_k)}- \mathbf{I}\right]\mathbf{e}
\end{split}
\end{equation*}
where the integral in the second equality is computed using the results in \eqref{int2}. 
Defining $S\left(k\right) := \sum^{k}_{i = 1}  e^{\mathbf{A}(T_k - T_i)}$, we finally get
\begin{equation}
\ln\left(U\right) = -\mu\left(u - T_k\right) - \mathbf{b}^\top S\left(k\right) \mathbf{A}^{-1}\left[e^{\mathbf{A}(u - T_k)}- \mathbf{I}\right]\mathbf{e}.
\end{equation}
The quantity $S\left(k\right)$ can be obtained recursively as follows
\begin{equation*}
\begin{split}
S\left(1\right) &=    \mathbf{I} \\
S\left(i\right) &=    e^{\mathbf{A}(T_i- T_{i-1})} \left[S(i-1)\right] + \mathbf{I}, \quad  i \geq 2.\\
\end{split}
\end{equation*}
Note that a similar recursive expression has been obtained in \cite{ozaki1979maximum} for a Hawkes process with exponential kernel.

%
As follows we present the likelihood of a CARMA(p,q)-Hawkes model. Consider that $\mathbf{\theta} = \left(b_0,\ldots, b_q, a_1,\ldots, a_p\right)$, then the likelihood of a CARMA(p,q)-Hawkes model is given by
\begin{equation}\label{eq: likelihood}
\mathcal{L}\left(\mathbf{\theta},\mu\right)=-\int_{0}^{T_k}\lambda_t \mbox{d}t +  \int_{0}^{T_k} \ln\left(\lambda_t\right) \mbox{d}N_t.
\end{equation}
Exploiting the fact that $\int_{0}^{T_k} \ln\left(\lambda_t\right) \mbox{d}N_t = \sum_{i = 1}^{k} \ln\left(\lambda_{T_i}\right)$, then \eqref{eq: likelihood} can be written as 
\begin{equation}
\mathcal{L}\left(\mathbf{\theta},\mu\right) = -\int^{T_k}_{0} \left[\mu+\mathbf{b}^\top X_{t} \right] \mbox{d}t + \sum_{i = 1}^{k} \ln\left(\lambda_{T_i}\right)
\end{equation}
and recalling once again that $X_t$ can be expressed by \eqref{18} and rearranging the expression we have
\begin{equation}
\mathcal{L}\left(\mathbf{\theta},\mu\right) = -\mu T_k   - \mathbf{b}^\top \int^{T_k}_{0} \int_{0}^{t}  e^{\mathbf{A}(t-s)}\mathbf{e} \mbox{d}N_s \mbox{d}t  + \sum_{i = 1}^{k} \ln\left(\lambda_{T_i}\right).
\end{equation}
Working on the inner integral, the likelihood becomes
\begin{equation}
\begin{split}
\mathcal{L}\left(\mathbf{\theta},\mu\right) 
&= -\mu\left(T_k \right)  - \mathbf{b}^\top \int^{T_k}_{0} \left[\int^{T_k}_{s} e^{\mathbf{A}(t-s)}   \mbox{d}t\right]\mbox{d}N_s \mathbf{e}  + \sum_{i = 1}^{k} \ln\left(\lambda_{T_i}\right),
\end{split}
\end{equation}
while using the results in \eqref{int2} we get
\begin{equation}
\mathcal{L}\left(\mathbf{\theta},\mu\right) = -\mu T_k  - \mathbf{b}^\top \int^{T_k}_{0} \mathbf{A}^{-1}\left[e^{\mathbf{A}(T_k-s)} -  \mathbf{I}\right]\mbox{d}N_s \mathbf{e}  + \sum_{i = 1}^{k} \ln\left(\lambda_{T_i}\right).
\label{lastL}
\end{equation}
Developing the integral in \eqref{lastL}
and recalling that $S\left(k\right) := \sum^{k}_{i = 1}  e^{\mathbf{A}(T_k - T_i)}$, we finally obtain that the likelihood of a CARMA(p,q)-Hawkes model writes
\begin{equation}
 \mathcal{L}\left(\mathbf{\theta},\mu\right) = -\mu T_k  - \mathbf{b}^\top  \mathbf{A}^{-1}\displaystyle S\left(k\right)\mathbf{e} +  k \mathbf{b}^\top  \mathbf{A}^{-1} \mathbf{e}   + \sum_{i = 1}^{k} \ln\left(\lambda_{T_i}\right).
\end{equation}

\section{Autocovariance and Autocorrelation of a CARMA(p,q)-Hawkes process}\label{autocorr_CARMA}
In this section we compute the stationary autocorrelation and autocovariance functions for the number of jumps in non-overlapping time intervals of length $\tau$. To this aim we introduce some quantities that are useful to compute the asymptotic covariance of a CARMA(p,q)-Hawkes process.\newline
The first quantity we introduce is the $\frac{p\left(p+1\right)}{2}\times\frac{p\left(p+1\right)}{2}$  matrix $\tilde{\tilde{\mathbf{A}}}$ defined as follows
\footnotesize
\begin{equation}
\tilde{\tilde{\mathbf{A}}}:=\left[\begin{array}{cccccc}
D^{1}_{[p,p]} 			& U^{1,2}_{[p,p-1]} & 0_{[p,p-2]} 							& \ldots 							& \ldots 										& \ldots \\
L^{2,1}_{[p-1,p]} 	& D^{2}_{[p-1,p-1]} & U^{2,3}_{[p-1,p-2]} 			& 0_{[p-1,p-3]} 			& \ldots										& \ldots \\
\vdots							& \ddots						& \ddots										& \ddots 							& \ddots 										& \ldots \\
L^{j,1}_{[p-j+1,p]} & \ldots						& L^{j,j-1}_{[p-j+1,p-j+2]} & D^{j}_{[p-j+1,p-j+1]} & U^{j,j+1}_{[p-j+1,p-j]} & 0_{[p-j+1,p-j-1]}\\
\vdots							& \ddots						& \ddots										& \ddots 							& \ddots 										& \ldots \\
L^{p,1}_{[1,p]}     &\ldots						& \ldots										& \ldots 							& \ldots 										&D^{p}_{[1,1]}
\end{array}
\right]
\label{myAtildetilde}
\end{equation}
\normalsize
where the square matrices $D^{j}_{[p-j+1,p-j+1]}$, $j=1,\ldots,p-1$, have the following structure
\begin{equation*}
D^{j}_{[p-j+1,p-j+1]}=\left[
\begin{array}{ccccc}
0			 & 2 			& 0 		 & \ldots & 0\\
0 		 & 0 			& 1 		 & \ldots & 0\\
\vdots & \vdots & \ddots & \ddots & \vdots\\
0			 & \ldots	& \ldots & \ldots & 1\\
b_{j-1}-a_{p-j+1} & b_{j}-a_{p-j} & \ldots & \ldots & b_{p-1}-a_{1},\\
\end{array}
\right]
\end{equation*} with $D^p_{[1,1]}=2(b_{p-1}-a_1)$. Matrices $L^{j,i}_{[p-j+1,p-i+1]}$ for $j=2,\ldots,p$ and $i = 1, \ldots, j-1$ are characterized by the entries with the form
\begin{equation*}
L^{j,i}(h,l)=\left\{
\begin{array}{cl}
b_{j-2+i}-a_{p-j+1+(i-1)} & \text{ if } h = p-j+1 \text{, } l = j-i+1 \text{ and } j\neq p  \\
2\left(b_{j-2+i}-a_{p-j+1+(i-1)}\right) & \text{ if } h = p-j+1 \text{, } l = j-i+1 \text{ and } j = p  \\
0 & \text{ otherwise}
\end{array}
\right.                                                                   
\end{equation*}
while matrices $U^{i,i+1}_{[p-i+1,p-i]}$ for $i=1,\ldots,p-1$ have form
\begin{equation*}
U^{i,i+1}_{[p-i+1,p-i]}=\left[ \begin{array}{c}
\mathbf{0}_{[1,p-i]}\\
\mathbf{I}_{[p-i,p-i]}
\end{array}
\right].
\end{equation*}
Here an example of the matrix $\tilde{\tilde{\mathbf{A}}}$ for a CARMA(3,2)-Hawkes model
\begin{equation*}
\tilde{\tilde{\mathbf{A}}}=\left[
\begin{array}{cccccc}
0 			& 2 			& 0 			& 0 			& 0 			& 0	\\
0 			& 0 			& 1 			& 1 			& 0 			& 0	\\
b_0-a_3 & b_1-a_2	& b_2-a_1	& 0 			& 1 			& 0	\\
0				& 0				&	0				&	0					& 2 					& 0 \\
0				& b_0-a_3 & 0				& b_1-a_2		& b_2-a_1 		& 1 \\
0				& 0				& 2(b_0-a_3) & 0 			& 2(b_1-a_2)	& 2(b_2-a_1) \\
\end{array}
\right].
\end{equation*} 
The second quantity introduced is the $p \times \frac{p\left(p+1\right)}{2}$ matrix $\mathbf{B}$ 
defined as: 
\begin{equation}
\mathbf{B}:=\left[
\begin{array}{cccc cccccc}
b_0 		& b_1 	& \ldots & b_{p-1} & 	0   	& \ldots & \ldots  & 0 & \ldots &0\\
0 			& b_0 	& \ldots & 0			 &  b_1 	& \ldots & b_{p-1} & 0 & \ldots & 0\\
\vdots	&\ddots & \ddots & \ddots  & \ddots & \ddots & \ddots  & \ddots & \ldots & 0\\
0 		  &\ldots &	0 		 & b_0     & 0	    & \ldots & b_1     &  0     & \ldots & b_{p-1}\\       
\end{array}
\right]
\label{MyMatrixB}
\end{equation}
where the generic $i$-th row is the result of a row concatenation of $p$ vectors with dimensions $p$, $p-1$, $\ldots$, $p-i$, $\ldots 1$, respectively. The first $i-1$ vectors have zero entries except the element in position $i$ that coincides with $b_{i-1}$, the vector with dimension $p-i$ contains the elements $b_i,\ldots, b_{p-i}$ and the remaining vectors have zero entries. 


For example, in the case of a CARMA(3,2)-Hawkes model, the structure of matrix $\mathbf{B}$ reads
 \begin{eqnarray*}
\mathbf{B}=\left[
\begin{array}{cccccc}
b_0 & b_1 & b_2 & 0 	& 0 	& 0		\\
0   & b_0 & 0		& b_1 & b_2 & 0		\\
0		& 0		&	b_0 & 0		& b_1 & b_2	\\
\end{array}\right].
\end{eqnarray*}
The third quantity is the $\frac{p\left(p+1\right)}{2}\times p$ matrix $\tilde{\mathbf{C}}$ in which the entry in the $i-$ th row and in $j-$ th column has the following structure
\begin{equation}
c_{i,j}:=\left\{
\begin{array}{lcl}
0 & \text{if} & i \neq j\left(p-\frac{j-1}{2}\right) \text{ and } i\neq \frac{p\left(p+1\right)}{2}\\
\mu  & \text{if} & i = j\left(p-\frac{j-1}{2}\right) \text{ and } i\neq \frac{p\left(p+1\right)}{2}\\
b_{j-1} & \text{if} & i = \frac{p\left(p+1\right)}{2} \text{ and } j \neq p \\
2\mu+ b_{p-1} & \text{if} & i = \frac{p\left(p+1\right)}{2} \text{ and } j = p \\
\end{array}
\right. .
\label{MyMatrixC}
\end{equation}
Let $H$ be a $p\times 1$ vector. Then we define the operator $vlt\left(\cdot\right)$ as a function that transforms the $p\times p$ matrix $HH^{\top}$ into a $\frac{p\left(p+1\right)}{2}$ vector containing the lower triangular part of the product $HH^{\top}$. Specifically:
\begin{equation}
vlt\left(HH^{\top}\right):=\left[\underset{\text{p entries}}{\underbrace{H_{1}H_{1},\ldots, H_{p}H_{1}}}, \underset{\text{p-1 entries}}{\underbrace{H_{2}H_{2},\ldots, H_{p}H_{2}}}, \ldots, \underset{\text{p-i+1 entries}}{\underbrace{H_{i}H_{i},\ldots, H_{p}H_{i}}},\ldots, H_{p}H_{p} \right]^{\top}.
\label{VectrillMy}
\end{equation} 

\subsection{Conditions for existence of stationary autocovariance function}
We rewrite the quantity $\mathbb{E}_{t_0}\left[X_T X_T^{\top}\right]\mathbf{b}$ using the $vlt\left(\cdot\right)$ operator defined in \eqref{VectrillMy}.
\begin{lemma}
\label{lemmavecttrill}
The following identity holds true
\begin{equation}
\mathbb{E}_{t_0}\left[X_T X_T^{\top}\right]\mathbf{b} =\mathbf{B} vlt\left(\mathbb{E}_{t_0}\left(X_T X_T^{\top}\right)\right)
\label{IdentEXX_Bvlt}
\end{equation}
where the matrix $\mathbf{B}$ is defined in \eqref{MyMatrixB} and the operator $vlt\left(\cdot\right)$ is defined as in \eqref{VectrillMy}. Moreover:
\begin{eqnarray}
 vlt\left(\mathbb{E}_{t_0}\left(X_T X_T^{\top}\right)\right)&=& e^{\tilde{\tilde{\mathbf{A}}}\left(T-t_0\right)}vlt\left(X_{t_0}X_{t_0}^{\top}\right)+\left[e^{\tilde{\tilde{\mathbf{A}}}\left(T-t_0\right)}-\mathbf{I}\right]\tilde{\tilde{\mathbf{A}}}^{-1}\mu\left(\tilde{\mathbf{e}}-\tilde{C}\tilde{\mathbf{A}}^{-1}\mathbf{e}\right) \nonumber\\
&+& e^{\tilde{\tilde{\mathbf{A}}}T}\left[ \int_{t_0}^{T}e^{-\tilde{\tilde{\mathbf{A}}}t}\tilde{\mathbf{C}}e^{\tilde{\mathbf{A}}t}\mbox{d}t\right] e^{-\tilde{\mathbf{A}}t_0}\left[X_{t_0}+\tilde{\mathbf{A}}^{-1}\mathbf{e}\mu\right].
\label{vltXX}
\end{eqnarray}
\end{lemma}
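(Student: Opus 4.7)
The plan is to split the lemma into two parts. The first identity is purely algebraic: the product $(XX^{\top})\mathbf{b}$ has $i$-th entry $X_i \sum_k b_{k-1} X_k$, which is a linear combination of the symmetric products $X_i X_k$. Since each such product appears exactly once in $vlt(XX^{\top})$, I would verify by direct inspection that the $i$-th row of $\mathbf{B}$ in \eqref{MyMatrixB} selects precisely those coordinates of the $vlt$ vector that involve $X_i$, with coefficients $b_0,\ldots,b_{p-1}$ placed to reproduce $X_i\,\mathbf{b}^{\top}X$. This reduces to a position-by-position check matching the nonzero pattern of $\mathbf{B}$ to the ordering in \eqref{VectrillMy}.

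For the second identity, my strategy is to derive a linear inhomogeneous ODE for $v(t):=vlt(\mathbb{E}_{t_0}[X_t X_t^{\top}])$ via Dynkin's formula. I would apply the infinitesimal generator of Proposition \ref{Prop4} to each coordinate function $f_{ij}(X,N)=X_i X_j$ with $j\leq i$, splitting by cases on whether $i$ or $j$ equals $p$. The drift term $\nabla_p f_{ij}^{\top}\mathbf{A}X_t$ produces only quadratic-in-$X$ expressions by the companion structure of $\mathbf{A}$: for $i<p$ it gives $X_j X_{i+1}+X_i X_{j+1}$ (encoding the $D^j$ and $U^{j,j+1}$ blocks), while for $i=p$ it produces cross terms $-X_j\sum_l a_{p+1-l}X_l$ feeding the $L^{p,i}$ rows. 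The jump term $\lambda_t[f_{ij}(\ldots X_p+1\ldots,N+1)-f_{ij}(X,N)]$ contributes zero when $i<p$, but for $i=p$ it adds $\lambda_t X_j=\mu X_j+X_j\mathbf{b}^{\top}X$ (for $j<p$) or $\lambda_t(2X_p+1)$ (for $j=p$); expanding $\lambda_t=\mu+\mathbf{b}^{\top}X$ produces the extra quadratic pieces $b_{l-1}$ that combine with $-a_{p+1-l}$ to form the $(b-a)$ entries of the last rows of each $D^j$ and of the $L^{j,i}$ blocks, the linear-in-$X$ pieces that assemble into $\tilde{\mathbf{C}}X_t$ as in \eqref{MyMatrixC}, and a single constant $\mu$ arising only from $\mathcal{A}(X_p^2)$, which corresponds to $\mu\tilde{\mathbf{e}}$ with $\tilde{\mathbf{e}}$ the last canonical basis vector of $\mathbb{R}^{p(p+1)/2}$.

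Once the terms are matched against \eqref{myAtildetilde} and \eqref{MyMatrixC}, Dynkin's formula \eqref{ODEDynkin} yields
\begin{equation*}
\frac{\mathrm{d}}{\mathrm{d}t} v(t)=\tilde{\tilde{\mathbf{A}}}\,v(t)+\tilde{\mathbf{C}}\,\mathbb{E}_{t_0}[X_t]+\mu\,\tilde{\mathbf{e}},\qquad v(t_0)=vlt(X_{t_0}X_{t_0}^{\top}).
\end{equation*}
Substituting $\mathbb{E}_{t_0}[X_t]=e^{\tilde{\mathbf{A}}(t-t_0)}[X_{t_0}+\tilde{\mathbf{A}}^{-1}\mathbf{e}\mu]-\tilde{\mathbf{A}}^{-1}\mathbf{e}\mu$ from Proposition \ref{prop5} splits the forcing into a constant part $\mu(\tilde{\mathbf{e}}-\tilde{\mathbf{C}}\tilde{\mathbf{A}}^{-1}\mathbf{e})$ and a time-varying part $\tilde{\mathbf{C}}\,e^{\tilde{\mathbf{A}}(t-t_0)}[X_{t_0}+\tilde{\mathbf{A}}^{-1}\mathbf{e}\mu]$. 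Variation of parameters with integrating factor $e^{-\tilde{\tilde{\mathbf{A}}}t}$ then integrates the constant part to $[e^{\tilde{\tilde{\mathbf{A}}}(T-t_0)}-\mathbf{I}]\tilde{\tilde{\mathbf{A}}}^{-1}\mu(\tilde{\mathbf{e}}-\tilde{\mathbf{C}}\tilde{\mathbf{A}}^{-1}\mathbf{e})$, and the time-varying part to $e^{\tilde{\tilde{\mathbf{A}}}T}\bigl[\int_{t_0}^{T}e^{-\tilde{\tilde{\mathbf{A}}}t}\tilde{\mathbf{C}}\,e^{\tilde{\mathbf{A}}t}\mathrm{d}t\bigr]e^{-\tilde{\mathbf{A}}t_0}[X_{t_0}+\tilde{\mathbf{A}}^{-1}\mathbf{e}\mu]$, producing exactly \eqref{vltXX}.

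The main obstacle I expect is the combinatorial bookkeeping in step two: tracking the bijection between index pairs $(i,j)$ with $j\leq i$ and positions in $vlt$ so that the cross-derivative drift terms fill the $L^{j,i}$ and $U^{i,i+1}$ blocks correctly, and that the doubled coefficients $2(b_{l-1}-a_{p+1-l})$ in the bottom row of $\tilde{\tilde{\mathbf{A}}}$ (from $\mathcal{A}(X_p^2)$, which carries both the $2X_p$ factor from the jump and the symmetric $2X_p(\mathbf{A}X)_p$ from the drift) appear with the right factors of $2$. Verifying that \emph{all} of the structural entries of \eqref{myAtildetilde} arise correctly from the case split is where essentially all of the work lies; the solution of the ODE itself is routine linear-systems theory.
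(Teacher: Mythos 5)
Your proposal is correct and follows essentially the same route as the paper: the first identity is a direct bookkeeping check on the rows of $\mathbf{B}$ against the ordering in \eqref{VectrillMy}, and the second is obtained by applying the generator of Proposition \ref{Prop4} to each product $X_iX_j$ (with the same case split at $i=p$ and the same sources of the factor $2$ in $\mathcal{A}X_p^2$), assembling the linear ODE $\mbox{d}v=\bigl[\mu\tilde{\mathbf{e}}+\tilde{\mathbf{C}}\mathbb{E}_{t_0}(X_t)+\tilde{\tilde{\mathbf{A}}}v\bigr]\mbox{d}t$, and solving by variation of parameters after substituting \eqref{conditionalStateProcMoM}. No gaps; the combinatorial matching you flag as the main labor is exactly where the paper's proof also spends its effort.
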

\begin{proof}
Using the definition of matrix $\mathbf{B}$ in \eqref{MyMatrixB}, the identity in \eqref{IdentEXX_Bvlt} is straightforward. To show the result in \eqref{vltXX}, we need first to compute the infinitesimal generator for each component of $vlt\left(X_tX_t^{\top}\right)$. From the definition in \eqref{VectrillMy} we identify $p$ blocks where the dimension of each block decreases by one unit. More precisely, the $j-$th block has $p-j+1$ elements. Considering the first block (i.e., $j=1$) we have $p$ infinitesimal generators obtained applying the result in \eqref{GLobalInfGenExp} of Proposition \ref{Prop4}. For the first element in the first block, we have $\mathcal{A}X_{t,1}^2= 2 X_{t,2}X_{t,1}$.
While for the $i-$th element in the first block with $i=2,\ldots, p-1$ we get $\mathcal{A}X_{t,i}X_{t,1}= X_{t,i}X_{t,2} +X_{t,i+1}X_{t,1}$
and finally
\begin{eqnarray*}
\mathcal{A}X_{t,p}X_{t,1}&=&\lambda_t\left[\left(X_{t,p}+1\right)X_{t,1}-X_{t,p}X_{t,1}\right]+ X_{t,p}X_{t,2} +A_{\left[p,\cdot\right]}X_{t}X_{t,1}\\
&=&\mu X_{t,1}+X_{t,p}X_{t,2}+\left(\mathbf{b}^{\top}+A_{\left[p,\cdot\right]}\right)X_t X_{t,1}. 
\end{eqnarray*}
For a generic $j-$th block, we get $p-j+1$ infinitesimal generators. In particular for $i=j$ we have $\mathcal{A}X_{t,j}^2 = 2 X_{t,j}X_{t,j+1}$.
For $i=j+1,\ldots, p-1$ we have $\mathcal{A}X_{t,i}X_{t,j} =  X_{t,i}X_{t,j+1}+X_{t,j}X_{t,i+1}$
and
\begin{eqnarray*}
\mathcal{A}X_{t,p}X_{t,j} & = &  \lambda_t\left[\left(X_{t,p}+1\right)X_{t,j}-X_{t,p}X_{t,j}\right]+ X_{t,p}X_{t,j+1} +A_{\left[p,\cdot\right]}X_{t}X_{t,j}\\
& = & \mu X_{t,j}+X_{t,p}X_{t,j+1}+\left(\mathbf{b}^{\top}+A_{\left[p,\cdot\right]}\right)X_t X_{t,j}.
\end{eqnarray*}
The last block contains only one infinitesimal generator of the form
\begin{eqnarray*}
\mathcal{A}X_{t,p}^2  & = &  \lambda_t\left[\left(X_{t,p}+1\right)^2-X_{t,p}^2\right]+2 A_{\left[p,\cdot\right]}X_t X_{t,p}\\
&=& \mu+\mathbf{b}^{\top}X_{t}+2\mu X_{t,p}+2\left(\mathbf{b}^{\top}+A_{\left[p,\cdot\right]}\right)X_t X_{t,p}.
\end{eqnarray*}
Using the Dynkin's formula in \eqref{ODEDynkin} we obtain the following system of linear ODE's:
\begin{equation}
\mbox{d}vlt\left(\mathbb{E}_{t_0}\left(X_tX_{t}^{\top}\right)\right)=\left[\mu \tilde{\mathbf{e}}+ \tilde{\mathbf{C}} \mathbb{E}_{t_0}\left(X_{t}\right)+\tilde{\tilde{\mathbf{A}}}vlt\left(\mathbb{E}_{t_0}\left(X_tX_{t}^{\top}\right)\right)\right] \mbox{d} t
\label{eq:ODEXXT}
\end{equation}
where the $\frac{p\left(p+1\right)}{2}$ vector $\tilde{\mathbf{e}}$ is composed of zero entries except the last position where the element is one; $\tilde{\tilde{\mathbf{A}}}$ and $\tilde{\mathbf{C}}$ are defined in \eqref{myAtildetilde} and \eqref{MyMatrixC} respectively.\newline
The first step is to solve the ODE defined in \eqref{eq:ODEXXT} whose solution has the following form
\begin{eqnarray}
vlt\left(\mathbb{E}_{t_0}\left(X_TX_{T}^{\top}\right)\right)&=& e^{\tilde{\tilde{\mathbf{A}}}\left(T-t_0\right)}vlt\left(X_{t_0}X_{t_0}^{\top}\right)+e^{\tilde{\tilde{\mathbf{A}}}T}\int_{t_0}^{T}e^{-\tilde{\tilde{\mathbf{A}}}t}\left[\mu \tilde{\mathbf{e}}+\tilde{\mathbf{C}}\mathbb{E}_{t_0}\left(X_{t}\right)\right]\mbox{d}t\nonumber\\
&=& e^{\tilde{\tilde{\mathbf{A}}}\left(T-t_0\right)}vlt\left(X_{t_0}X_{t_0}^{\top}\right)+\left[e^{\tilde{\tilde{\mathbf{A}}}\left(T-t_0\right)}-\mathbf{I}\right]\tilde{\tilde{\mathbf{A}}}^{-1}\mu\tilde{\mathbf{e}}\nonumber\\
&+&e^{\tilde{\tilde{\mathbf{A}}}T}\int_{t_0}^{T}e^{-\tilde{\tilde{\mathbf{A}}}t}\tilde{\mathbf{C}}\mathbb{E}_{t_0}\left(X_{t}\right)\mbox{d}t.
\label{abab}
\end{eqnarray}
We also observe that\begin{eqnarray}
e^{\tilde{\tilde{\mathbf{A}}}T}\int_{t_0}^{T}e^{-\tilde{\tilde{\mathbf{A}}}t}\tilde{\mathbf{C}}\mathbb{E}_{t_0}\left(X_{t}\right)\mbox{d}t
&=& e^{\tilde{\tilde{\mathbf{A}}}T}\int_{t_0}^{T}e^{-\tilde{\tilde{\mathbf{A}}}t}\tilde{\mathbf{C}}\left[e^{\tilde{\mathbf{A}}\left(t-t_0\right)}\left[X_{t_0}+\tilde{\mathbf{A}}^{-1}\mathbf{e}\mu\right]-\tilde{\mathbf{A}}^{-1}\mathbf{e}\mu\right]\mbox{d}t\nonumber\\
&=&  e^{\tilde{\tilde{\mathbf{A}}}T}\int_{t_0}^{T}e^{-\tilde{\tilde{\mathbf{A}}}t}\tilde{\mathbf{C}}e^{\tilde{\mathbf{A}}t}\mbox{d}t e^{-\tilde{\mathbf{A}}t_0}\left[X_{t_0}+\tilde{\mathbf{A}}^{-1}\mathbf{e}\mu\right] \nonumber\\
&-& \left[e^{\tilde{\tilde{\mathbf{A}}}\left(T-t_0\right)}-\mathbf{I}\right]\tilde{\tilde{\mathbf{A}}}^{-1}\tilde{C}\tilde{\mathbf{A}}^{-1}\mathbf{e}\mu.
\label{ababa}
\end{eqnarray}
Substituting \eqref{ababa} into \eqref{abab} we obtain the result in \eqref{vltXX}.
\end{proof}
\ref{proofs} contains the proofs of the following two propositions on the variance and covariance of the number of jumps that occur in two non-overlapping time intervals of the same length for a CARMA(p,q)-Hawkes model.
\begin{proposition}
\label{PropAutocovCARMAHAWKES}
Under Assumption \ref{Ass1} and supposing that all eigenvalues of $\tilde{\mathbf{A}}$ and $\tilde{\tilde{\mathbf{A}}}$ have negative real parts, the long-run covariance $Cov\left(\tau,\delta\right)$ defined as in \eqref{covhaw} for a CARMA(p,q)-Hawkes process has the following form:
\begin{equation}
Cov\left(\tau,\delta\right)=\mathbf{b}^{\top}\tilde{\mathbf{A}}^{-1}\left[e^{\tilde{\mathbf{A}}\tau}-\mathbf{I}\right]e^{\tilde{\mathbf{A}}\delta} g_{\infty}\left(\tau\right)
\label{AutoCovCARMAHAWKES}
\end{equation}
where $g_{\infty}\left(\tau\right)$ is defined as
\begin{equation}
g_{\infty}\left(\tau\right) :=\left(\mathbf{I}-e^{\tilde{\mathbf{A}}\tau}\right) \tilde{\mathbf{A}}^{-1}\mu \left[\mathbf{e}\mathbf{b}^\top \tilde{\mathbf{A}}^{-1}\mathbf{e}-\mathbf{e} +\tilde{\mathbf{A}}^{-1}\mathbf{e}\mu\left(\mathbf{b}^{\top}\tilde{\mathbf{A}}^{-1}\mathbf{e}\right) +\mathbf{B}\tilde{\tilde{\mathbf{A}}}^{-1}\left(\tilde{\mathbf{e}}-\tilde{\mathbf{C}}\tilde{\mathbf{A}}^{-1}\mathbf{e}\right)\right].
\label{ginftytau}
\end{equation}
\end{proposition}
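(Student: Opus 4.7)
The plan is to reduce the autocovariance of non-overlapping increments to the cross-covariance between one increment and the terminal state vector, and then evaluate that cross-covariance via a Dynkin-type ODE whose forcing terms are already known from Proposition \ref{prop5} and Lemma \ref{lemmavecttrill}.

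First, I would condition on $\mathcal{F}_{t_3}$ and invoke \eqref{ExpNumbOfJump}: since $\mathbb{E}_{t_3}[N_{t_4}-N_{t_3}]$ is affine in $X_{t_3}$, only the stochastic part survives in the covariance, giving
\begin{equation*}
Cov(N_{t_2}-N_{t_1},N_{t_4}-N_{t_3}) = \mathbf{b}^{\top}\tilde{\mathbf{A}}^{-1}\bigl(e^{\tilde{\mathbf{A}}\tau}-\mathbf{I}\bigr)\,Cov(N_{t_2}-N_{t_1},X_{t_3}).
\end{equation*}
A further conditioning on $\mathcal{F}_{t_2}$, combined with \eqref{conditionalStateProcMoM} for $\mathbb{E}_{t_2}[X_{t_3}]$, absorbs the intermediate state into a Markov factor, yielding $Cov(N_{t_2}-N_{t_1},X_{t_3}) = e^{\tilde{\mathbf{A}}\delta}\,Cov(N_{t_2}-N_{t_1},X_{t_2})$. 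Matching with \eqref{AutoCovCARMAHAWKES}, it then suffices to prove $\lim_{t_1\to+\infty}Cov(N_{t_2}-N_{t_1},X_{t_2}) = g_{\infty}(\tau)$.

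Second, setting $M_t := N_t - N_{t_1}$, I would apply the componentwise generator of Proposition \ref{Prop4} to $f(X_t,N_t)=M_t X_{t,i}$ and repackage the jump and drift contributions into the vector identity
\begin{equation*}
\mathcal{A}(M_tX_t) = \tilde{\mathbf{A}}(M_tX_t) + \mu\mathbf{e}\,M_t + \mu X_t + X_tX_t^{\top}\mathbf{b} + (\mu+\mathbf{b}^{\top}X_t)\mathbf{e},
\end{equation*}
using $\mathbf{A}+\mathbf{e}\mathbf{b}^{\top}=\tilde{\mathbf{A}}$. Dynkin's formula \eqref{DF} then produces a first-order linear ODE for $\mathbb{E}_{t_1}[M_tX_t]$ with vanishing initial condition at $t=t_1$, whose forcing is supplied explicitly by Proposition \ref{prop5} (for $\mathbb{E}_{t_1}[M_t]$ via \eqref{condFirstMoMCount} and for $\mathbb{E}_{t_1}[X_t]$ via \eqref{conditionalStateProcMoM}) and by Lemma \ref{lemmavecttrill} via the identity $\mathbb{E}_{t_1}[X_tX_t^{\top}]\mathbf{b}=\mathbf{B}\,vlt(\mathbb{E}_{t_1}[X_tX_t^{\top}])$.

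Finally, I would integrate the ODE by variation of constants, take unconditional expectation, and pass to the limit $t_1\to+\infty$. Assumption \ref{Ass1} together with the negativity of the eigenvalues of $\tilde{\mathbf{A}}$ and $\tilde{\tilde{\mathbf{A}}}$ kills all exponential transients, so that $\mathbb{E}[X_s]$ and $vlt(\mathbb{E}[X_sX_s^{\top}])$ collapse to the stationary values $-\tilde{\mathbf{A}}^{-1}\mathbf{e}\mu$ (cf. \eqref{AsymptEX}) and $-\tilde{\tilde{\mathbf{A}}}^{-1}\mu(\tilde{\mathbf{e}}-\tilde{\mathbf{C}}\tilde{\mathbf{A}}^{-1}\mathbf{e})$ (read off from \eqref{vltXX} by sending the transient pieces to zero), while $\mathbb{E}[M_s]$ is linear in $s-t_1$. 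The convolution splits into a constant and a time-linear piece: using $\int_0^{\tau}e^{\tilde{\mathbf{A}}(\tau-u)}\,du = \tilde{\mathbf{A}}^{-1}(e^{\tilde{\mathbf{A}}\tau}-\mathbf{I})$ and the by-parts identity $\int_0^{\tau}u\,e^{\tilde{\mathbf{A}}(\tau-u)}\,du = -\tau\tilde{\mathbf{A}}^{-1}+\tilde{\mathbf{A}}^{-2}(e^{\tilde{\mathbf{A}}\tau}-\mathbf{I})$, the linearly growing term $-\tau\tilde{\mathbf{A}}^{-1}$ cancels exactly against the subtracted product $\mathbb{E}[N_{t_2}-N_{t_1}]\,\mathbb{E}[X_{t_2}]$, and the surviving exponential pieces regroup into the common prefactor $(\mathbf{I}-e^{\tilde{\mathbf{A}}\tau})\tilde{\mathbf{A}}^{-1}\mu$ multiplying the bracketed vector of \eqref{ginftytau}, where each of the four summands corresponds one-to-one to the four forcing terms of the ODE. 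The main obstacle is precisely this last bookkeeping: two different matrix exponentials appear in the forcing, the term carrying $\mathbb{E}[M_s]$ is genuinely linear in $s-t_1$, and the $t_1$-divergent contributions must conspire to cancel so that the limit exists and decays exponentially in $\delta$; once the minus signs are tracked carefully through the variation-of-constants integral the compact form \eqref{ginftytau} emerges.
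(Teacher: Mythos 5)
Your proposal is correct and follows essentially the same route as the paper's proof: reduce the covariance via the tower property and the affine form of \eqref{ExpNumbOfJump} to the cross-covariance of an increment with the state vector, derive the generator of $N_tX_t$ (your $M_tX_t$ differs only by the constant $N_{t_1}X_t$), solve the resulting linear ODE by variation of constants, and pass to the stationary limit where the $\tau$-linear piece from the $\mu\mathbf{e}\,\mathbb{E}[M_s]$ forcing cancels against the product of means. The only cosmetic differences are your two-step conditioning (on $\mathcal{F}_{t_3}$ then $\mathcal{F}_{t_2}$) versus the paper's single conditioning at $t+\tau$, and that the paper makes the vanishing of the mixed-exponential transients (e.g.\ $\int e^{\tilde{\mathbf{A}}(\cdot)}\mathbf{B}e^{\tilde{\tilde{\mathbf{A}}}(\cdot)}$) explicit via the block-matrix results of \ref{Appendix1}, which you invoke only implicitly.
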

\begin{proposition} 
\label{PropVarCARMAHAWKES}
Under the same assumptions as in Proposition \ref{PropAutocovCARMAHAWKES}, the long-run variance $Var\left(\tau\right)$ of the number of jumps in a interval with length $\tau$ for a CARMA(p,q)-Hawkes process, defined as in \eqref{varhawkes}, has the following form:
\begin{eqnarray}
Var\left(\tau\right)&=&\left(1-\mathbf{b}^{\top}\tilde{\mathbf{A}}^{-1}\mathbf{e}\right)\left(1-2\mathbf{b}^{\top}\tilde{\mathbf{A}}^{-1}\mathbf{e}\right)\mu\tau+2\mathbf{b}^{\top}\tilde{\mathbf{A}}^{-1}\tilde{\mathbf{A}}^{-1}\mathbf{e}\tau\mu^2\left(\mathbf{b}^{\top}\tilde{\mathbf{A}}^{-1}\mathbf{e}\right)\nonumber\\
&+&2\mathbf{b}^{\top}\tilde{\mathbf{A}}^{-1}\mathbf{B}\tilde{\tilde{\mathbf{A}}}^{-1}\mu\left(\tilde{\mathbf{e}}-\tilde{\mathbf{C}}\tilde{\mathbf{A}}^{-1}\mathbf{e}\right)\tau-2\mathbf{b}^{\top}\tilde{\mathbf{A}}^{-1}\left[e^{\tilde{\mathbf{A}}\tau}-\mathbf{I}\right]h_{\infty}\left(0\right)
\label{AsymptoticVarianceJumps}
\end{eqnarray}
where $h_{\infty}\left(0\right)$ is defined as
\begin{equation}
h_{\infty}\left(0\right):=-\tilde{\mathbf{A}}^{-1}\mathbf{e}\mu\left(1-\mathbf{b}^{\top}\tilde{\mathbf{A}}^{-1}\mathbf{e}\right)+\tilde{\mathbf{A}}^{-1}\tilde{\mathbf{A}}^{-1}\mathbf{e}\mu^2\mathbf{b}^{\top}\tilde{\mathbf{A}}^{-1}\mathbf{e}+\tilde{\mathbf{A}}^{-1}\mathbf{B}\tilde{\tilde{\mathbf{A}}}^{-1}\mu\left(\tilde{\mathbf{e}}-\tilde{\mathbf{C}}\tilde{\mathbf{A}}^{-1}\mathbf{e}\right).
\end{equation}
\end{proposition}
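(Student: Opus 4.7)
The plan is to extend the infinitesimal-generator strategy already used in Propositions \ref{prop5} and \ref{PropAutocovCARMAHAWKES} one level higher, building a hierarchy of ODEs terminating at $\mathbb{E}_{t_0}\!\left[(N_{T}-N_{t_0})^{2}\right]$, and then to take the stationary limit. Throughout, I will fix $t_{0}$ and set $\Delta N_{t}:=N_{t}-N_{t_{0}}$, so that $\Delta N_{t_{0}}=0$ and the initial condition for the forthcoming ODE is zero.

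First I would apply the infinitesimal generator \eqref{GLobalInfGenExp} to the cross-functions $f_{i}(X_{t},N_{t})=X_{i,t}\,\Delta N_{t}$ for $i=1,\ldots,p$. Using the fact that a jump sends $(X_{t},N_{t})$ to $(X_{t}+\mathbf{e},N_{t}+1)$, these generators read
\begin{equation*}
\mathcal{A}(X_{i,t}\Delta N_{t})=\lambda_{t}X_{i,t}+X_{i+1,t}\Delta N_{t},\quad i<p,\qquad \mathcal{A}(X_{p,t}\Delta N_{t})=\lambda_{t}(X_{p,t}+\Delta N_{t}+1)+\mathbf{A}_{[p,\cdot]}X_{t}\,\Delta N_{t}.
\end{equation*}
Collecting these identities and recalling $\lambda_{t}=\mu+\mathbf{b}^{\top}X_{t}$, Dynkin's formula \eqref{ODEDynkin} produces a first-order linear system for the vector $\mathbb{E}_{t_{0}}[X_{t}\Delta N_{t}]$ whose companion matrix is $\tilde{\mathbf{A}}$ and whose non-homogeneous forcing involves $\mu\mathbb{E}_{t_{0}}[X_{t}]$, $\mathbb{E}_{t_{0}}[X_{t}X_{t}^{\top}]\mathbf{b}$, and $\mu\mathbb{E}_{t_{0}}[\Delta N_{t}]$. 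All three driving terms are in hand: the first and third from Proposition \ref{prop5}, the second from Lemma \ref{lemmavecttrill} via \eqref{IdentEXX_Bvlt}--\eqref{vltXX}.

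Next I would apply Dynkin's formula to $f(X_{t},N_{t})=(\Delta N_{t})^{2}$. The generator collapses to $\lambda_{t}(2\Delta N_{t}+1)$, so the ODE for $\mathbb{E}_{t_{0}}[(\Delta N_{t})^{2}]$ only involves quantities already computed, namely $\mathbb{E}_{t_{0}}[\Delta N_{t}]$, $\mathbf{b}^{\top}\mathbb{E}_{t_{0}}[X_{t}]$ and $\mathbf{b}^{\top}\mathbb{E}_{t_{0}}[X_{t}\Delta N_{t}]$. Integrating over $[t_{0},T]$ yields a closed-form expression for $\mathbb{E}_{t_{0}}[(N_{T}-N_{t_{0}})^{2}]$ assembled from exponentials of $\tilde{\mathbf{A}}$ and $\tilde{\tilde{\mathbf{A}}}$ acting on the initial state $X_{t_{0}}$. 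Subtracting the square of \eqref{ExpNumbOfJump} gives the conditional variance of the increment.

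Finally I would let $t_{0}$ play the role of the current time $t$ with $T=t+\tau$ and let $t\to+\infty$. By the negativity of the real parts of the eigenvalues of $\tilde{\mathbf{A}}$ and $\tilde{\tilde{\mathbf{A}}}$, every occurrence of $e^{\tilde{\mathbf{A}}(t-t_{0})}$ and $e^{\tilde{\tilde{\mathbf{A}}}(t-t_{0})}$ whose exponent depends on $t-t_{0}$ vanishes, precisely as in the proofs of \eqref{AsymptEX} and \eqref{statExpNumbOfJump}. Grouping what survives and identifying the coefficient that plays the role of the stationary mean of $X_{t}(N_{t}-N_{t_{0}})$ at lag zero with the quantity $h_{\infty}(0)$, the expression in \eqref{AsymptoticVarianceJumps} follows. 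I expect the main obstacle to be purely algebraic: carefully tracking cancellations among the $\tilde{\mathbf{A}}^{-1}$, $\tilde{\tilde{\mathbf{A}}}^{-1}$, $\mathbf{B}$ and $\tilde{\mathbf{C}}$ factors so that the non-decaying terms combine into the two ballistic contributions proportional to $\tau$ and into the transient $-2\mathbf{b}^{\top}\tilde{\mathbf{A}}^{-1}[e^{\tilde{\mathbf{A}}\tau}-\mathbf{I}]h_{\infty}(0)$, rather than any analytic difficulty, since the hierarchy of ODEs is linear and the stationary limit follows from the spectral hypothesis already imposed.
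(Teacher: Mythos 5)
Your generator computations are correct and the bulk of your strategy --- a hierarchy of Dynkin ODEs for the cross moment $\mathbb{E}_{t_0}[X_tN_t]$ (companion matrix $\tilde{\mathbf{A}}$, forced by $\mu\mathbb{E}_{t_0}[X_t]$, by $\mathbb{E}_{t_0}[X_tX_t^{\top}]\mathbf{b}$ via Lemma \ref{lemmavecttrill}, and by $\mu\mathbb{E}_{t_0}[N_t]$), followed by Dynkin applied to the squared count with generator $\lambda_t(2N_t+1)$ --- is exactly the paper's route. The gap is in the final limiting step. The long-run variance in \eqref{varhawkes} is the \emph{unconditional} variance of $N_{t+\tau}-N_t$ as $t\to+\infty$. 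You compute the second moment of the increment $\Delta N_T=N_T-N_{t_0}$ started \emph{at the conditioning time} and then propose to ``let $t_0$ play the role of the current time $t$'' and send $t\to+\infty$. But $\mathsf{Var}_{t}\left(N_{t+\tau}-N_t\right)$ is a nondegenerate function of the random state $X_t$: every matrix exponential appearing in it has exponent $\tau$, not $t-t_0$, so nothing decays in your limit and the quantity does not converge to a deterministic constant. By the law of total variance you would be missing the contribution $\lim_{t\to+\infty}\mathsf{Var}\left(\mathbb{E}_t\left[N_{t+\tau}-N_t\right]\right)=\mathbf{b}^{\top}\tilde{\mathbf{A}}^{-1}\left(e^{\tilde{\mathbf{A}}\tau}-\mathbf{I}\right)\Sigma_X\left(e^{\tilde{\mathbf{A}}\tau}-\mathbf{I}\right)^{\top}\left(\tilde{\mathbf{A}}^{-1}\right)^{\top}\mathbf{b}$, where $\Sigma_X$ is the stationary covariance of $X$; this is strictly positive in general, so the answer would come out too small.

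The repair is the three-time-point structure of the paper's proof: keep the conditioning time $t_0$ fixed, take the increment over a \emph{later} interval $\left[t,t+\tau\right]$, and write
\begin{equation*}
\mathbb{E}_{t_0}\left[\left(N_{t+\tau}-N_t\right)^2\right]=\mathbb{E}_{t_0}\left[N_{t+\tau}^2\right]-\mathbb{E}_{t_0}\left[N_t^2\right]-2\,\mathbb{E}_{t_0}\left[N_t\,\mathbb{E}_t\left[N_{t+\tau}-N_t\right]\right]
\end{equation*}
using the tower property together with \eqref{condFirstMoMCount}; this is precisely where your cross moment $\mathbb{E}_{t_0}[N_tX_t]$ enters, multiplied by $\tilde{\mathbf{A}}^{-1}\left[e^{\tilde{\mathbf{A}}\tau}-\mathbf{I}\right]$ and eventually producing the $h_{\infty}(0)$ term. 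Only after this decomposition does one let $t\to+\infty$, so that the $e^{\tilde{\mathbf{A}}\left(t-t_0\right)}$ and $e^{\tilde{\tilde{\mathbf{A}}}\left(t-t_0\right)}$ factors wash out the dependence on $X_{t_0}$ and $N_{t_0}$ and the surviving terms are deterministic. Equivalently, in your two-point setup you would still have to evaluate $\mathbb{E}_{t_0}\left[\Delta N_t\,\Delta N_{t+\tau}\right]$, which forces exactly this tower-property step; your write-up skips it. Everything else in your plan matches the paper's proof.
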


\begin{remark}
Combining the results in Propositions \ref{PropAutocovCARMAHAWKES} and \ref{PropVarCARMAHAWKES}, we determine the asymptotic autocorrelation function of number of jumps in non-overlapping time intervals of length $\tau$, i.e., $\rho_{\tau}\left(d\right)$, for a CARMA(p,q)-Hawkes in a closed-form formula:
\begin{equation}
\rho_{\tau}\left(d\right) =\frac{Cov\left(\tau,d-1\right)}{Var\left(\tau\right)}, \ \ d=1,2,\ldots
\label{acfCARMAHAWKES}
\end{equation}
where $d$ denotes the lag order.
\end{remark}

\subsection{Strong mixing property for the increments of a CARMA(p,q)-Hawkes and asymptotic distribution of the autocorrelation function}\label{smpCARMA-Hawkes}

The asymptotic distribution of the autocorrelation function of a CARMA(p,q)-Hawkes process can be easily obtained if we show that the increments of the process are strongly mixing.
\begin{definition}
Let $\left(\Omega,\mathcal{F},\mathbb{P}\right)$ be a probability space and $\mathcal{A},\mathcal{B}$ two sub $\sigma-$algebras of $\mathcal{F}$. The strong-mixing coefficient is defined as:
\begin{equation}
\alpha\left(\mathcal{A},\mathcal{B}\right):= \sup\left\{\left|\mathbb{P}\left(A\cap B\right)-\mathbb{P}\left(A\right)\mathbb{P}\left(B\right)\right|A\in \mathcal{A}, B \in \mathcal{B}\right\}.
\label{strongCoeff}
\end{equation} 
\end{definition}
Following \cite{poinas2019mixing}, the quantity in \eqref{strongCoeff} can be reformulated for a point process $N_t$ in the following way:
\begin{equation}
\alpha_{N}\left(r\right):=\sup_{t\in \mathbb{R}} \ \ \alpha\left(\xi_{-\infty}^t,\xi_{t+r}^{\infty}\right)
\end{equation}
where $\xi_a^b$ denotes the $\sigma-$algebra generated by the cylinder sets on the interval $\left(a,b\right]$\footnote{
Let $N$ be a counting process defined as a map from a probability space $\left(\Omega,\mathcal{F},\mathbb{P}\right)$ to a measurable space $\left(\mathbb{M},\mathcal{M}\right)$ of locally finite counting measures on $\Omega$. Then the $\sigma-$algebra $\xi_a^b$ is defined as:
\[
\xi_a^b:=\sigma\left(\left\{N\in\mathbb{M}:N\left(A\right)=n\right\}; A \in \mathcal{B}\left(\left(a,b\right]\right), n \in \mathbb{N}\right).
\]
}. Considering the  sequence $\left(\Delta_1 N_k\right)_{k\in \mathbb{Z}}$ where $\Delta_1 N_k:=N_{k+1}-N_{k}$ is the number of jumps in the interval of length 1 and extremes $k$,  $k+1$, the strong-mixing coefficient has the form
\begin{equation}
\alpha_{\Delta_1 N}\left(r\right) := \sup_{n\in \mathbb{Z}} \ \ \alpha\left(\mathcal{F}_{-\infty}^{n}, \mathcal{F}_{n+r}^{\infty}\right)
\end{equation}
where $\mathcal{F}_{a}^{b}$ is the $\sigma-$algebra generated by the sequence $\left(\Delta_1 N_k\right)_{a\leq k\leq b}$. If $\alpha_{N}\left(r\right)\rightarrow 0$ (respectively $\alpha_{\Delta_1 N_k}\left(r\right)\rightarrow 0$) as $r\rightarrow +\infty$, the point process $N_t$ (respectively $\Delta_1 N_k$) is said to be strongly-mixing.\\
Using  Theorem 1 in \cite{cheysson2020strong}, we obtain the following proposition. 
\begin{proposition}
\label{PropStrongCH}
A CARMA(p,q)-Hawkes  process satisfying Assumption \ref{Ass1} is strongly mixing with exponential rate.
\end{proposition}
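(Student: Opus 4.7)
The plan is to reduce the statement to a direct application of Theorem 1 in \cite{cheysson2020strong}, which yields exponential strong mixing for any linear Hawkes process whose self-excitation kernel is non-negative, integrates to a value strictly less than one, and admits an exponentially decaying tail.

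First, I would identify our process as a standard linear Hawkes process with explicit kernel. Combining \eqref{CondIntCH} and \eqref{18} gives $\lambda_t = \mu + \int_0^t h(t-s)\mbox{d}N_s$ with $h(u) = \mathbf{b}^\top e^{\mathbf{A}u}\mathbf{e}\mathbbm{1}_{\{u\geq 0\}}$, so the framework of \cite{cheysson2020strong} applies and the hypotheses need only be checked on $h$.

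Second, I would verify those hypotheses one by one under Assumption \ref{Ass1}. Non-negativity of $h$ is part (i) of the assumption. The stationarity bound $\int_0^{+\infty} h(u)\mbox{d}u = -\mathbf{b}^\top \mathbf{A}^{-1}\mathbf{e} < 1$ was already derived inside the proof of Proposition \ref{PropStat}. For the exponential tail, diagonalize $\mathbf{A} = \mathbf{S}\mathbf{\Lambda}\mathbf{S}^{-1}$ with $\mathbf{\Lambda} = \mathsf{diag}(\tilde{\lambda}_1,\ldots,\tilde{\lambda}_p)$; then
\begin{equation*}
|h(u)| = \bigl|\mathbf{b}^\top \mathbf{S} e^{\mathbf{\Lambda} u}\mathbf{S}^{-1}\mathbf{e}\bigr| \leq C\, e^{\mathsf{Re}(\tilde{\lambda}_1) u}, \qquad u\geq 0,
\end{equation*}
for a finite constant $C$ depending only on $\mathbf{b},\mathbf{e}$ and $\mathbf{S}$, and $\mathsf{Re}(\tilde{\lambda}_1) < 0$ by Assumption \ref{Ass1}. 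This provides the integrable, exponentially decreasing dominating function required by \cite{cheysson2020strong}.

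Third, I would invoke their Theorem 1 to conclude that $\alpha_N(r)$ decays at an exponential rate, which then transfers to the unit-length increments $(\Delta_1 N_k)_{k\in\mathbb{Z}}$ because $\mathcal{F}_{-\infty}^n \subseteq \xi_{-\infty}^{n+1}$ and $\mathcal{F}_{n+r}^{\infty} \subseteq \xi_{n+r}^{\infty}$, giving $\alpha_{\Delta_1 N}(r) \leq \alpha_N(r-1)$. The main obstacle is purely bookkeeping: aligning the way \cite{cheysson2020strong} phrases their tail condition (in terms of an integrable, non-increasing majorant) with our matrix-exponential bound, which is immediate once one takes $g(u) = C e^{\mathsf{Re}(\tilde{\lambda}_1)u}$ as the majorant. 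No genuinely new probabilistic argument is required beyond the spectral estimate on $e^{\mathbf{A}u}$.
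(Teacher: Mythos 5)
Your proposal is correct and follows essentially the same route as the paper: both reduce the claim to Theorem 1 of \cite{cheysson2020strong} and verify its kernel hypothesis via the diagonalization $\mathbf{A}=\mathbf{S}\mathbf{\Lambda}\mathbf{S}^{-1}$, your pointwise bound $|h(u)|\leq C e^{\mathsf{Re}(\tilde{\lambda}_1)u}$ being equivalent to the paper's exponential-moment condition $\int e^{a_0|t|}h(t)\,\mbox{d}t<+\infty$ for $a_0\in\left(0,\left|\mathsf{Re}(\tilde{\lambda}_1)\right|\right)$. Your explicit checks of $\int_0^{+\infty}h<1$ and of the $\sigma$-algebra inclusions for the increments are slightly more detailed than the paper's, but add nothing substantively different.
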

\begin{proof}
We first prove the existence of a positive constant $a_0>0$ such that the kernel function satisfies the condition
\begin{equation}
\int_{\mathbb{R}}e^{a_0\left|t\right|}h\left(t\right)\mbox{d}t
<+\infty.
\label{converg}
\end{equation}
We notice that Assumption \ref{Ass1} implies that
\begin{eqnarray*}
\int_{\mathbb{R}}e^{a_0\left|t\right|}h\left(t\right)\mbox{d}t&=&\mathbf{b}^{\top}\int_0^{+\infty}e^{a_0t}e^{\mathbf{A}t}\mbox{d}t\mathbf{e}=\mathbf{b}^{\top} \mathbf{S}\int_0^{+\infty}e^{a_0t}e^{\mathbf{\Lambda}t}\mbox{d}t \mathbf{S}^{-1}\mathbf{e}. 
\end{eqnarray*}
Choosing $a_0 \in \left(0,\left|\mathsf{Re}\left(\lambda_1\right)\right|\right)$ the condition in \eqref{converg} is ensured and thus we can apply the result in Theorem 1 proved by \cite{cheysson2020strong}, and the strong-mixing coefficient results to be $\alpha_N\left(r\right)=O\left(e^{-ar}\right)$ where $a\in\left(0,a_0\right)$. 
\end{proof}
As shown in \cite{cheysson2020strong}, we have that $\alpha_{\Delta_1 N}\left(r\right)\leq \alpha_{N}\left(r\right)$ and the result in Proposition \ref{PropStrongCH} implies that the sequence $\left(\Delta_1 N_k\right)_{k \in \mathbb{Z}}$ is strongly mixing. This result is useful to determine the asymptotic distribution of the sample autocovariance and autocorrelation functions associated to the sequence $\left(\Delta_1 N_k\right)_{k \in \mathbb{Z}}$.
Following the result in \cite{ibragimov1975independent}, we obtain the  following result for the  asymptotic distribution of the  sample mean, the sample variance and the sample autocovariance function. 
\begin{proposition}
\label{AsymptoticDistribution}
Let $\left(N_t\right)_{t\geq0}$ be a stationary CARMA(p,q)-Hawkes process that satisfies the assumptions in Proposition \ref{PropStrongCH}. We assume the existence of a positive constant $\phi$ such that $\mathbb{E}\left[\left(\Delta_1 N_1\right)^{4+\phi}\right]<+\infty$. Denoting with
\begin{equation*}
V_k:=\left[\begin{array}{c}
\Delta_1 N_k \\
\left(\Delta_1 N_k-\mathbb{E}\left(\Delta_1 N_{\infty}\right)\right)^2\\
\left(\Delta_1 N_k-\mathbb{E}\left(\Delta_1 N_{\infty}\right)\right)\left(\Delta_1 N_{k+1}-\mathbb{E}\left(\Delta_1 N_{\infty}\right)\right)\\
\vdots\\
\left(\Delta_1 N_k-\mathbb{E}\left(\Delta_1 N_{\infty}\right)\right)\left(\Delta_1 N_{k+d}-\mathbb{E}\left(\Delta_1 N_{\infty}\right)\right)
\end{array}
\right], \text{ with }k=1,...,n  \text{ and } d<n
\end{equation*}
as $n\rightarrow +\infty$, we have:
\begin{equation}
\sqrt{n}\left(\frac{1}{n}\sum_{k=1}^{n}V_k-\left(\begin{array}{c}
\mathbb{E}\left(\Delta_1 N_{\infty}\right)\\
Var\left(\Delta_1 N_{\infty}\right)\\
\text{Acv}\left(1\right)\\
\vdots\\
Acv\left(d\right)
\end{array}\right)\right) \rightarrow \mathcal{N}_{d+2}\left(\mathbf{0},\Sigma\right)
\label{AcvasymptDistr}
\end{equation}
 where $Acv\left(d\right):=Cov(1,d-1)$ and
\begin{equation}
\Sigma := \mathbb{E}\left(V_1 V_1^{\top}\right)+2\sum_{k=2}^{+\infty} \mathbb{E}\left(V_1 V_k^{\top}\right).
\end{equation}
\end{proposition}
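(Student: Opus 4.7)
The strategy is to reduce the statement to a multivariate central limit theorem for strictly stationary, strongly mixing sequences of the Ibragimov--Linnik type, invoked via the reference \cite{ibragimov1975independent}. The three ingredients I need to verify are: (i) strict stationarity of the $(d+2)$-dimensional sequence $(V_k)_{k \geq 1}$, (ii) strong mixing of $(V_k)$ with a sufficiently fast decay rate, and (iii) a uniform $(2+\eta)$-moment bound on $V_k$ for some $\eta>0$, matched to that decay rate.

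\textbf{Step 1 (stationarity and mixing transfer).} Under Assumption \ref{Ass1}, the counting process $N_t$ is stationary in the increments, so the scalar sequence $(\Delta_1 N_k)_{k\in\mathbb{Z}}$ is strictly stationary. Each component of $V_k$ is a fixed Borel function of the finite-window vector $(\Delta_1 N_k,\Delta_1 N_{k+1},\ldots,\Delta_1 N_{k+d})$, so $(V_k)$ inherits strict stationarity. Since $V_k$ is measurable with respect to $\mathcal{F}_{k}^{k+d}$, I get the pointwise bound $\alpha_V(r)\leq \alpha_{\Delta_1 N}(r-d)$ for every $r>d$. By Proposition \ref{PropStrongCH} (invoking \cite{cheysson2020strong}), $\alpha_{\Delta_1 N}(r)=O(e^{-ar})$ for some $a\in(0,a_0)$, hence $(V_k)$ is strongly mixing with the same exponential rate (up to a shift).

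\textbf{Step 2 (moment condition).} Each component of $V_k$ is either $\Delta_1 N_k$ or a product of two centered increments, so by Cauchy--Schwarz and the elementary inequality $|xy|\leq \tfrac12(x^2+y^2)$ I have, for every $j=0,\dots,d$,
\[
\bigl|(\Delta_1 N_k-m)(\Delta_1 N_{k+j}-m)\bigr|^{2+\eta}\leq C_\eta\bigl((\Delta_1 N_k-m)^{4+2\eta}+(\Delta_1 N_{k+j}-m)^{4+2\eta}\bigr),
\]
with $m:=\mathbb{E}(\Delta_1 N_\infty)$. Choosing $\eta:=\phi/2$, stationarity together with the hypothesis $\mathbb{E}[(\Delta_1 N_1)^{4+\phi}]<+\infty$ yields $\mathbb{E}\|V_1\|^{2+\eta}<+\infty$.

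\textbf{Step 3 (applying the CLT).} The exponential decay of $\alpha_V(r)$ trivially gives the summability condition $\sum_{r\geq 1}\alpha_V(r)^{\eta/(2+\eta)}<+\infty$ required by the multivariate CLT for strongly mixing stationary sequences in \cite{ibragimov1975independent}. Combining this with the moment bound from Step 2 and the Cram\'er--Wold device applied componentwise, I conclude that
\[
\sqrt{n}\left(\tfrac{1}{n}\sum_{k=1}^{n}V_k-\mathbb{E}(V_1)\right)\xrightarrow{d}\mathcal{N}_{d+2}(\mathbf{0},\Sigma),
\]
with long-run covariance $\Sigma=\mathrm{Var}(V_1)+2\sum_{k\geq 2}\mathrm{Cov}(V_1,V_k)$, the series converging absolutely by the exponential mixing rate and Davydov's covariance inequality. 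Identification of $\mathbb{E}(V_1)$ with $(\mathbb{E}(\Delta_1 N_\infty),\mathrm{Var}(\Delta_1 N_\infty),Acv(1),\ldots,Acv(d))^\top$ follows from Propositions \ref{prop5}, \ref{PropAutocovCARMAHAWKES} and \ref{PropVarCARMAHAWKES}.

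\textbf{Main obstacle.} The cleanest part is the mixing transfer; the delicate part is matching the moment exponent to the mixing rate in the exact form that \cite{ibragimov1975independent} requires, and justifying absolute convergence of the long-run covariance series via Davydov's (or Rio's) covariance inequality applied to the centered products $(\Delta_1 N_k-m)(\Delta_1 N_{k+j}-m)$, since here the bounded-variable shortcuts are unavailable and one must use truncation combined with the $(4+\phi)$-moment hypothesis.
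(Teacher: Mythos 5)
Your proof is correct and follows essentially the same route as the paper: the Ibragimov--Linnik central limit theorem (Theorem 18.5.3) for strictly stationary, strongly mixing sequences applied to linear combinations $c^{\top}V_k$, combined with the Cram\'er--Wold device, with the mixing property inherited from Proposition \ref{PropStrongCH}. Your write-up is in fact somewhat more careful than the paper's, which glosses over the transfer of the exponential mixing rate to the window-dependent vectors $V_k$ (each depending on $d+1$ consecutive increments, hence the shift $\alpha_V(r)\leq\alpha_{\Delta_1 N}(r-d)$) and does not explicitly verify the $(2+\eta)$-moment condition that your Step 2 derives from the $\mathbb{E}\left[\left(\Delta_1 N_1\right)^{4+\phi}\right]<+\infty$ hypothesis.
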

\begin{proof}
The proof is quite standard and is an application of Theorem 18.5.3 in \cite{ibragimov1975independent} and Cram\'er-Wold device. Denoting with 
\begin{equation*}
\vartheta := \left[\mathbb{E}\left(\Delta_1 N_{\infty}\right),
Var\left(\Delta_1 N_{\infty}\right),
Acv\left(1\right),
\ldots,
Acv\left(d\right)
\right]^{\top}
\end{equation*}  
we apply Theorem 18.5.3 in \cite{ibragimov1975independent} to the linear combination $\left(c^{\top}V_{k}\right)_{k=1,2,\ldots n}$ where $c$ is a generic $d+2$ real vector such that $c\top\Sigma c>0$. Since the strong mixing property is preserved under linear transformations as well as the rate we have
\begin{equation*}
\sqrt{n}\left(\frac{1}{n}\sum_{k=1}^{n}c^{\top}V_{k}- c^{\top} \vartheta\right)\rightarrow \mathcal{N}\left(0,c^{\top}\mathbb{E}\left(V_{1}
V_1^{\top}\right)c+2\sum_{k=1}^{+\infty}c^{\top}\mathbb{E}\left(V_{1}
V_k^{\top}\right)c\right),
\end{equation*}
that is
\begin{equation*}
\sqrt{n}\left(\frac{1}{n}\sum_{k=1}^{n}c^{\top}V_{k}- c^{\top} \vartheta\right)\rightarrow \mathcal{N}\left(0,c\top\Sigma c\right).
\end{equation*}
Applying Cram\'er-Wold device we obtain the asymptotic behavior in \eqref{AcvasymptDistr}.     
\end{proof}
Applying the Delta method, it is possible to use the result in Proposition \ref{AsymptoticDistribution} to obtain the asymptotic distribution of the autocorrelation function.

\section{Application to simulated series}
\label{ASS}

This section examines the time series of the counting process $N_t$ generated by the simulation of two different stochastic processes: the standard Hawkes process and the CARMA(3,1)-Hawkes process. We show the behaviour of the autocorrelation function and its $95\%$ confidence interval obtained applying the results of Subsection~\ref{smpCARMA-Hawkes}. Furthermore, we investigate the estimation of parameters by means of the Maximum Likelihood Estimation (MLE) method described in Subsection \ref{sim_mle_CARMA-Hawkes} and the Moment Matching Estimation (MME) method which we describe below. 

Consider a sequence of empirical observations for the increments of a counting process $\left(\Delta_{\tau}N_k\right)_{k = 1, \dots, n}$, then the MME method is composed of two steps. The first step is to compute the least squares estimator:

\begin{equation*}
\bm{\hat{\theta}}_{n, \tau} := \operatornamewithlimits{arg min}_{\bm{\hat{\theta}}_{n, \tau} \in \Theta \subseteq \mathbb{R}^{p+q+1}} M\left(\hat{\rho}_{n, \tau}, \theta \right)
\end{equation*}
where $\Theta$ is a subset of $\mathbb{R}^{p+q+1}$ such that the stationary condition is guaranteed, the kernel function is non-negative defined and higher order moments of a CARMA(p,q)-Hawkes process exist. For a fixed $m \geq p + q + 1$, $M: \mathbb{R}_{+}^{m}\times \Theta \to \mathbb{R}$ is defined as: 
\begin{equation*}
M\left(\hat{\rho}_{n, \tau}, \theta \right):= \displaystyle\sum_{d = 1}^{m} \left(\hat{\rho}_{n, \tau} (d) - \rho_\tau(d)\right)^2
\end{equation*}
in which $d$ denotes the lag order, $\hat{\rho}_{n, \tau} (d)$ represents a vector containing the empirical autocorrelations and $\rho_\tau(d)$ is a vector of theoretical autocorrelations described in Section~\ref{autocorr_CARMA}. The vector $\theta$ includes only the autoregressive ($a_1, \dots, a_p$) and moving average ($b_0, \dots, b_q$) parameters. Once obtained the autoregressive and moving average parameters, the parameter $\mu$  can be estimated, which corresponds to the second step, from Equation \eqref{statExpNumbOfJump} using the empirical first moment of $\Delta_{\tau} N_t$ with $\tau = 1$.

Note that all chosen parameters for the Hawkes process and the CARMA(3,1)-Hawkes process ensure two conditions: the stationarity of the process (see Section~\ref{SH}) and the existence of the asymptotic autocorrelation function (see Section~\ref{autocorr_CARMA}). 

\subsection{The Hawkes process}\label{sim_Hawkes}

We simulate the Hawkes process with the following parameters: $\mu=0.2$, $a_1=0.7$ and $b_0=0.5$. For the sake of clarity, we recall that the Hawkes process can be seen also as a CAR(1)-Hawkes process as mentioned in the Remark~\ref{CAR_1}.

\begin{figure}[!h]
\centering
\begin{subfigure}[b]{0.45\textwidth}
         \centering
         \includegraphics[width=\textwidth]{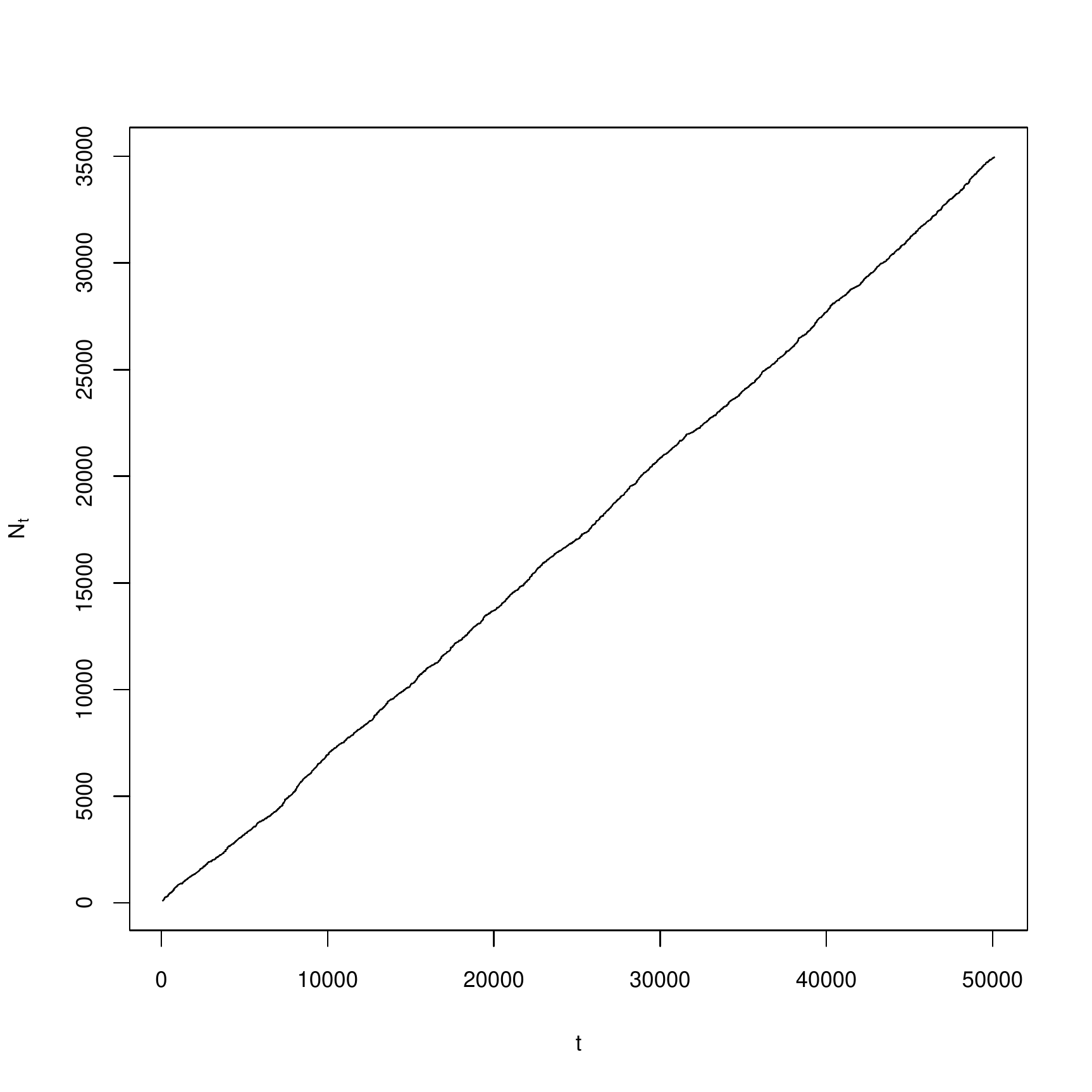}
				 \caption{Trajectory of the counting process $N_t$}
         \label{TrajectHawkes}
\end{subfigure}
\hfill
\begin{subfigure}[b]{0.45\textwidth}
         \centering
         \includegraphics[width=\textwidth]{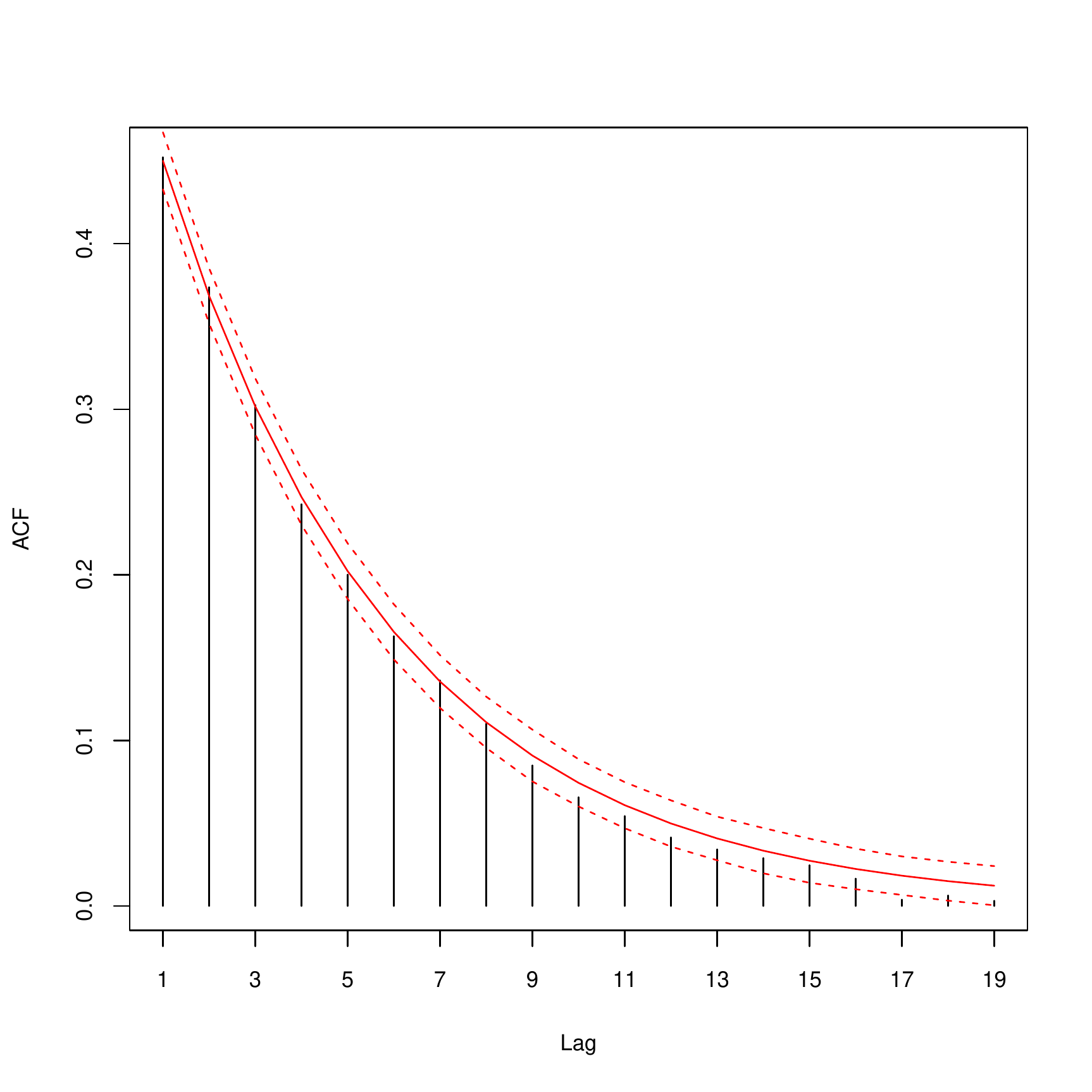}
         \caption{Autocorrelation function}
         \label{AcfHawkes}
\end{subfigure}
\caption{Trajectory of the counting process $N_t$ and autocorrelation function (ACF) using a Hawkes process. Input parameters: $\mu=0.2$, $a_1=0.7$ and $b_0=0.5$.}
\label{fig:Hawkes}
\end{figure}

Figure \ref{TrajectHawkes} shows a simulated trajectory of the counting process $N_t$, while Figure \ref{AcfHawkes} exhibits the autocorrelation function at different lags. From Figure \ref{AcfHawkes} we  observe that all empirical autocorrelations belong to the 95\% confidence interval and, as expected, the ACF depicts an exponential monotonic decay (decreasing) behaviour which is typical of the Hawkes processes. 

Table \ref{tab: MLE_Hawkes} exhibits the MLE estimates for the parameters and the number of occurred events  for different levels of final time $T$, whereas Table \ref{tab: MME_Hawkes} shows the estimated parameters using the MME method.
%

\begin{table}[!h]
\centering
\begin{tabular}{@{}ccccc@{}}
\toprule
$\hat{\mu}$ & $\hat{a}_1$  & $\hat{b}_0$  & $N_t$ &  $T$  \\ \hline
0.2101 & 0.7270 & 0.4883 &  3199   &  5000\\
0.2014 & 0.7389 & 0.5211 & 10240   & 15000\\
0.1987 & 0.7054 & 0.4997 & 17042   & 25000\\
0.2011 & 0.7028 & 0.5004 & 34914   & 50000\\
\bottomrule
\end{tabular}
\caption{Parameter estimates from MLE and number of occurred events $N_T$ for a Hawkes process. True parameters are $\mu=0.2$, $a_1=0.7$ and $b_0=0.5$.}\label{tab: MLE_Hawkes}
\end{table}

\begin{table}[!h]
\centering
\begin{tabular}{@{}cccc@{}}
\toprule
$\hat{\mu}$ & $\hat{a}_1$   & $\hat{b}_0$ & $T$  \\ \hline
0.2440 & 0.7540 & 0.4877 & 5000\\
0.2121 & 0.7127 & 0.4954  & 15000\\
0.2044 & 0.7016 & 0.4951 & 25000\\
0.1992 & 0.7042 & 0.4990  & 50000\\
\bottomrule
\end{tabular}
\caption{Parameter estimates from MME for a Hawkes process. True parameters are $\mu=0.2$, $a_1=0.7$ and $b_0=0.5$.}\label{tab: MME_Hawkes}
\end{table}

\subsection{CARMA(3,1)-Hawkes process}

We simulate the counting process $N_t$ using a CARMA(3,1)-Hawkes model with the following parameters: $\mu =0.30$, $a_1 = 1.3$, $a_2 = 0.34+\pi^2/4 \approx 2.807 $, $a_3 = 0.025+0.025\pi^2 \approx 0.2717$, $b_0 = 0.2$ and $b_1 = 0.3$. This set of parameters ensures the stationary conditions since $-\mathbf{b}^{\top}\mathbf{A}^{-1}\mathbf{e}\approx 0.7359973 < 1$ and the largest eigenvalue of $\mathbf{A}$ is $\tilde{\lambda}_1\approx-0.1012$. We can easily verify the condition for the existence of $\mathbb{E}\left(\Delta_{\tau}N_{\infty}\right)$ in \eqref{statExpNumbOfJump}. Indeed, the real part of the largest eigenvalue of $\tilde{\mathbf{A}}$ is negative $(-0.02903)$. Moreover, it is possible to verify that all eigenvalues of $\tilde{\tilde{\mathbf{A}}}$ have negative real part, thus the long-run autocorrelation function exists (the real part of the largest eigenvalue of $\tilde{\tilde{\mathbf{A}}}$ is $-0.0290$).
In order to analyze the nonnegativity of the kernel $h\left(t\right)$, Figure \ref{Kernel31} presents the behaviour of $h\left(t\right)$ with $t \in \left[0,30\right]$.  From \eqref{KKKK} the tail behaviour of $h\left(t\right)$ is proportional to $e^{-0.1012t}$ (i.e., $h\left(t\right)\sim 1.9800e^{-0.1012t} $ as $t\rightarrow +\infty$).
 
\begin{figure}[!h]
\centering
\includegraphics[scale=0.45]{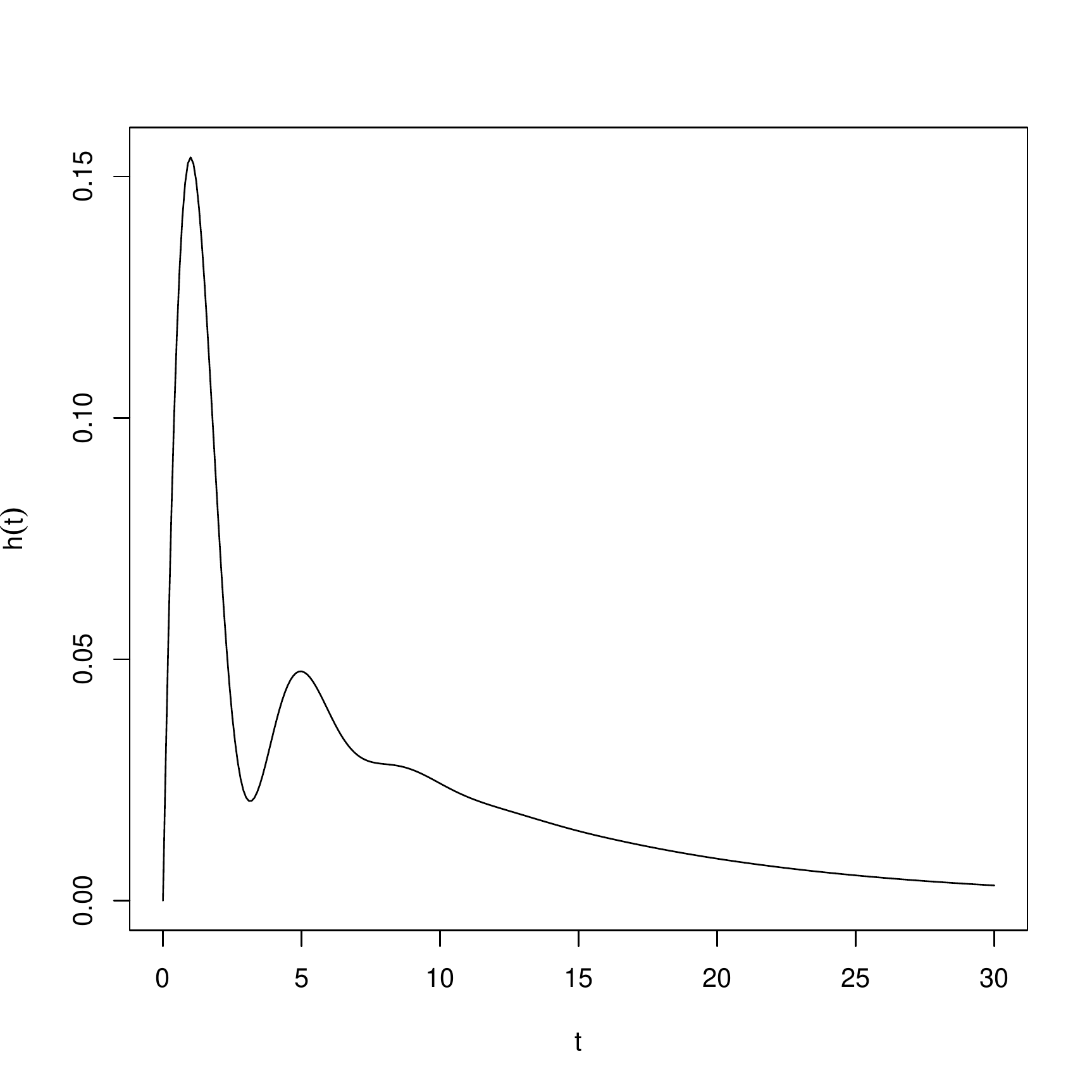}
\caption{Kernel function of a CARMA(3,1)-Hawkes. Input parameters: $\mu =0.30$, $a_1 = 1.3$, $a_2 = 0.34+\pi^2/4 \approx 2.807 $, $a_3 = 0.025+0.025\pi^2 \approx 0.2717$, $b_0 = 0.2$ and $b_1 = 0.3$.}
\label{Kernel31}
\end{figure}

Figure \ref{TrajectCARMA31} exhibits a simulated trajectory of the counting process $N_t$, while Figure \ref{AcfCARMA31} displays the ACF at different lags. Figure \ref{AcfCARMA31} is a clear example of the fact that a CARMA(3,1)-Hawkes process can accommodate different shapes of autocorrelation structures rather than an exponential monotonic decay behaviour as it is the case of a standard Hawkes process. 

\begin{figure}[!h]
\centering
\begin{subfigure}[b]{0.45\textwidth}
   \centering
   \includegraphics[width=\textwidth]{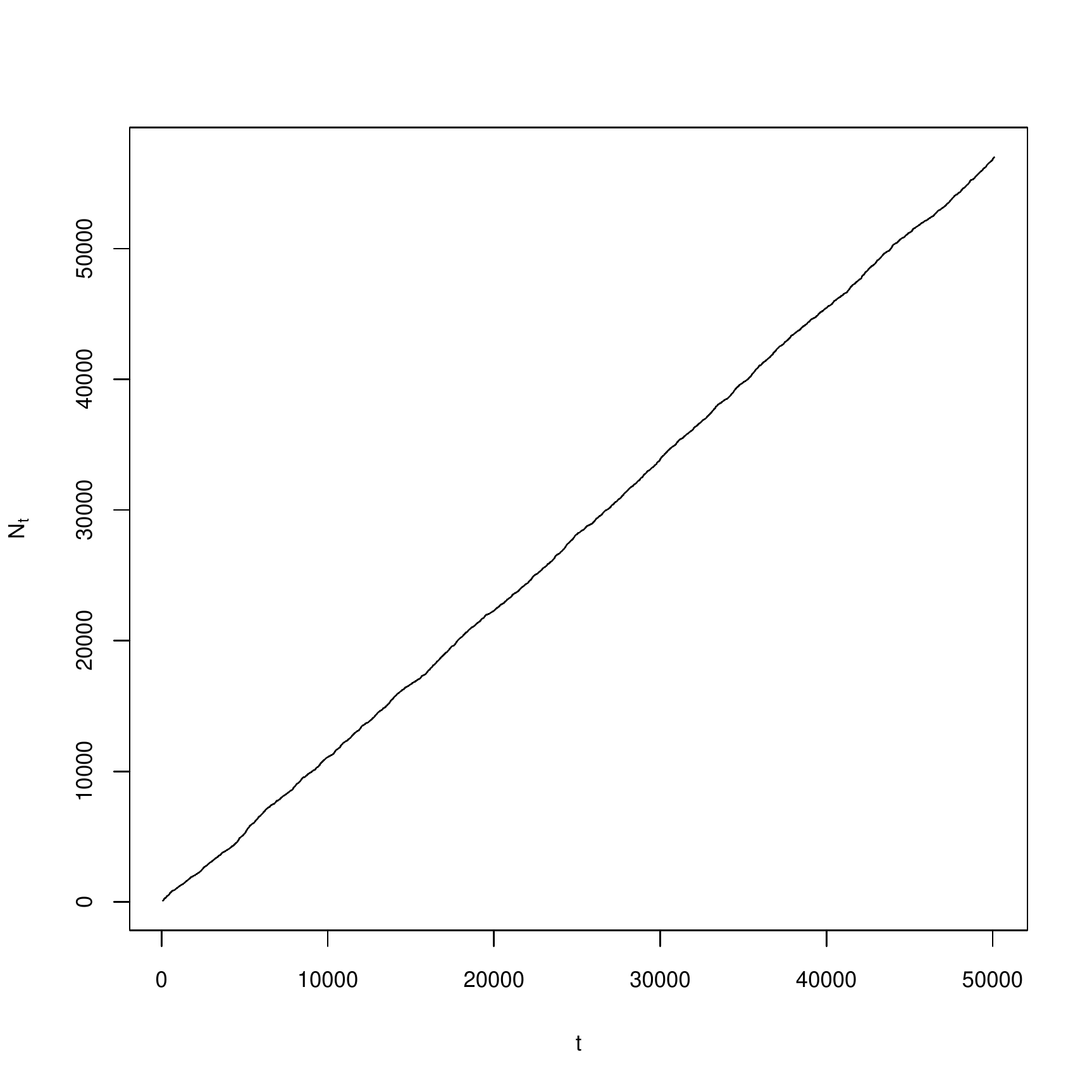}
   \caption{Trajectory of the counting process $N_t$}
	 \label{TrajectCARMA31}
\end{subfigure}
\hfill
\begin{subfigure}[b]{0.45\textwidth}
   \centering
   \includegraphics[width=\textwidth]{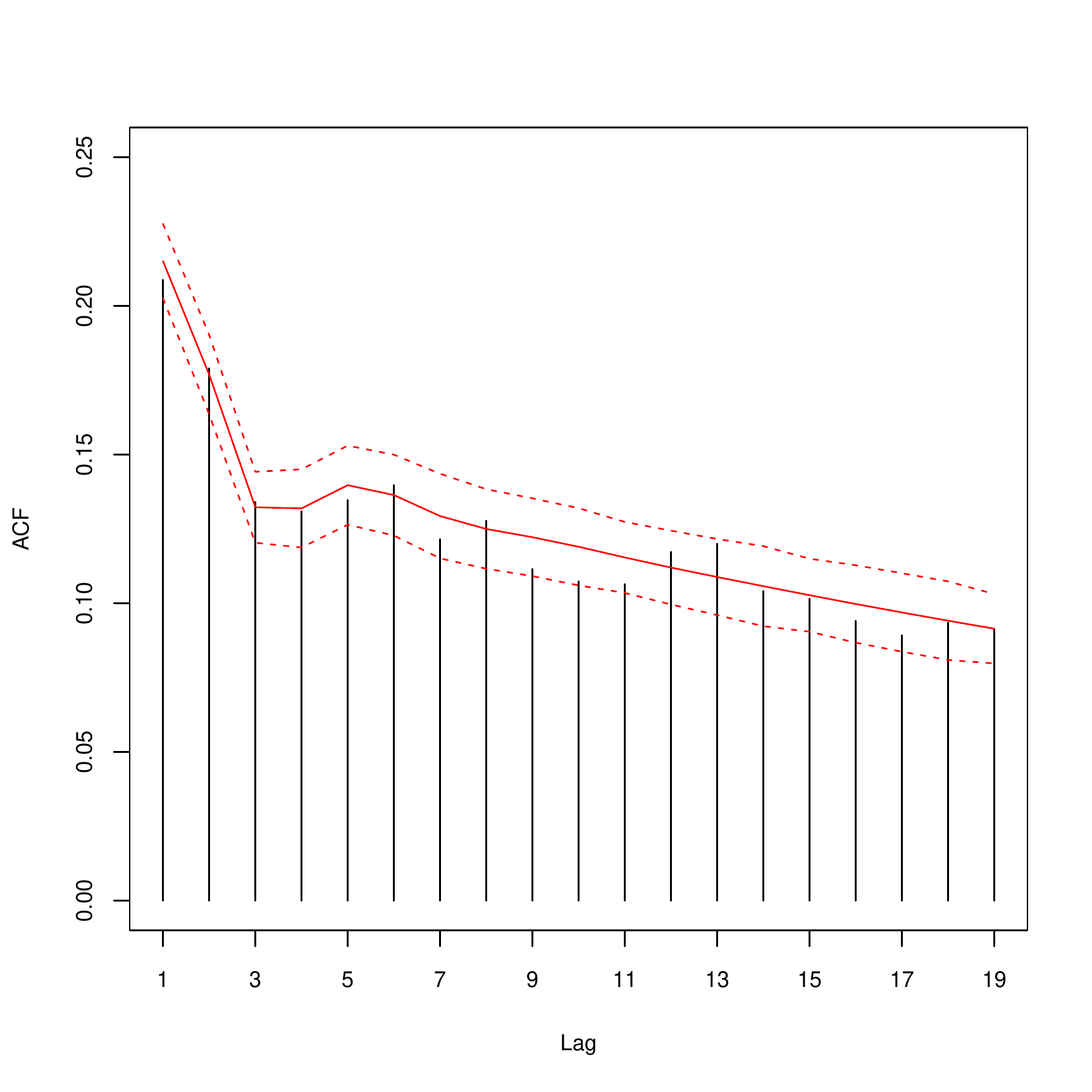}
   \caption{Autocorrelation function}
	 \label{AcfCARMA31}
\end{subfigure}
\caption{Trajectory of the counting process $N_t$ and autocorrelation function (ACF) using a CARMA(3,1)-Hawkes. Input parameters: $\mu =0.30$, $a_1 = 1.3$, $a_2 = 0.34+\pi^2/4 \approx 2.807 $, $a_3 = 0.025+0.025\pi^2 \approx 0.2717$, $b_0 = 0.2$ and $b_1 = 0.3$.}\label{fig:Carma31Hawkes}
\end{figure}

Table \ref{tab: MLE_CARMA31} shows the estimated parameters obtained using the MLE method and the number of occurred events for different levels of final time $T$, whereas Table \ref{tab: MME_CARMA31} exhibits the estimated parameters using the MME method for different levels of final time $T$.

\begin{table}[!h]
\centering
\begin{tabular}{@{}cccccccc@{}}
\toprule
$\hat{\mu}$ & $\hat{a}_1$ & $\hat{a}_2$ & $\hat{a}_3$  & $\hat{b}_0$ & $\hat{b}_1$ & $N_T$  & T  \\ \hline
0.3166 & 1.5372 & 2.8565 & 0.2869 & 0.2011 & 0.3345 &  5229  & 5000  \\
0.3063 & 1.3521 & 2.6686 & 0.2758 & 0.1998 & 0.3023 & 16666  & 15000 \\
0.3068 & 1.0811 & 2.4380 & 0.2480 & 0.1808 & 0.2508 & 28154  & 25000 \\
0.2949 & 1.4177 & 2.6901 & 0.2550 & 0.1889 & 0.3138 & 56815  & 50000 \\
\bottomrule
\end{tabular}
\caption{Parameter estimates from MLE and number of occurred events $N_T$ for a CARMA(3,1)-Hawkes process. True parameters are: $\mu =0.30$, $a_1 = 1.3$, $a_2 = 0.34+\pi^2/4 \approx 2.807 $, $a_3 = 0.025+0.025\pi^2 \approx 0.2717$, $b_0 = 0.2$ and $b_1 = 0.3$.}\label{tab: MLE_CARMA31}
\end{table}

\begin{table}[!h]
\centering
\begin{tabular}{@{}ccccccc@{}}
\toprule
$\hat{\mu}$ & $\hat{a}_1$ & $\hat{a}_2$ & $\hat{a}_3$  & $\hat{b}_0$ & $\hat{b}_1$ & T  \\ \hline
0.2946 & 2.4941 & 2.9148 & 0.2228 & 0.1600 & 0.4954  & 4999 \\
0.2788 & 2.8938 & 3.2043 & 0.2440 & 0.1827 & 0.6283  & 14999\\
0.2934 & 1.5754 & 2.8956 & 0.2634 & 0.1947 & 0.3592  & 24999\\
0.3104 & 1.2584 & 2.7107 & 0.2749 & 0.1998 & 0.2797  & 49999\\
\bottomrule
\end{tabular}
\caption{Parameter estimates from MME for a CARMA(3,1)-Hawkes process. True parameters are: $\mu =0.30$, $a_1 = 1.3$, $a_2 = 0.34+\pi^2/4 \approx 2.807 $, $a_3 = 0.025+0.025\pi^2 \approx 0.2717$, $b_0 = 0.2$ and $b_1 = 0.3$.}\label{tab: MME_CARMA31}
\end{table}




\section{Conclusion}
\label{concl}
In this paper we introduce a new Hawkes process where the intensity is a CARMA(p,q) model. We analyze the statistical properties of this process and obtain a closed-form expression for the autocorrelation function of the number of jumps observed in non-overlapping time intervals of same length. The model is a generalization of the standard Hawkes with exponential kernel but it is able to reproduce more complex dependence structures similar to those documented in several data sets. 

\section*{Acknowledgments}
This work was partly supported by JST CREST Grant Number JPMJCR14D7, Japan.

\bibliography{mybibfile}

\begin{appendix}
\section{Integration of matrix exponentials}
\label{Appendix2}
Let $\mathbf{A}$ be a square matrix and $\mathbf{A}^{\left(i\right)}:=\underset{i\text{ times}}{\underbrace{\mathbf{A}\mathbf{A} \cdots \mathbf{A}}}$. As the exponential of the matrix $\mathbf{A}$ can be computed as 
\begin{equation*}\exp\left(\mathbf{A}t\right)=\mathbf{I}+\overset{+\infty}{\underset{i=1}{\sum}}\frac{A^{\left(i\right)}t^i}{i!},
\end{equation*}
it is straightforward to show that
\begin{equation}
\int_{t_0}^{T}e^{\mathbf{A}\left(T-t\right)}\mbox{d}t=\mathbf{A}^{-1}\left(e^{\mathbf{A}\left(T-t_0\right)}-\mathbf{I}\right)=\left(e^{\mathbf{A}\left(T-t_0\right)}-\mathbf{I}\right)\mathbf{A}^{-1}.
\label{int2}
\end{equation}

\section{Solution of a general Linear Ordinary differential Equation}
To solve $\mbox{d}Y_t = \left(b_t+A Y_t\right) \mbox{d}t$, we consider the transformation $X_t=e^{-A t}Y_t$ and observe that
\begin{eqnarray*}
\mbox{d}X_t &=& -Ae^{-At}Y_t\mbox{d}t+e^{-At}\mbox{d}Y_t= e^{-At} b_t \mbox{d}t,
\end{eqnarray*}
from where we have $X_T = X_{t_0}+\int_{t_0}^T e^{-At} b_t \mbox{d}t$ that in terms of $Y_t$ reads 
\begin{equation}
Y_T=e^{A\left(T-t_0\right)}Y_{t_0}+\int_{t_0}^T e^{A\left(T-t\right)} b_t \mbox{d}t.
\label{solLinODE}
\end{equation}

\section{Computation of integrals with matrix exponentials}
\label{Appendix1}
Some useful results for computing integrals that involve matrix exponentials are provided in \cite{van1978computing} and  \cite{carbonell2008computing}. In particular, we recall the result that deals with the computation of the following two integrals:
\begin{equation}
\int_{0}^{t}e^{\mathbf{H}_{11}\left(t-u\right)} \mathbf{H}_{12} e^{\mathbf{H}_{22}u}\mbox{d}u
\label{B12}
\end{equation}
\begin{equation}
\int_{0}^{t}\int_{0}^{u}e^{\mathbf{H}_{11}\left(t-u\right)} \mathbf{H}_{12} e^{\mathbf{H}_{22}\left(u-r\right)} \mathbf{H}_{23} e^{\mathbf{H}_{33} r} \mbox{d}r\mbox{d}u
\label{B13}
\end{equation}
where $\mathbf{H}_{11}$, $\mathbf{H}_{12}$, $\mathbf{H}_{22}$, $\mathbf{H}_{23}$ and $\mathbf{H}_{33}$ have dimension $d_1\times d_1$,  $d_1\times d_2$, $d_2\times d_2$, $d_2\times d_3$ and  $d_3\times d_3$, respectively.
We  need to define   a block triangular matrix $\mathbf{H}$ as follows
\begin{equation}
\mathbf{H}:=\left(
\begin{array}{ccc}
\mathbf{H}_{11}&\mathbf{H}_{12} & \mathbf{0}\\
\mathbf{0} & \mathbf{H}_{22} & \mathbf{H}_{23}\\
\mathbf{0} & \mathbf{0} &  \mathbf{H}_{33}\\ 
\end{array}
\right).
\end{equation}
The integrals \eqref{B12} and \eqref{B13}  coincide with the elements $\mathbf{B}_{12}\left(t\right)$ and $\mathbf{B}_{13}\left(t\right)$  in the matrix exponential:
\begin{equation}
e^{\mathbf{H}t}=\left(\begin{array}{ccc}
\mathbf{B_{11}}\left(t\right) & \mathbf{B_{12}}\left(t\right) & \mathbf{B_{13}}\left(t\right)\\
\mathbf{0} & \mathbf{B_{22}}\left(t\right) &  \mathbf{B_{23}}\left(t\right)\\
\mathbf{0} & \mathbf{0} & \mathbf{B_{33}}\left(t\right)\\ 
\end{array}\right)
\label{A3}
\end{equation}
while $\mathbf{B}_{11}\left(t\right):=e^{\mathbf{H}_{11}t}$, $\mathbf{B}_{22}\left(t\right):=e^{\mathbf{H}_{22}t}$ and $\mathbf{B}_{33}\left(t\right):=e^{\mathbf{H}_{33}t}$.\newline
 
\begin{remark}The eigenvalues of $\mathbf{H}$ coincide with the eigenvalues of $\mathbf{H}_{11}$, $\mathbf{H}_{22}$ and $\mathbf{H}_{33}$. If the real part of all eigenvalues of $\mathbf{H}_{11}$, $\mathbf{H}_{22}$ and $\mathbf{H}_{33}$ is negative, the following result holds
\begin{equation*}
\lim_{t\rightarrow+\infty} e^{\mathbf{H}t}= \mathbf{0}
\end{equation*}
that implies
\begin{equation}
\lim_{t\rightarrow+\infty} \mathbf{B}_{12}\left(t\right)= \mathbf{0}
\label{veryImportant}
\end{equation}
and
\begin{equation}
\lim_{t\rightarrow+\infty} \mathbf{B}_{13}\left(t\right)= \mathbf{0}.
\label{veryImportant2}
\end{equation}
\end{remark}
\section{Proofs of propositions on long-run covariance and variance of the number of jumps}
\label{proofs}
We provide below the proof of Propositon \ref{PropAutocovCARMAHAWKES} on the long-run covariance of the number of jumps in a CARMA(p,q)-Hawkes model.

\begin{proof}
We first determine the covariance
of number of jumps in two non-overlapping time intervals given the information at time $t_0$. This quantity is formally defined as
\begin{eqnarray}
Cov_{t_0}\left(\tau,\delta\right)&:=& \mathbb{E}_{t_0}\left[\left(N_{t+\tau}-N_{t}\right)\left(N_{t+2\tau+\delta}-N_{t+\tau+\delta}\right)\right]\nonumber\\
&-&\mathbb{E}_{t_0}\left[\left(N_{t+\tau}-N_{t}\right)\right]\mathbb{E}_{t_0}\left[\left(N_{t+2\tau+\delta}-N_{t+\tau+\delta}\right)\right].
\label{Covt_0}
\end{eqnarray}
Using the iteration property of the conditional expected value, \eqref{Covt_0} becomes
\begin{eqnarray*}
Cov_{t_0}\left(\tau,\delta\right)&=&\mathbb{E}_{t_0}\left[\left(N_{t+\tau}-N_{t}\right)\mathbb{E}_{t+\tau}\left[\left(N_{t+2\tau+\delta}-N_{t+\tau+\delta}\right)\right]\right]\\
&-&\mathbb{E}_{t_0}\left[\left(N_{t+\tau}-N_{t}\right)\right]\mathbb{E}_{t_0}\left[\left(N_{t+2\tau+\delta}-N_{t+\tau+\delta}\right)\right].
\end{eqnarray*}
Applying the result \eqref{ExpNumbOfJump} in Proposition \ref{prop5}, we get
\begin{equation}
Cov_{t_0}\left(\tau,\delta\right)=\mathbf{b}^{\top}\tilde{\mathbf{A}}^{-1}\left[e^{\tilde{\mathbf{A}}\left(\tau+\delta\right)}-e^{\tilde{\mathbf{A}}\delta}\right] g_{t_0}\left(t,\tau\right)
\end{equation}
where
\begin{eqnarray}
g_{t_0}\left(t,\tau\right)&=&\mathbb{E}_{t_0}\left[\left(N_{t+\tau}-N_{t}\right)X_{t+\tau}\right]-e^{\tilde{\mathbf{A}}\left(t+\tau-t_0\right)}\mathbb{E}_{t_0}\left[N_{t+\tau}-N_{t}\right]X_{t_0}\nonumber\\
&+& \left(\mathbf{I}-e^{\tilde{\mathbf{A}}\left(t+\tau-t_0\right)}\right)\tilde{\mathbf{A}}^{-1}\mathbf{e}\mu\mathbb{E}_{t_0}\left[N_{t+\tau}-N_{t}\right]\nonumber\\
&=&\mathbb{E}_{t_0}\left[N_{t+\tau}X_{t+\tau}\right] +\tilde{\mathbf{A}}^{-1}\mathbf{e}\mu\mathbb{E}_{t_0}\left[N_{t}\right]
-e^{\tilde{\mathbf{A}}\tau}\left[\mathbb{E}_{t_0}\left(N_{t}X_{t}\right)+\tilde{\mathbf{A}}^{-1}\mathbf{e}\mu\mathbb{E}_{t_0}\left[N_{t}\right]\right]\nonumber\\
&-&e^{\tilde{\mathbf{A}}\left(t+\tau-t_0\right)}\mathbb{E}_{t_0}\left[N_{t+\tau}-N_{t}\right]X_{t_0}
+ \left(\mathbf{I}-e^{\tilde{\mathbf{A}}\left(t+\tau-t_0\right)}\right)\tilde{\mathbf{A}}^{-1}\mathbf{e}\mu\mathbb{E}_{t_0}\left[N_{t+\tau}-N_{t}\right].
\label{gt_0}
\end{eqnarray}
In the \textsl{rhs} of \eqref{gt_0}, the last two terms are stationary due to the result in \eqref{statExpNumbOfJump} and to the negativity of the real part for the eigenvalues of $\tilde{\mathbf{A}}$; the third term converges to zero as $t\rightarrow +\infty$ while the fourth term has the following limit behaviour
\begin{equation}
\left(\mathbf{I}-e^{\tilde{\mathbf{A}}\left(t+\tau-t_0\right)}\right)\tilde{\mathbf{A}}^{-1}\mathbf{e}\mu \mathbb{E}_{t_0}\left[N_{t+\tau}-N_{t}\right]\rightarrow \tilde{\mathbf{A}}^{-1}\mathbf{e}\mu^2\left(1-\mathbf{b}^\top \tilde{\mathbf{A}}^{-1}\mathbf{e}\right)\tau\text{ a.s. } t\rightarrow+\infty.
\label{limitfourthTermCov}
\end{equation}
For the first two terms in the \textsl{rhs} \eqref{gt_0} consider the quantity:
\begin{equation}
h_{t_0}\left(t,\tau\right):= \mathbb{E}_{t_0}\left[N_{t+\tau}X_{t+\tau}\right] +\tilde{\mathbf{A}}^{-1}\mathbf{e}\mu\mathbb{E}_{t_0}\left[N_{t}\right], \ \forall \tau\geq0, t>t_0
\label{hfunc}
\end{equation} as $t\rightarrow +\infty$. In \eqref{hfunc} the vector $\mathbb{E}_{t_0}\left[N_{t}X_{t}\right]$ requires the calculation of $p$ infinitesimal generators. We then observe that for $i=1,\ldots, p-1$, the infinitesimal generator of the function $N_t X_{t,i}$ is:
\begin{eqnarray*}
\mathcal{A}N_t X_{t,i}&=&\left(\mu+\mathbf{b}^\top X_{t}\right)\left[\left(N_t+1\right)X_{t,i}-N_tX_{t,i}\right] + N_{t}X_{t,i+1}\\
&=&\left(\mu X_{t,i}+X_{t,i} X_{t}^\top \mathbf{b} \right)+ N_{t}X_{t,i+1}
\end{eqnarray*}
while for $i=p$
\begin{eqnarray*}
\mathcal{A}N_t X_{t,p}&=&\left(\mu+\mathbf{b}^\top X_{t}\right)\left[\left(N_t+1\right)\left(X_{t,p}+1\right)-N_t X_{t,p}\right] + N_{t}A_{[p,\cdot]}X_{t}\\
&=&\left(\mu +\mathbf{b}^\top X_{t}+\mu N_{t}\right)+\left(\mu X_{t,p}+ X_{t,p}X_{t}^\top \mathbf{b}\right)+\left(\mathbf{b}^\top +  A_{[p,\cdot]}\right)N_{t}X_{t},
\end{eqnarray*}
that implies
\begin{equation}
\mbox{d}\mathbb{E}_{t_0}\left[X_{t}N_t\right]=\left[\left(\mu +\mathbf{b}^\top \mathbb{E}_{t_0}\left[X_{t}\right]+\mu \mathbb{E}_{t_0}\left[N_{t}\right]\right)\mathbf{e}+\mu\mathbb{E}_{t_0}\left[X_{t}\right]+\mathbb{E}_{t_0}\left[X_{t}X_{t}^{\top}\right]\mathbf{b}+\tilde{\mathbf{A}}\mathbb{E}_{t_0}\left[X_{t}N_t\right]\right]\mbox{d}t
\label{eq:ODE_NX}
\end{equation}
from where we get
\begin{eqnarray}
\mathbb{E}_{t_0}\left[X_{T}N_T\right]&=&e^{\tilde{\mathbf{A}}\left(T-t_0\right)}X_{t_0}N_{t_0}+\int_{t_0}^T e^{\tilde{\mathbf{A}}\left(T-t\right)}\left(\mu +\mathbf{b}^\top \mathbb{E}_{t_0}\left[X_{t}\right]+\mu \mathbb{E}_{t_0}\left[N_{t}\right]\right)\mathbf{e}\mbox{d}t\nonumber\\
&+& \int_{t_0}^T e^{\tilde{\mathbf{A}}\left(T-t\right)}\left[\mu\mathbb{E}_{t_0}\left[X_{t}\right]+\mathbb{E}_{t_0}\left[X_{t}X_{t}^{\top}\right]\mathbf{b}\right]\mbox{d}t.
\label{eq:InterExplSolXTNT}
\end{eqnarray}
The quantity $\mathbb{E}_{t_0}\left[X_{T}N_T\right]$ is not stationary but it is useful as it appears in the \textsl{rhs} of  the function $h_ {t_0}\left(t,\tau\right)$ introduced in \eqref{hfunc} that can be rewritten as
\begin{eqnarray}
h_{t_0}\left(t,\tau\right)&=&e^{\tilde{\mathbf{A}}\left(t+\tau-t_0\right)}X_{t_0}N_{t_0}+\int_{t_0}^{t+\tau} e^{\tilde{\mathbf{A}}\left(t+\tau-u\right)}\mu \mathbf{e}\mbox{d}u+\int_{t_0}^{t+\tau} e^{\tilde{\mathbf{A}}\left(t+\tau-u\right)}\mathbf{b}^\top \mathbb{E}_{t_0}\left[X_{u}\right]\mathbf{e}\mbox{d}u\nonumber\nonumber\\
&+&\int_{t_0}^{t+\tau} e^{\tilde{\mathbf{A}}\left(t+\tau-u\right)}\left(\mu \mathbb{E}_{t_0}\left[N_{u}\right]\right)\mathbf{e}\mbox{d}u+ \tilde{\mathbf{A}}^{-1} \mathbf{e}\mu\mathbb{E}_{t_0}\left[N_{t}\right] \nonumber \\
&+& \int_{t_0}^{t+\tau} e^{\tilde{\mathbf{A}}\left(t+\tau-u\right)}\left[\mu\mathbb{E}_{t_0}\left[X_{u}\right]+\mathbb{E}_{t_0}\left[X_{u}X_{u}^{\top}\right]\mathbf{b}\right]\mbox{d}u.
\label{myht0secondvers}
\end{eqnarray}
We analyze the long-run behaviour of each term in the \textsl{rhs} of \eqref{myht0secondvers}. We first observe that
\[
\int_{t_0}^{t+\tau} e^{\tilde{\mathbf{A}}\left(t+\tau-u\right)}\mbox{d}u\mu \mathbf{e}=\left(e^{\tilde{\mathbf{A}}\left(t+\tau-t_0\right)}-\mathbf{I}\right)\tilde{\mathbf{A}}^{-1}\mu \mathbf{e}
\] 
with
\begin{equation}
\lim_{t\rightarrow+\infty} \left(e^{\tilde{\mathbf{A}}\left(t+\tau-t_0\right)}-\mathbf{I}\right)\tilde{\mathbf{A}}^{-1}\mu \mathbf{e} = -\tilde{\mathbf{A}}^{-1}\mu \mathbf{e}.
\label{Limit2thterm}
\end{equation}
The formula for the conditional expected value of the state process in \eqref{conditionalStateProcMoM} allows us to rewrite the third term in the \textsl{rhs} of \eqref{myht0secondvers} as follows
\begin{eqnarray}
\int_{t_0}^{t+\tau} e^{\tilde{\mathbf{A}}\left(t+\tau-u\right)}\mathbf{e}\mathbf{b}^\top \mathbb{E}_{t_0}\left[X_{u}\right]\mbox{d}u &=&
e^{\tilde{\mathbf{A}}\left(t+\tau\right)}\int_{t_0}^{t+\tau} e^{-\tilde{\mathbf{A}}u}\mathbf{e}\mathbf{b}^\top e^{\tilde{\mathbf{A}}u}\mbox{d}u e^{-\tilde{\mathbf{A}}t_0}\left[X_{t_0}+\tilde{\mathbf{A}}^{-1}\mathbf{e}\mu\right]\nonumber\\
&-&\int_{t_0}^{t+\tau} e^{\tilde{\mathbf{A}}\left(t+\tau-u\right)}\mbox{d}u \mathbf{e}\mathbf{b}^\top \tilde{\mathbf{A}}^{-1}\mathbf{e}\mu \nonumber\\
&=& e^{\tilde{\mathbf{A}}\left(t+\tau\right)}\int_{t_0}^{t+\tau} e^{-\tilde{\mathbf{A}}u}\mathbf{e}\mathbf{b}^\top e^{\tilde{\mathbf{A}}u}\mbox{d}u e^{-\tilde{\mathbf{A}}t_0}\left[X_{t_0}+\tilde{\mathbf{A}}^{-1}\mathbf{e}\mu\right] \nonumber\\
&-& \left(e^{\tilde{\mathbf{A}}\left(t+\tau-t_0\right)}-\mathbf{I}\right) \tilde{\mathbf{A}}^{-1} \mathbf{e}\mathbf{b}^\top \tilde{\mathbf{A}}^{-1}\mathbf{e}\mu. 
\label{intermidiatethirdterm}
\end{eqnarray}
To compute the integral $e^{\tilde{\mathbf{A}}\left(t+\tau\right)}\int_{t_0}^{t+\tau} e^{-\tilde{\mathbf{A}}u}\mathbf{e}\mathbf{b}^\top e^{\tilde{\mathbf{A}}u}\mbox{d}u e^{-\tilde{\mathbf{A}}t_0}$ we use the result in \eqref{A3} and exploiting its limit behaviour \eqref{veryImportant}, the long-run behaviour  of \eqref{intermidiatethirdterm} becomes
\begin{equation}
\lim_{t\rightarrow+\infty} \int_{t_0}^{t+\tau} e^{\tilde{\mathbf{A}}\left(t+\tau-u\right)}\mathbf{e}\mathbf{b}^\top \mathbb{E}_{t_0}\left[X_{u}\right]\mbox{d}u = \tilde{\mathbf{A}}^{-1} \mathbf{e}\mathbf{b}^\top \tilde{\mathbf{A}}^{-1}\mathbf{e}\mu.
\label{Limit3thterm}
\end{equation}
The fourth term in the \textsl{rhs} of \eqref{myht0secondvers} can be written as
\begin{eqnarray}
& & \int_{t_0}^{t+\tau} e^{\tilde{\mathbf{A}}\left(t+\tau-u\right)}\mbox{d}u\mathbf{e} \mu N_{t_0}  + \int_{t_0}^{t+\tau} e^{\tilde{\mathbf{A}}\left(t+\tau-u\right)}\left(u-t_0\right) \mbox{d}u \mathbf{e}\mu^2\left(1-\mathbf{b}^{\top}\tilde{\mathbf{A}}^{-1}\mathbf{e}\right)\nonumber\\
&+&  \int_{t_0}^{t+\tau} e^{\tilde{\mathbf{A}}\left(t+\tau-u\right)}\mathbf{e}\mu\mathbf{b}^\top\tilde{\mathbf{A}}^{-1}
\left[e^{\tilde{\mathbf{A}}\left(u-t_0\right)}-\mathbf{I}\right]\mbox{d}u\left(X_{t_0}+\tilde{\mathbf{A}}^{-1}\mathbf{e}\mu\right)+ \tilde{\mathbf{A}}^{-1} \mathbf{e}\mu\mathbb{E}_{t_0}\left[N_{t}\right] \nonumber\\
&=&\left(e^{\tilde{\mathbf{A}}\left(t+\tau-t_0\right)}-\mathbf{I}\right)\tilde{\mathbf{A}}^{-1}\mathbf{e}\mu N_{t_0}+\left[\int_{t_0}^{t+\tau} e^{\tilde{\mathbf{A}}\left(t+\tau-u\right)}\left(u-t_0\right)\mbox{d}u\right]\mathbf{e}\mu^2\left(1-\mathbf{b}^{\top}\tilde{\mathbf{A}}^{-1}\mathbf{e}\right)\nonumber\\
&+& e^{\tilde{\mathbf{A}}\left(t+\tau\right)}\int_{t_0}^{t+\tau} e^{-\tilde{\mathbf{A}}u}\mathbf{e}\mathbf{b}^\top e^{\tilde{\mathbf{A}}u}\mbox{d}ue^{-\tilde{\mathbf{A}}t_0}\tilde{\mathbf{A}}^{-1}\left[X_{t_0}+\tilde{\mathbf{A}}^{-1}\mathbf{e}\mu\right]\mu\nonumber\\
&-& \tilde{\mathbf{A}}^{-1}\left(e^{\tilde{\mathbf{A}}\left(t+\tau-t_0\right)}-\mathbf{I}\right)\mathbf{e}\mathbf{b}^\top\tilde{\mathbf{A}}^{-1}\left[X_{t_0}+\tilde{\mathbf{A}}^{-1}\mathbf{e}\mu\right]\mu+\tilde{\mathbf{A}}^{-1} \mathbf{e}\mu\mathbb{E}_{t_0}\left[N_{t}\right]. 
\label{4thTerm}
\end{eqnarray}
Integrating by parts we get 
\begin{equation*}
\int_{t_0}^{t+\tau} e^{\tilde{\mathbf{A}}\left(t+\tau-u\right)}\left(u-t_0\right)\mbox{d}u =\tilde{\mathbf{A}}^{-1}\left[\left(e^{\tilde{\mathbf{A}}\left(t+\tau-t_0\right)}-\mathbf{I}\right)\tilde{\mathbf{A}}^{-1}-\mathbf{I}\left(t+\tau-t_0\right)\right].
\end{equation*} 
Thus \eqref{4thTerm} becomes
\begin{eqnarray*}
& & \left(e^{\tilde{\mathbf{A}}\left(t+\tau-t_0\right)}-\mathbf{I}\right)\tilde{\mathbf{A}}^{-1}\mathbf{e}\mu N_{t_0}+\tilde{\mathbf{A}}^{-1}\left[\left(e^{\tilde{\mathbf{A}}\left(t+\tau-t_0\right)}-\mathbf{I}\right)\tilde{\mathbf{A}}^{-1}-\mathbf{I}\left(t+\tau-t_0\right)\right]\mathbf{e}\mu^2\left(1-\mathbf{b}^{\top}\tilde{\mathbf{A}}^{-1}\mathbf{e}\right)\\
&+& e^{\tilde{\mathbf{A}}\left(t+\tau\right)}\int_{t_0}^{t+\tau} e^{-\tilde{\mathbf{A}}u}\mathbf{e}\mathbf{b}^\top e^{\tilde{\mathbf{A}}u}\mbox{d}ue^{-\tilde{\mathbf{A}}t_0}\tilde{\mathbf{A}}^{-1}\left[X_{t_0}+\tilde{\mathbf{A}}^{-1}\mathbf{e}\mu\right]\mu \\
&-& \tilde{\mathbf{A}}^{-1}\left(e^{\tilde{\mathbf{A}}\left(t+\tau-t_0\right)}-\mathbf{I}\right)\mathbf{e}\mathbf{b}^\top\tilde{\mathbf{A}}^{-1}\left[X_{t_0}+\tilde{\mathbf{A}}^{-1}\mathbf{e}\mu\right]\mu+ \tilde{\mathbf{A}}^{-1} \mathbf{e}\mu\mathbb{E}_{t_0}\left[N_{t}\right].
\end{eqnarray*}
Using the formula for the conditional expected value of the counting process in \eqref{condFirstMoMCount} we get
\begin{eqnarray*}
& &\left(e^{\tilde{\mathbf{A}}\left(t+\tau-t_0\right)}-\mathbf{I}\right)\tilde{\mathbf{A}}^{-1}\mathbf{e}\mu N_{t_0}+\tilde{\mathbf{A}}^{-1}\left[\left(e^{\tilde{\mathbf{A}}\left(t+\tau-t_0\right)}-\mathbf{I}\right)\tilde{\mathbf{A}}^{-1}-\mathbf{I}\left(t+\tau-t_0\right)\right]\mathbf{e}\mu^2\left(1-\mathbf{b}^{\top}\tilde{\mathbf{A}}^{-1}\mathbf{e}\right)\\
&+&e^{\tilde{\mathbf{A}}\left(t+\tau\right)}\int_{t_0}^{t+\tau} e^{-\tilde{\mathbf{A}}u}\mathbf{e}\mathbf{b}^\top e^{\tilde{\mathbf{A}}u}\mbox{d}ue^{-\tilde{\mathbf{A}}t_0}\tilde{\mathbf{A}}^{-1}\left[X_{t_0}+\tilde{\mathbf{A}}^{-1}\mathbf{e}\mu\right]\mu \\
&-& \tilde{\mathbf{A}}^{-1}\left(e^{\tilde{\mathbf{A}}\left(t+\tau-t_0\right)}-I\right)\mathbf{e}\mathbf{b}^\top\tilde{\mathbf{A}}^{-1}\left[X_{t_0}+\tilde{\mathbf{A}}^{-1}\mathbf{e}\mu\right]\mu\\
&+& \tilde{\mathbf{A}}^{-1} \mathbf{e}\mu\left[N_{t_0}+\mu\left(1-\mathbf{b}^{\top}\tilde{\mathbf{A}}^{-1}\mathbf{e}\right)\left(t-t_0\right)+\mathbf{b}^\top\tilde{\mathbf{A}}^{-1} \left(e^{\tilde{\mathbf{A}}\left(t_1-t_0\right)}-\mathbf{I}\right)\left[X_{t_0}+\tilde{\mathbf{A}}^{-1}\mathbf{e}\mu\right]\right] \\
&=& e^{\tilde{\mathbf{A}}\left(t+\tau-t_0\right)}\tilde{\mathbf{A}}^{-1}\mathbf{e}\mu N_{t_0}+\tilde{\mathbf{A}}^{-1}\left[\left(e^{\tilde{\mathbf{A}}\left(t+\tau-t_0\right)}-\mathbf{I}\right)\tilde{\mathbf{A}}^{-1}-\mathbf{I}\tau\right]\mathbf{e}\mu^2\left(1-\mathbf{b}^{\top}\tilde{\mathbf{A}}^{-1}\mathbf{e}\right)\\
&+&e^{\tilde{\mathbf{A}}\left(t+\tau\right)}\int_{t_0}^{t+\tau} e^{-\tilde{\mathbf{A}}u}\mathbf{e}\mathbf{b}^\top e^{\tilde{\mathbf{A}}u}\mbox{d}ue^{-\tilde{\mathbf{A}}t_0}\tilde{\mathbf{A}}^{-1}\left[X_{t_0}+\tilde{\mathbf{A}}^{-1}\mathbf{e}\mu\right]\mu \\
&-& \tilde{\mathbf{A}}^{-1}e^{\tilde{\mathbf{A}}\left(t+\tau-t_0\right)}\mathbf{e}\mathbf{b}^\top\tilde{\mathbf{A}}^{-1}\left[X_{t_0}+\tilde{\mathbf{A}}^{-1}\mathbf{e}\mu\right]\mu+ \tilde{\mathbf{A}}^{-1} \mathbf{e}\mu\mathbf{b}^\top\tilde{\mathbf{A}}^{-1} e^{\tilde{\mathbf{A}}\left(t-t_0\right)}\left[X_{t_0}+\tilde{\mathbf{A}}^{-1}\mathbf{e}\mu\right]
\end{eqnarray*}
and its long-run behaviour is established considering $t\rightarrow+\infty$, that is
\begin{equation}
-\tilde{\mathbf{A}}^{-1}\left[\tilde{\mathbf{A}}^{-1}+\mathbf{I}\tau\right]\mathbf{e}\mu^2\left(1-\mathbf{b}^{\top}\tilde{\mathbf{A}}^{-1}\mathbf{e}\right).
\label{Limit4thterm}
\end{equation}
The fifth term in the right-hand side of \eqref{myht0secondvers} can be rewritten as
\begin{eqnarray*}
\int_{t_0}^{t+\tau} e^{\tilde{\mathbf{A}}\left(t+\tau-u\right)} \mathbb{E}_{t_0}\left[X_{u}\right]\mbox{d}u \mu&=&\int_{t_0}^{t+\tau} e^{\tilde{\mathbf{A}}\left(t+\tau-u\right)} e^{\tilde{\mathbf{A}}\left(u-t_0\right)}\mbox{d}u \left[X_{t_0}+\tilde{\mathbf{A}}^{-1}\mathbf{e}\mu\right]\mu\\
&-& \int_{t_0}^{t+\tau} e^{\tilde{\mathbf{A}}\left(t+\tau-u\right)} \mbox{d}u \tilde{\mathbf{A}}^{-1}\mathbf{e}\mu^2\\ 
&=&  e^{\tilde{\mathbf{A}}\left(t+\tau-t_0\right)} \left(t+\tau-t_0\right)  \left[X_{t_0}+\tilde{\mathbf{A}}^{-1}\mathbf{e}\mu\right]\mu\\&-& \tilde{\mathbf{A}}^{-1}\left(e^{\tilde{\mathbf{A}}\left(t+\tau-t_0\right)}-\mathbf{I}\right)\tilde{\mathbf{A}}^{-1}\mathbf{e}\mu^2,
\end{eqnarray*}
that has the following long-run behaviour
\begin{equation}
\lim_{t\rightarrow+\infty}\int_{t_0}^{t+\tau} e^{\tilde{\mathbf{A}}\left(t+\tau-u\right)} \mathbb{E}_{t_0}\left[X_{u}\right]\mbox{d}u \mu= \tilde{\mathbf{A}}^{-1}\tilde{\mathbf{A}}^{-1}\mathbf{e}\mu^2.
\label{Limit5thterm}
\end{equation}

Lemma \ref{lemmavecttrill} suggests that the last term in the \textsl{rhs} of \eqref{myht0secondvers} can be written as
\begin{eqnarray*}
\int_{t_0}^{t+\tau} e^{\tilde{\mathbf{A}}\left(t+\tau-u\right)} \mathbb{E}_{t_0}\left[X_u X_u^{\top}\right]\mathbf{b}\mbox{d}u
&=& \int_{t_0}^{t+\tau} e^{\tilde{\mathbf{A}}\left(t+\tau-u\right)} \mathbf{B} e^{\tilde{\tilde{\mathbf{A}}}\left(u-t_0\right)}\mbox{d}uvlt\left(X_{t_0}X_{t_0}^{\top}\right)\\
&+& \int_{t_0}^{t+\tau} e^{\tilde{\mathbf{A}}\left(t+\tau-u\right)}\mathbf{B}e^{\tilde{\tilde{\mathbf{A}}}\left(u-t_0\right)}\mbox{d}u\tilde{\tilde{\mathbf{A}}}^{-1}\mu\left(\tilde{\mathbf{e}}-\tilde{\mathbf{C}}\tilde{\mathbf{A}}^{-1}\mathbf{e}\right) \nonumber\\
&-& \int_{t_0}^{t+\tau} e^{\tilde{\mathbf{A}}\left(t+\tau-u\right)}\mbox{d}u \mathbf{B}\tilde{\tilde{\mathbf{A}}}^{-1}\mu\left(\tilde{\mathbf{e}}-\tilde{\mathbf{C}}\tilde{\mathbf{A}}^{-1}\mathbf{e}\right)\nonumber\\
&+& \int_{t_0}^{t+\tau} e^{\tilde{\mathbf{A}}\left(t+\tau-u\right)} \mathbf{B} e^{\tilde{\tilde{\mathbf{A}}}u}\left[ \int_{t_0}^{u}e^{-\tilde{\tilde{\mathbf{A}}}h}\tilde{\mathbf{C}}e^{\tilde{\mathbf{A}}h}\mbox{d}h\right] e^{-\tilde{\mathbf{A}}t_0}\left[X_{t_0}+\tilde{\mathbf{A}}^{-1}\mathbf{e}\mu\right]\mbox{d}u.
\end{eqnarray*}
The result in \eqref{int2} implies that
\begin{eqnarray*}
\int_{t_0}^{t+\tau} e^{\tilde{\mathbf{A}}\left(t+\tau-u\right)} \mathbb{E}_{t_0}\left[X_u X_u^{\top}\right]\mathbf{b}\mbox{d}u
&=& \left[\int_{t_0}^{t+\tau} e^{\tilde{\mathbf{A}}\left(t+\tau-u\right)} \mathbf{B} e^{\tilde{\tilde{\mathbf{A}}}\left(u-t_0\right)}\mbox{d}u\right]vlt\left(X_{t_0}X_{t_0}^{\top}\right)\\
&+& \left[\int_{t_0}^{t+\tau} e^{\tilde{\mathbf{A}}\left(t+\tau-u\right)}\mathbf{B}e^{\tilde{\tilde{\mathbf{A}}}\left(u-t_0\right)}\mbox{d}u\right]\tilde{\tilde{\mathbf{A}}}^{-1}\mu\left(\tilde{\mathbf{e}}-\tilde{\mathbf{C}}\tilde{\mathbf{A}}^{-1}\mathbf{e}\right) \nonumber\\
&-& \left(e^{\tilde{\mathbf{A}}\left(t+\tau-t_0\right)}-\mathbf{I}\right)\tilde{\mathbf{A}}^{-1}\mathbf{B}\tilde{\tilde{\mathbf{A}}}^{-1}\mu\left(\tilde{\mathbf{e}}-\tilde{\mathbf{C}}\tilde{\mathbf{A}}^{-1}\mathbf{e}\right)\\
&+& \int_{t_0}^{t+\tau} e^{\tilde{\mathbf{A}}\left(t+\tau-u\right)} \mathbf{B} e^{\tilde{\tilde{\mathbf{A}}}u}\left[ \int_{t_0}^{u}e^{-\tilde{\tilde{\mathbf{A}}}h}\tilde{\mathbf{C}}e^{\tilde{\mathbf{A}}h}\mbox{d}h\right] e^{-\tilde{\mathbf{A}}t_0}\mbox{d}u\left[X_{t_0}+\tilde{\mathbf{A}}^{-1}\mathbf{e}\mu\right].
\end{eqnarray*}
To determine the asymptotic behaviour of this term, we analyze the long-run behaviour of the integral $\int_{t_0}^{t+\tau} e^{\tilde{\mathbf{A}}\left(t+\tau-u\right)} \mathbf{B} e^{\tilde{\tilde{\mathbf{A}}}\left(u-t_0\right)}\mbox{d}u$. Exploiting the result in \ref{Appendix1}, we have
\begin{equation*}
\lim_{t\rightarrow+\infty}\int_{t_0}^{t+\tau} e^{\tilde{\mathbf{A}}\left(t+\tau-u\right)} \mathbf{B} e^{\tilde{\tilde{\mathbf{A}}}\left(u-t_0\right)}\mbox{d}u =\mathbf{0}
\end{equation*}
as all eigenvalues of $\tilde{\mathbf{A}}$ and $\tilde{\tilde{\mathbf{A}}}$ have negative real part. Using the Fubini-Tonelli's Theorem the last integral becomes
\[
\int_{t_0}^{t+\tau} e^{\tilde{\mathbf{A}}\left(t+\tau-u\right)} \mathbf{B} e^{\tilde{\tilde{\mathbf{A}}}u}\left[ \int_{t_0}^{u}e^{-\tilde{\tilde{\mathbf{A}}}h}\tilde{\mathbf{C}}e^{\tilde{\mathbf{A}}h}\mbox{d}h\right] e^{-\tilde{\mathbf{A}}t_0}\mbox{d}u=\int_{t_0}^{t+\tau} \int_{t_0}^{u} e^{\tilde{\mathbf{A}}\left(t+\tau-u\right)} \mathbf{B} e^{\tilde{\tilde{\mathbf{A}}}\left(u-h\right)}\tilde{\mathbf{C}}e^{\tilde{\mathbf{A}}\left(h-t_0\right)}\mbox{d}h\mbox{d}u.
\]
Its long-run behaviour is obtained using the result in \eqref{veryImportant2}, that is
\begin{equation}
\lim_{t\rightarrow +\infty} \int_{t_0}^{t+\tau} \int_{t_0}^{u} e^{\tilde{\mathbf{A}}\left(t+\tau-u\right)} \mathbf{B} e^{\tilde{\tilde{\mathbf{A}}}\left(u-h\right)}\tilde{\mathbf{C}}e^{\tilde{\mathbf{A}}\left(h-t_0\right)}\mbox{d}h\mbox{d}u = \mathbf{0}.
\end{equation}
Finally, we have
\begin{equation}
\lim_{t\rightarrow +\infty} \int_{t_0}^{t+\tau} e^{\tilde{\mathbf{A}}\left(t+\tau-u\right)} \mathbb{E}_{t_0}\left[X_u X_u^{\top}\right]\mathbf{b}\mbox{d}u = \tilde{\mathbf{A}}^{-1}\mathbf{B}\tilde{\tilde{\mathbf{A}}}^{-1}\mu\left(\tilde{\mathbf{e}}-\tilde{\mathbf{C}}\tilde{\mathbf{A}}^{-1}\mathbf{e}\right).
\label{Limit6thterm}
\end{equation}
From \eqref{Limit2thterm}, \eqref{Limit3thterm}, \eqref{Limit4thterm}, \eqref{Limit5thterm} and \eqref{Limit6thterm} we obtain the limit behaviour for the quantity in \eqref{myht0secondvers}
\begin{eqnarray}
h_{\infty}\left(\tau\right)&:=&\lim_{t\rightarrow+\infty} h_{t_0}\left(t,\tau\right)\nonumber\\
&=& -\tilde{\mathbf{A}}^{-1}\mu \mathbf{e} +\tilde{\mathbf{A}}^{-1} \mathbf{e}\mathbf{b}^\top \tilde{\mathbf{A}}^{-1}\mathbf{e}\mu-\tilde{\mathbf{A}}^{-1}\left[\tilde{\mathbf{A}}^{-1}+\mathbf{I}\tau\right]\mathbf{e}\mu^2\left(1-\mathbf{b}^{\top}\tilde{\mathbf{A}}^{-1}\mathbf{e}\right)\nonumber\\
&+&\tilde{\mathbf{A}}^{-1}\tilde{\mathbf{A}}^{-1}\mathbf{e}\mu^2\tilde{\mathbf{A}}^{-1}\mathbf{B}\tilde{\tilde{\mathbf{A}}}^{-1}\mu\left(\tilde{\mathbf{e}}-\tilde{\mathbf{C}}\tilde{\mathbf{A}}^{-1}\mathbf{e}\right).
\label{Asyht0tau}
\end{eqnarray}
Using \eqref{Asyht0tau} we can determine the asymptotic behaviour of \eqref{gt_0} and we get
\begin{eqnarray}
g_{\infty}\left(\tau\right) &:=& \lim_{t\rightarrow +\infty} g_{t_0}\left(t,\tau\right)\nonumber\\
&=& \lim_{t\rightarrow +\infty} h_{t_0}\left(t,\tau\right)-e^{\tilde{\mathbf{A}}\tau}\left[\lim_{t\rightarrow +\infty} h_{t_0}\left(t,0\right)\right]+ \tilde{\mathbf{A}}^{-1}\mathbf{e}\mu^2\left(1-\mathbf{b}^\top \tilde{\mathbf{A}}^{-1}\mathbf{e}\right)\tau \nonumber\\ 
&=&  h_{\infty}\left(\tau\right)-e^{\tilde{\mathbf{A}}\tau}h_{\infty}\left(0\right)+\tilde{\mathbf{A}}^{-1}\mathbf{e}\mu^2\left(1-\mathbf{b}^\top \tilde{\mathbf{A}}^{-1}\mathbf{e}\right)\tau. 
\label{Interginfty}
\end{eqnarray}
By straightforward calculations \eqref{Interginfty} becomes \eqref{ginftytau} and the covariance reads as in \eqref{AutoCovCARMAHAWKES}.
\end{proof}

Here we provide the proof of Proposition \ref{PropVarCARMAHAWKES}.

\begin{proof}
For the asymptotic variance we need to compute the conditional variance of the number of jumps in an interval with length $\tau$. First we observe that
\begin{eqnarray*}
\sigma_{t_0}^2\left(t,\tau\right) &:=&\mathsf{Var}_{t_0}\left(N_{t+\tau}-N_{t}\right)=\mathbb{E}_{t_0}\left[\left(N_{t+\tau}-N_{t}\right)^2\right]-\mathbb{E}_{t_0}^2\left[N_{t+\tau}-N_{t}\right].
\end{eqnarray*}
We  then compute the second moment of the increments
\begin{eqnarray*}
\mathbb{E}_{t_0}\left[\left(N_{t+\tau}-N_{t}\right)^2\right]
&=& \mathbb{E}_{t_0}\left[N_{t+\tau}^2\right]+\mathbb{E}_{t_0}\left[N_{t}^2\right]-2\mathbb{E}_{t_0}\left[N_{t}\mathbb{E}_{t}\left[N_{t+\tau}\right]\right]\\
&=&\mathbb{E}_{t_0}\left[N_{t+\tau}^2\right]-\mathbb{E}_{t_0}\left[N_{t}^2\right]-2\mathbb{E}_{t_0}\left[N_{t}\right]\mu\left(1-\mathbf{b}^{\top}\tilde{\mathbf{A}}^{-1}\mathbf{e}\right)\tau\\
&-&2\mathbf{b}^{\top}\tilde{\mathbf{A}}^{-1}\left[e^{\tilde{\mathbf{A}}\tau}-\mathbf{I}\right]\left[\mathbb{E}_{t_0}\left[N_{t}X_{t}\right]+\tilde{\mathbf{A}}^{-1}\mathbf{e}\mu \mathbb{E}_{t_0}\left[N_{t}\right]\right].
\end{eqnarray*}
For $\mathbb{E}_{t_0}\left[N_{t}^2\right]$ it is useful to compute the infinitesimal operator for the function $f\left(X_{1,t},\ldots,X_{p,t},N_t,\right)=N_t^2$, that reads
\begin{equation*}
\mathcal{A}f_t=\mu\left(2 N_t+1\right)+2\mathbf{b}^{\top}N_t X_t + \mathbf{b}^{\top} X_t.
\end{equation*}
Applying the Dynkin's formula, we have
\begin{eqnarray*}
\mathbb{E}_{t_0}\left[N_{t}^2\right]&=&N_{t_0}^2+2\mu\int_{t_0}^{t}\mathbb{E}_{t_0}\left[N_u\right]\mbox{d}u+\mu\left(t-t_0\right)+2\mathbf{b}^{\top}\int_{t_0}^{t}\mathbb{E}_{t_0}\left[N_u X_u\right]\mbox{d}u + \mathbf{b}^{\top}\int_{t_0}^{t}\mathbb{E}_{t_0}\left[X_u\right]\mbox{d}t.
\end{eqnarray*}
Therefore
\begin{eqnarray}
\mathbb{E}_{t_0}\left[\left(N_{t+\tau}-N_{t}\right)^2\right]&=&2\mu\int_{t}^{t+\tau}\mathbb{E}_{t_0}\left[N_u\right]\mbox{d}u+\mu\tau+2\mathbf{b}^{\top}\int_{t}^{t+\tau}\mathbb{E}_{t_0}\left[N_u X_u\right]\mbox{d}u + \mathbf{b}^{\top}\int_{t}^{t+\tau}\mathbb{E}_{t_0}\left[X_u\right]\mbox{d}u\nonumber\\
&-&2\mathbb{E}_{t_0}\left[N_{t}\right]\mu\left(1-\mathbf{b}^{\top}\tilde{\mathbf{A}}^{-1}\mathbf{e}\right)\tau-2\mathbf{b}^{\top}\tilde{\mathbf{A}}^{-1}\left[e^{\tilde{\mathbf{A}}\tau}-\mathbf{I}\right]\left[\mathbb{E}_{t_0}\left[N_{t}X_{t}\right]+\tilde{\mathbf{A}}^{-1}\mathbf{e}\mu \mathbb{E}_{t_0}\left[N_{t}\right]\right]\nonumber\\
&=&2\mu\int_{t}^{t+\tau}\mathbb{E}_{t_0}\left[N_u-N_{t}\right]\mbox{d}u+\mu\tau+ \mathbf{b}^{\top}\int_{t}^{t+\tau}\mathbb{E}_{t_0}\left[X_u\right]\mbox{d}u\nonumber\\
&+&2\mathbf{b}^{\top}\int_{t}^{t+\tau}\left[\mathbb{E}_{t_0}\left[N_u X_u\right]+\tilde{\mathbf{A}}^{-1}\mathbf{e}\mu\mathbb{E}_{t_0}\left[N_{t}\right]\right]\mbox{d}u\nonumber \\
&-&2\mathbf{b}^{\top}\tilde{\mathbf{A}}^{-1}\left[e^{\tilde{\mathbf{A}}\tau}-\mathbf{I}\right]\left[\mathbb{E}_{t_0}\left[N_{t}X_{t}\right]+\tilde{\mathbf{A}}^{-1}\mathbf{e}\mu \mathbb{E}_{t_0}\left[N_{t}\right]\right].
\label{abcdef}
\end{eqnarray}
We study the asymptotic behaviour of the terms in \eqref{abcdef}.
We denote with $a_{t_0}\left(t,\tau\right):=\int_{t}^{t+\tau}\mathbb{E}_{t_0}\left[N_u-N_{t}\right]\mbox{d}u$
\begin{eqnarray*} 
a_{t_0}\left(t,\tau\right)&=& \int_{t}^{t+\tau} \mu \left(1-\mathbf{b}^{\top}\tilde{\mathbf{A}}^{-1}\mathbf{e}\right) \left(u-t\right)\mbox{d}u+ \int_{t}^{t+\tau} \mathbf{b}^{\top}\tilde{\mathbf{A}}^{-1}\left[e^{\tilde{\mathbf{A}}\left(u-t_0\right)}-e^{\tilde{\mathbf{A}}\left(t-t_0\right)}\right]\mbox{d}u\left[X_{t_0}+\tilde{\mathbf{A}}^{-1} \mathbf{e}\right]\\
&=& \mu \left(1-\mathbf{b}^{\top}\tilde{\mathbf{A}}^{-1}\mathbf{e}\right) \frac{\tau^2}{2}+  \int_{t}^{t+\tau} \mathbf{b}^{\top}\tilde{\mathbf{A}}^{-1}\left[e^{\tilde{\mathbf{A}}\left(u-t\right)}-\mathbf{I}\right]\mbox{d}u e^{\tilde{\mathbf{A}}\left(t-t_0\right)} \left[X_{t_0}+\tilde{\mathbf{A}}^{-1} \mathbf{e}\right].\\
\end{eqnarray*}
We observe that the following integral is finite
\[
\int_{t}^{t+\tau} \mathbf{b}^{\top}\tilde{\mathbf{A}}^{-1}\left[e^{\tilde{\mathbf{A}}\left(u-t\right)}-\mathbf{I}\right]\mbox{d}u<+\infty
\]
from where we deduce that
\begin{equation}
a_{\infty}\left(\tau\right):=\lim_{t\rightarrow +\infty} a_{t_0}\left(t,\tau\right)= \mu \left(1-\mathbf{b}^{\top}\tilde{\mathbf{A}}^{-1}\mathbf{e}\right) \frac{\tau^2}{2}.
\label{ainftytau}
\end{equation}
We then concentrate on the quantity $b_{t_0}\left(t,\tau\right):=\mu\tau+ \mathbf{b}^{\top}\int_{t}^{t+\tau}\mathbb{E}_{t_0}\left[X_u\right]\mbox{d}u$ that through straightforward computations can be written as
\begin{eqnarray*}
b_{t_0}\left(t,\tau\right)&=& \mu\tau+ \mathbf{b}^{\top}\int_{t}^{t+\tau}\left[e^{\tilde{\mathbf{A}}\left(u-t_0\right)}\left(X_{t_0}+\tilde{\mathbf{A}}^{-1}\mathbf{e}\mu\right)-\tilde{\mathbf{A}}^{-1}\mathbf{e}\mu\right]\mbox{d}u\\
&=&\left(1-\mathbf{b}^{\top}\tilde{\mathbf{A}}^{-1}\mathbf{e}\right)\mu\tau+\mathbf{b}^{\top}e^{\tilde{\mathbf{A}}\left(t-t_0\right)}\int_{t}^{t+\tau}e^{\tilde{\mathbf{A}}\left(u-t\right)}\left(X_{t_0}+\tilde{\mathbf{A}}^{-1}\mathbf{e}\mu\right)\mbox{d}u.
\end{eqnarray*}
Since we have a continuous integrand in a compact support
\[
\int_{t}^{t+\tau}e^{\tilde{\mathbf{A}}\left(u-t\right)}\mbox{d}u<+\infty,
\]
we have
\begin{equation}
b_{\infty}\left(\tau\right):=\lim_{t\rightarrow+\infty}b_{t_0}\left(t,\tau\right)= \left(1-\mathbf{b}^{\top}\tilde{\mathbf{A}}^{-1}\mathbf{e}\right)\mu\tau.
\label{binftytau}
\end{equation}
Denoting with $c_{t_0}\left(t,\tau\right):=\int_{t}^{t+\tau}\left[\mathbb{E}_{t_0}\left[N_u X_u\right]+\tilde{\mathbf{A}}^{-1}\mathbf{e}\mu\mathbb{E}_{t_0}\left[N_{t}\right]\right]\mbox{d}u$, we obtain
\begin{eqnarray*}
c_{t_0}\left(t,\tau\right)&=& I_{0,t_0}\left(t,\tau\right)+I_{1,t_0}\left(t,\tau\right)+I_{2,t_0}\left(t,\tau\right)+I_{3,t_0}\left(t,\tau\right)+I_{4,t_0}\left(t,\tau\right)+I_{5,t_0}\left(t,\tau\right)
\end{eqnarray*}
where $I_{0,t_0}\left(t,\tau\right):= \int_{t}^{t+\tau}e^{\tilde{\mathbf{A}}\left(u-t_0\right)}X_{t_0}N_{t_0}\mbox{d}u$ is rewritten as
\[
I_{0,t_0}\left(t,\tau\right)=e^{\tilde{\mathbf{A}(t-t_0)}\tau}\int_{t}^{t+\tau}e^{\tilde{\mathbf{A}}\left(u-t\right)}X_{t_0}N_{t_0}\mbox{d}u
\]
and using the same arguments as above, we get
\begin{equation*}
I_{0,\infty}\left(t,\tau\right):=\lim_{t\rightarrow +\infty}I_{0,t_0}\left(t,\tau\right)=\mathbf{0}.
\end{equation*}
The quantity $I_{1,t_0}\left(t,\tau\right):= \int_{t}^{t+\tau}\left(e^{\tilde{\mathbf{A}}\left(u-t_0\right)}
-\mathbf{I}\right)\tilde{\mathbf{A}}^{-1}\mathbf{e}\mu\mbox{d}u$ can be rewritten as
\[
I_{1,t_0}\left(t,\tau\right)= \int_{t}^{t+\tau}e^{\tilde{\mathbf{A}}\left(u-t_0\right)}
\tilde{\mathbf{A}}^{-1}\mathbf{e}\mu\mbox{d}u-\tilde{\mathbf{A}}^{-1}\mathbf{e}\mu\tau
\]
while taking the limit as $t \rightarrow +\infty$, we have
\begin{equation}
I_{1,\infty}\left(t,\tau\right):=\lim_{t\rightarrow +\infty}I_{1,t_0}\left(t,\tau\right)=-\tilde{\mathbf{A}}^{-1}\mathbf{e}\mu\tau.
\end{equation}
The quantity
\begin{eqnarray}
I_{2,t_0}\left(t,\tau\right)&:= &\int_{t}^{t+\tau}\int_{t_0}^{u}e^{\tilde{\mathbf{A}}\left(u-s\right)}\mathbf{e} \mathbf{b}^{\top}e^{\tilde{\mathbf{A}}\left(s-t_0\right)}\mbox{d}s\mbox{d}u\left[X_{t_0}+\tilde{\mathbf{A}}^{-1}\mathbf{e}\mu\right]\nonumber\\
&-&\int_{t}^{t+\tau}\left(e^{\tilde{\mathbf{A}}\left(u-t_0\right)}-\mathbf{I}\right)\mbox{d}u \tilde{\mathbf{A}}^{-1}\mathbf{e} \mathbf{b}^{\top}\tilde{\mathbf{A}}^{-1}\mathbf{e}\mu
\end{eqnarray}
depends on the integral $\int_{t}^{t+\tau}\int_{t_0}^{u}e^{\tilde{\mathbf{A}}\left(u-s\right)}\mathbf{e} \mathbf{b}^{\top}e^{\tilde{\mathbf{A}}\left(s-t_0\right)}\mbox{d}s\mbox{d}u$ where from the substitutions $s-t_0=h$ and $r=u-t$ we get
\begin{equation}
\int_0^{\tau} \int_0^{t+r-t_0} e^{\tilde{\mathbf{A}}\left(t-t_0+r-h\right)}\mathbf{e} \mathbf{b}^{\top}e^{\tilde{\mathbf{A}}h}\mbox{d}h\mbox{d}r.
\label{abcdefgh}
\end{equation}
Defining
\[
\ddot{\mathbf{A}}:=\left[
\begin{array}{cc}
\tilde{\mathbf{A}} & \mathbf{e}\mathbf{b}^{\top}\\
\mathbf{0}_{p,p} & \tilde{\mathbf{A}}
\end{array}
\right]
\]
and applying the result in \ref{Appendix1}, the inner integral in \eqref{abcdefgh} becomes
\begin{equation}
\left[\mathbf{I}_{p,p};\mathbf{0}_{p,p}\right]e^{\ddot{\mathbf{A}}(t-t_0+r)}\left[
\begin{array}{c}
\mathbf{0}_{p,p}\\
\mathbf{I}_{p,p}
\end{array}
\right].
\end{equation}
Thus the integral in \eqref{abcdefgh} can be computed as follows
\begin{equation}
\left[\mathbf{I}_{p,p};\mathbf{0}_{p,p}\right]e^{\ddot{\mathbf{A}}(t-t_0)}\int_0^{\tau}e^{\ddot{\mathbf{A}}r}\mbox{d}r\left[
\begin{array}{c}
\mathbf{0}_{p,p}\\
\mathbf{I}_{p,p}
\end{array}
\right].
\label{eqInt}
\end{equation}
We notice that as $\int_0^{\tau}e^{\ddot{\mathbf{A}}r}\mbox{d}r<+\infty$ and all eigenvalues of $\ddot{\mathbf{A}}$ have negative real part, then
\begin{equation*}
I_{2,\infty}\left(\tau\right):=\lim_{t\rightarrow +\infty} I_{2,t_0}\left(t,\tau\right)= \tilde{\mathbf{A}}^{-1}\mathbf{e} \mathbf{b}^{\top}\tilde{\mathbf{A}}^{-1}\mathbf{e}\mu \tau.
\end{equation*}
\item Similarly, we get the limit for the term $I_{3,t_0}\left(t,\tau\right):=\int_{t}^{t+\tau}\left[\int_{t_0}^{u}e^{\tilde{\mathbf{A}}\left(u-s\right)}\mu \mathbb{E}_{t_0}\left(N_s\right)\mathbf{e}\mbox{d}s+\tilde{\mathbf{A}}^{-1}\mathbf{e}\mu\mathbb{E}_{t_0}\left(N_u\right)\right]\mbox{d}u$  as $t\rightarrow +\infty$:
\begin{equation*}
I_{3,\infty}\left(t,\tau\right):=\lim_{t\rightarrow +\infty} I_{3,t_0}\left(t,\tau\right)=-\tilde{\mathbf{A}}^{-1}\left[\mathbf{I}\frac{\tau^2}{2}+\tilde{\mathbf{A}}^{-1}\tau\right]\mathbf{e}\mu^2\left(1-\mathbf{b}^{\top}\tilde{\mathbf{A}}^{-1}\mathbf{e}\right).
\end{equation*}
\item We define the following quantity
\begin{eqnarray*}
I_{4,t_0}\left(t,\tau\right)&:=& \left[\int_{t}^{t+\tau}e^{\tilde{\mathbf{A}}\left(u-t_0\right)}\left(u-t_0\right)\mbox{d}u\right]\left[X_{t_0}+\tilde{\mathbf{A}}^{-1}\mathbf{e}\mu\right]\mu+ \tilde{\mathbf{A}}^{-1}\tilde{\mathbf{A}}^{-1}\mathbf{e}\mu^2\tau\\
&-&\tilde{\mathbf{A}}^{-1}\int_{t}^{t+\tau}e^{\tilde{\mathbf{A}}\left(u-t_0\right)}\mbox{d}u \tilde{\mathbf{A}}^{-1}\mathbf{e}\mu^2
\end{eqnarray*}
and observe that the first integral can be rewritten as
\begin{eqnarray*}
\int_{t}^{t+\tau}e^{\tilde{\mathbf{A}}\left(u-t_0\right)}\left(u-t_0\right)\mbox{d}u&=& e^{\tilde{\mathbf{A}}\left(t-t_0\right)} \int_{t}^{t+\tau}e^{\tilde{\mathbf{A}}\left(u-t\right)}\left(u-t\right)\mbox{d}u+e^{\tilde{\mathbf{A}}\left(t-t_0\right)}\left(t-t_0\right)\int_{t}^{t+\tau}e^{\tilde{\mathbf{A}}\left(u-t\right)}\mbox{d}u\\
\end{eqnarray*}
where both terms in the \textsl{rhs} tend to zero as $t\rightarrow +\infty$ thus
\[
I_{4,\infty}\left(\tau\right)=\lim_{t\rightarrow +\infty}I_{4,t_0}\left(t,\tau\right)=\tilde{\mathbf{A}}^{-1}\tilde{\mathbf{A}}^{-1}\mathbf{e}\mu^2\tau.
\]
Similar arguments are used to determine the limit as $t \rightarrow +\infty$ for the quantity $I_{5,t_0}\left(\tau\right):=\int_{t}^{t+\tau}\int_{t_0}^{u}e^{\tilde{\mathbb{A}}\left(u-s\right)}\mathbb{E}_{t_0}\left[X_s,X_s^{\top}\right]\mathbf{b}\mbox{d}s\mbox{d}u$ as follows
\begin{equation*}
I_{5,\infty}\left(\tau\right)=\lim_{t\rightarrow+\infty}I_{5,t_0}\left(t,\tau\right)= \tilde{\mathbf{A}}^{-1}\mathbf{B}\tilde{\tilde{\mathbf{A}}}^{-1}\mu\left(\tilde{\mathbf{e}}-\tilde{\mathbf{C}}\tilde{\mathbf{A}}^{-1}\mathbf{e}\right)\tau.
\end{equation*}
Combining all results, we get the stationary behaviour for the quantity $c_{\infty}\left(\tau\right):=\lim_{t\rightarrow +\infty} c_{t_0}\left(t,\tau\right)$ that reads
\begin{eqnarray}
c_{\infty}\left(\tau\right)&=&-\tilde{\mathbf{A}}^{-1}\mathbf{e}\mu\tau\left(1-\mathbf{b}^{\top}\tilde{\mathbf{A}}^{-1}\mathbf{e}\right)-\tilde{\mathbf{A}}^{-1}\left[\mathbf{I}\frac{\tau^2}{2}+\tilde{\mathbf{A}}^{-1}\tau\right]\mathbf{e}\mu^2\left(1-\mathbf{b}^{\top}\tilde{\mathbf{A}}^{-1}\mathbf{e}\right)+\tilde{\mathbf{A}}^{-1}\tilde{\mathbf{A}}^{-1}\mathbf{e}\mu^2\tau\ \nonumber \\
&+&\tilde{\mathbf{A}}^{-1}\mathbf{B}\tilde{\tilde{\mathbf{A}}}^{-1}\mu\left(\tilde{\mathbf{e}}-\tilde{\mathbf{C}}\tilde{\mathbf{A}}^{-1}\mathbf{e}\right)\tau.
\label{cinftytau}
\end{eqnarray}
Furthermore
\begin{eqnarray*}
\lim_{t\rightarrow+\infty}\mathbb{E}_{t_0}\left[\left(N_{t+\tau}-N_{t}\right)^2\right]&=& 2\mu a_{\infty}\left(\tau\right)+ b_{\infty}\left(\tau\right)+2\mathbf{b}^{\top}c_{\infty}\left(\tau\right)-2\mathbf{b}^{\top}\tilde{\mathbf{A}}^{-1}\left[e^{\tilde{\mathbf{A}}\tau}-\mathbf{I}\right]h_{\infty}\left(0\right)\\
&=&\mu^2 \left(1-\mathbf{b}^{\top}\tilde{\mathbf{A}}^{-1}\mathbf{e}\right)^2\tau^2+\left(1-\mathbf{b}^{\top}\tilde{\mathbf{A}}^{-1}\mathbf{e}\right)\left(1-2\mathbf{b}^{\top}\tilde{\mathbf{A}}^{-1}\mathbf{e}\right)\mu\tau\\
&+&2\mathbf{b}^{\top}\tilde{\mathbf{A}}^{-1}\tilde{\mathbf{A}}^{-1}\mathbf{e}\tau\mu^2\left(\mathbf{b}^{\top}\tilde{\mathbf{A}}^{-1}\mathbf{e}\right)+2\mathbf{b}^{\top}\tilde{\mathbf{A}}^{-1}\mathbf{B}\tilde{\tilde{\mathbf{A}}}^{-1}\mu\left(\tilde{\mathbf{e}}-\tilde{\mathbf{C}}\tilde{\mathbf{A}}^{-1}\mathbf{e}\right)\tau\\
&-&2\mathbf{b}^{\top}\tilde{\mathbf{A}}^{-1}\left[e^{\tilde{\mathbf{A}}\tau}-\mathbf{I}\right]h_{\infty}\left(0\right).
\end{eqnarray*}
By  straightforward calculations, we obtain the  result in \eqref{AsymptoticVarianceJumps} for the  asymptotic  variance.
\end{proof}


\end{appendix}
\end{document}